\documentclass[acmsmall]{acmart}

\usepackage{mathrsfs}
\usepackage{bbm}
\usepackage{wrapfig}
\usepackage{lipsum}
\usepackage{footmisc}

\usepackage{diagbox}

\usepackage{algorithm}
\usepackage{algorithmicx}
\usepackage[noend]{algpseudocode}

\usepackage{bm}
\usepackage{graphicx}
\usepackage{subfigure}
\usepackage{multirow} 
\usepackage{makecell}

\usepackage[normalem]{ulem}
\usepackage{enumitem}

\theoremstyle{theorem}
\newtheorem{theorem}{Theorem}

\theoremstyle{definition}
\newtheorem{definition}{Definition}
\theoremstyle{problem}

\usepackage{color}
\usepackage{bbding}
\usepackage{array}
\newcolumntype{C}[1]{>{\footnotesize}p{#1}l}
\newcolumntype{T}[1]{>{\footnotesize}p{#1}l}

\usepackage{tikz}
\usetikzlibrary{decorations.pathreplacing,calc}

\usepackage{pifont}

\AtBeginDocument{%
	}
\AtBeginDocument{%
  }

\setcopyright{acmlicensed}
\copyrightyear{2024}
\acmYear{2024}
\acmDOI{XXXXXXX.XXXXXXX}

\acmJournal{PACMMOD}
\acmVolume{37}
\acmNumber{4}
\acmArticle{111}
\acmMonth{6}

\begin{document}

\title{Play like a Vertex: A Stackelberg Game Approach for Streaming Graph Partitioning}


\newcommand{\sharednote}{\thanks{Contributed equally to this work.}\textsuperscript{,}}
\author{Zezhong Ding}
\authornote{School of Data Science.}
\email{zezhongding@mail.ustc.edu.cn}
\orcid{0000-0002-6286-8679}
\author{Yongan Xiang}
\authornote{School of Computer Science and Technology.}
\email{ya_xiang@mail.ustc.edu.cn}
\orcid{0009-0001-0572-5479}
\author{Shangyou Wang}
\authornotemark[1]
\email{wash_you@mail.ustc.edu.cn}
\orcid{0009-0001-5738-985X}
\author{Xike Xie}
\authornotemark[1]
\authornote{School of Biomedical Engineering.}
\email{xkxie@ustc.edu.cn}
\orcid{0000-0001-5290-5408}

\author{S. Kevin Zhou}
\authornotemark[3]
\email{s.kevin.zhou@gmail.com}
\orcid{0000-0002-6881-4444}

\affiliation{%
  \institution{Data Darkness Lab, MIRACLE Center, Suzhou Institute for Advanced Research, University of Science and Technology of China}
  \city{Suzhou}
  \state{Jiangsu}
  \country{China}
}

\renewcommand{\shortauthors}{Zezhong Ding et al.}

\begin{abstract}
In the realm of distributed systems tasked with managing and processing large-scale graph-structured data, optimizing graph partitioning stands as a pivotal challenge. The primary goal is to minimize communication overhead and runtime cost. However, alongside the computational complexity associated with optimal graph partitioning, a critical factor to consider is memory overhead. Real-world graphs often reach colossal sizes, making it impractical and economically unviable to load the entire graph into memory for partitioning. This is also a fundamental premise in distributed graph processing, where accommodating a graph with non-distributed systems is unattainable. Currently, existing streaming partitioning algorithms exhibit a skew-oblivious nature, yielding satisfactory partitioning results exclusively for specific graph types.
In this paper, we propose a novel streaming partitioning algorithm, the \emph{\textbf{S}kewness-aware \textbf{V}ertex-cut \textbf{P}artitioner} (\emph{S5P}), designed to leverage the skewness characteristics of real graphs for achieving high-quality partitioning. S5P offers high partitioning quality by segregating the graph's edge set into two subsets, {\it head} and {\it tail} sets.
Following processing by a skewness-aware clustering algorithm, these two subsets subsequently undergo a Stackelberg graph game.
Our extensive evaluations conducted on substantial real-world and synthetic graphs demonstrate that, in all instances, the partitioning quality of S5P surpasses that of existing streaming partitioning algorithms, operating within the same load balance constraints.
For example, S5P can bring up to a 51\% improvement in partitioning quality compared to the top partitioner among the baselines. 
Lastly, we showcase that the implementation of S5P results in up to an $81$\% reduction in communication cost and a $130\%$ increase in runtime efficiency for distributed graph processing tasks on PowerGraph.
\end{abstract}

\begin{CCSXML}
	<ccs2012>`	
	<concept>
	<concept_id>10002951.10002952.10002953.10010146</concept_id>
	<concept_desc>Information systems~Graph-based database models</concept_desc>
	<concept_significance>500</concept_significance>
	</concept>
\end{CCSXML}
\ccsdesc[500]{Information systems~Graph-based database models}

\keywords{Graph Partitioning; Streaming Partitioning; Distributed Systems}

\received{15 October 2023}
\received[revised]{20 January 2024}
\received[accepted]{23 February 2024}

\maketitle

\section{Introduction}
\label{sec:introduction}
Graph partitioning is a crucial technique for managing large-scale graph analytics in a distributed manner. In this process, an input graph is partitioned across a cluster of machines. This enables the handling of graphs that are too large for a single machine while accelerating computation through parallelization. This trend is evident in the rise of distributed graph systems, e.g., Blogel\cite{blogel}, PowerGraph\cite{powergraph}, GraphX\cite{graphx}, PowerLyra\cite{powerlyra}, TopoX\cite{TopoX}, EASE\cite{MerkelMFJ23}, ScaleG\cite{ScaleG}, HCPD\cite{hcpd},  Pregel\cite{pregel}, AGP\cite{FanXYYZ23},  G-thinker\cite{Gthinker}, and GridGraph\cite{ZhuHC15}.
Recent advances in graph neural networks further underscore the significance of graph partitioning, e.g., BNS-GCN\cite{WanLLKL22}, AliGraph\cite{ZhuZYLZALZ19}, Betty\cite{yang2023betty}, and DistGNN\cite{distgcn}.

Using the GAS (\emph{Gather-Apply-Scatter}) model\footnote{
	Other programming models in distributed graph processing, such as BSP~\cite{pregel} and asynchronous model~\cite{graphlab}, consist of computation, communication, and synchronization phases, which
	can be viewed as different orders of API calls of GAS.
} as an example in the context of distributed graph processing, vertices in a distributed system iteratively gather information from their neighbors, apply computation to the data, and then scatter the results back to their neighbors.
At each iteration, a vertex collects messages sent by its replicas maintained in other machines (or partitions) and organizes them for synchronization.
The primary objective of graph partitioning is to minimize the number of vertices (or edges) that are cut and replicated among different partitions. This minimization helps significantly reduce the communication and synchronization overhead in distributed graph tasks, as each cut results in message transfers between different partitions. 

\begin{figure}[t]
	\centering
		{\includegraphics[width= 0.26\columnwidth]{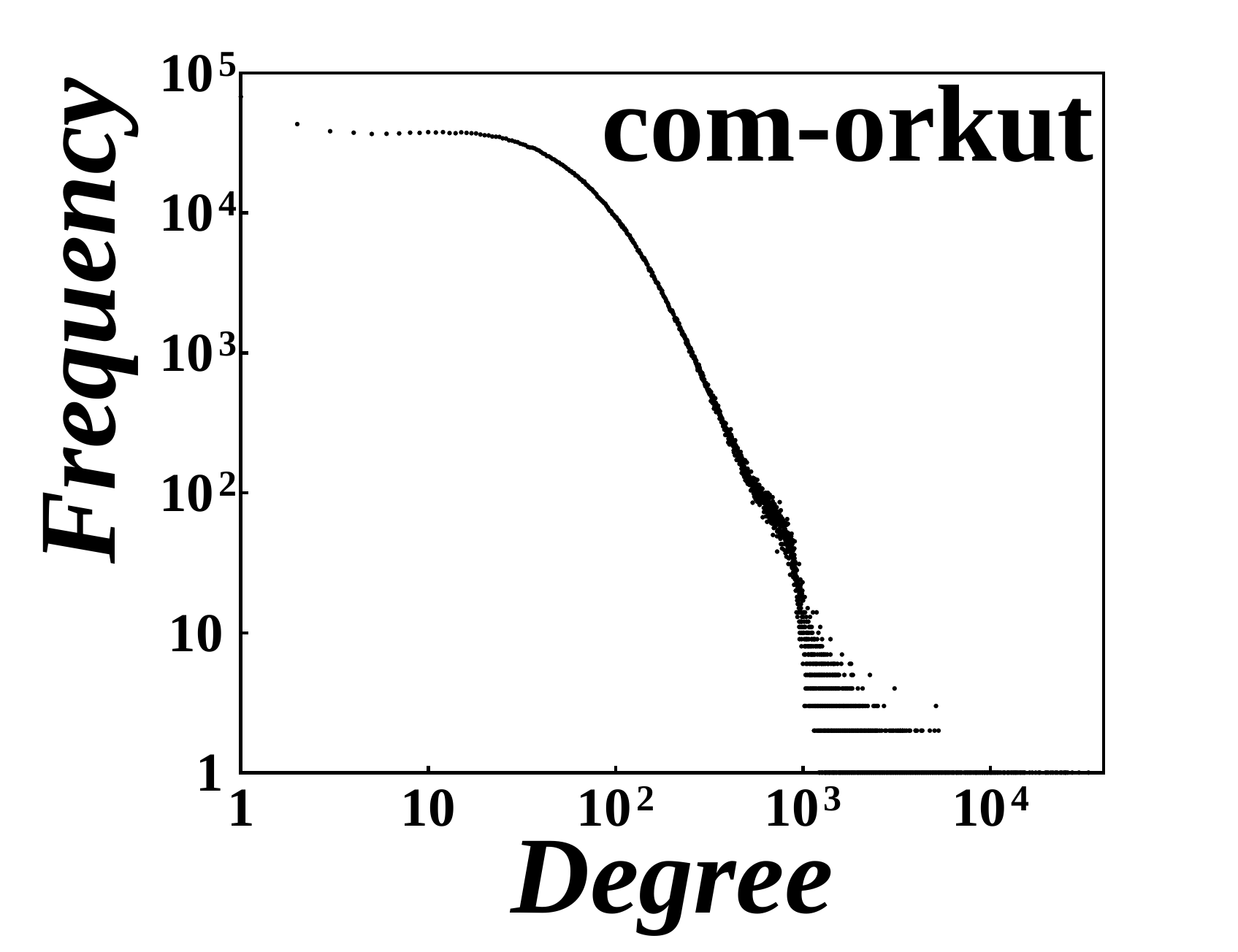}}
		{\includegraphics[width= 0.22\columnwidth]{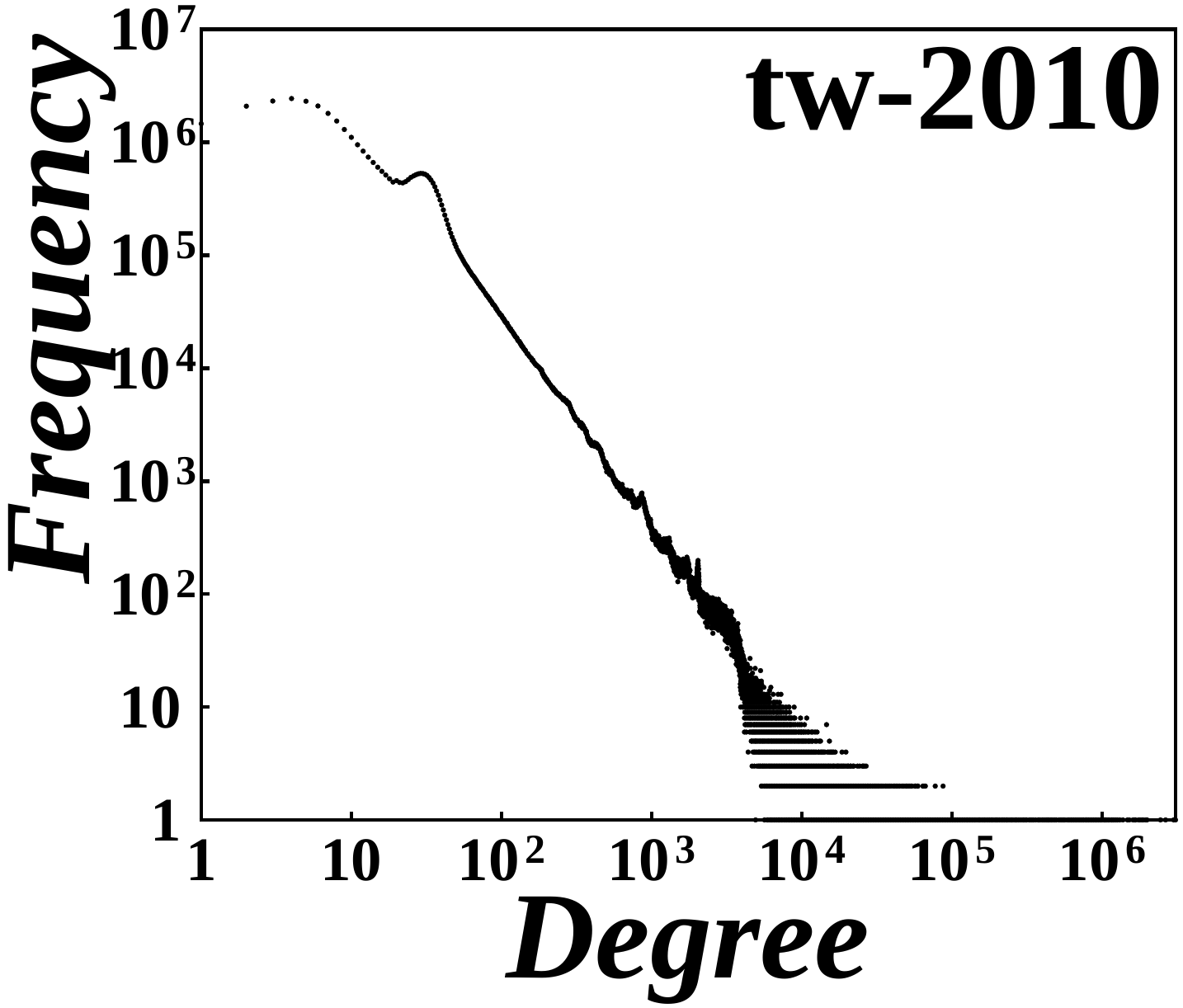}}
		{\includegraphics[width= 0.22\columnwidth]{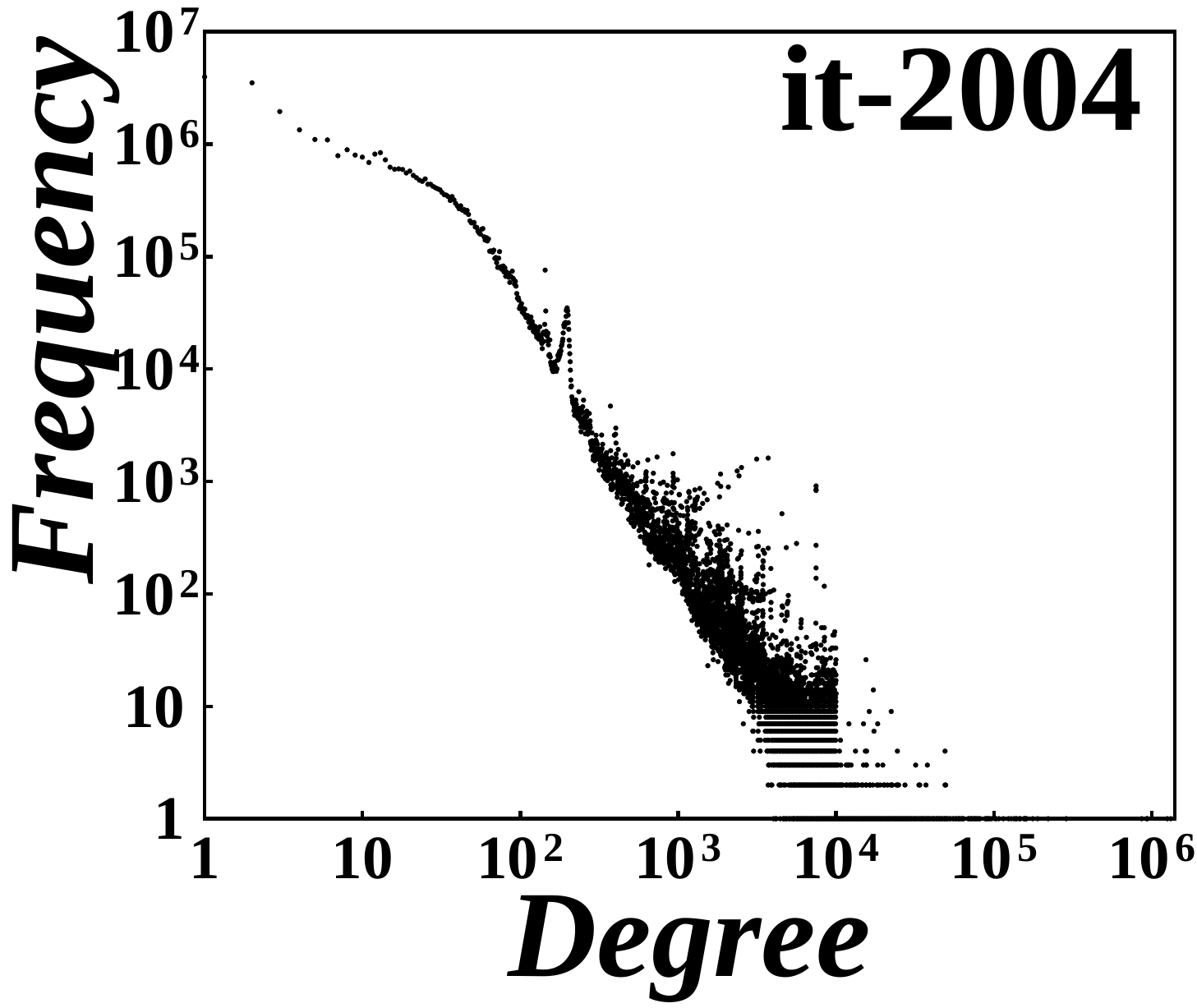}}
		{\includegraphics[width= 0.22\columnwidth]{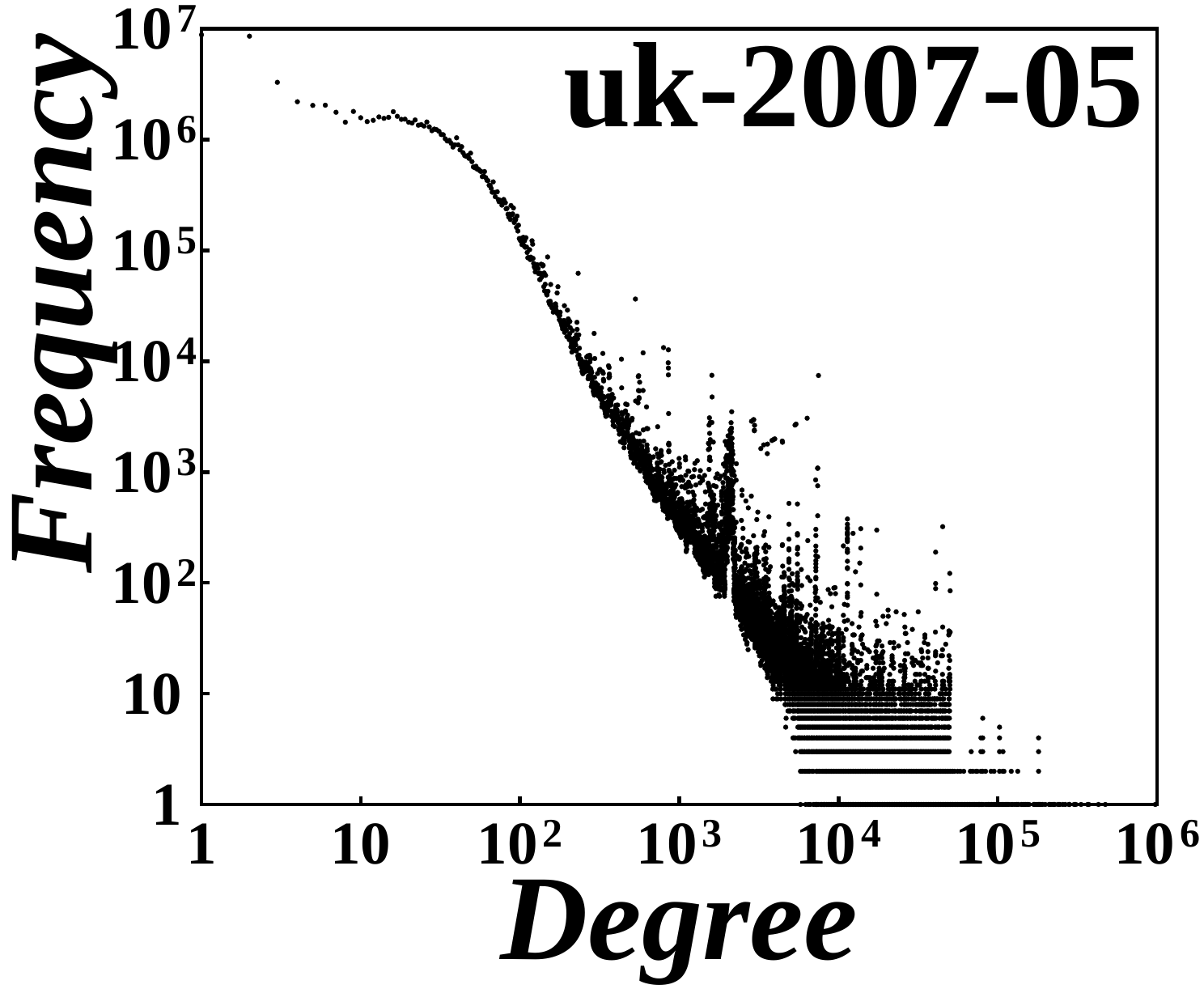}}
		\caption{ Distributions about Graph Skewness }
		\label{fig:skewdegree}
\end{figure}

The evolution of graph partitioning algorithms has unfolded across two pivotal dimensions: the transition from {\it edge-cut} to {\it vertex-cut} partitioning strategies, and the shift from {\it offline} to {\it streaming} methodologies.

 {Basically, edge-cut partitioning involves cutting edges to distribute vertices across partitions, minimizing cross-partition edges, while vertex-cut partitioning achieves the same by cutting vertices into multiple replicas.}
{Traditional edge-cut algorithms, effective in non-skewed graph scenarios, confront challenges in dealing with real-world power-law graphs \cite{powergraph, bgep, IAA, albert2000error}, where vertex degrees follow a long-tail distribution, yielding a large number of low-degree vertices in the tail and a small number of very high-degree vertices in the head \cite{tailgnn}, as depicted in Figure~\ref{fig:skewdegree}.
The skewness challenge drove the adoption of vertex-cut techniques, which excel in addressing skewed distributions by selectively replicating high-degree vertices \cite{ne}.}

Simultaneously, scaling partitioning algorithms to massive graphs prompts a shift from offline methods that preload the entire graph, to streaming methods that access and process edges in a sequence.
However, early streaming partitioning algorithms, e.g., \cite{dbh} \cite{hdrf}, despite the advantage of scalability, tend to be agnostic to skewness, treating all parts of a graph uniformly. This leads to inferior partitioning quality on skewed graphs.
Recent research trends as seen in \cite{2ps-l} and \cite{clugp} share a common improvement theme, which unconsciously addresses the underutilization of skewness through graph clustering and improves the performance.
However, this implicit addressing of skewness is only marginal and not substantial,
leaving room for further performance enhancement over skewed graphs.
In fact, a low-degree vertex in the tail has a marginal effect on the partitioning quality, whereas a high-degree vertex in the head exerts a substantial influence (cf. Figure~\ref{fig:skewdegree}). Therefore, it is crucial to strike a balance between the two sets and their mutual impact on the partitioning quality.

In this paper, we study a skewness-aware vertex-cut streaming partitioning approach, which employs a Stackelberg game for addressing the mutual effect of high- and low-degree vertices in partitioning quality.
Considering the partitioning process as a game of allocation, where players represent graph entities, e.g., vertices, edges, or subgraphs, to a predetermined set of partitions, we posit that the Stackelberg game effectively encapsulates the skewness characteristics.
In general, the Stackelberg game embodies the hierarchical interaction among players taking different roles, akin to the hierarchical structure seen in an ant colony, with its queen and worker ants.
Following a similar line of thought,
we can also assign different roles to graph entities, e.g., vertices, fostering interactions through game-theoretic dynamics.
High-degree vertices assume leadership roles much like queen ants, while low-degree vertices take on follower roles reminiscent of worker ants. Real-world graphs often exhibit a significant preponderance of low-degree vertices, analogous to the higher number of worker ants in an ant colony than queen ants. Modeling the dynamic interplay between the two groups through the lens of a Stackelberg game effectively captures the asymmetric influence of high-degree leader vertices on their low-degree follower counterparts. 

Still, in the context of streaming partitioning, efficiently extracting skewness-related information for each vertex to address graph skewness presents technical challenges that demand both low memory consumption and high-speed processing.

To address the challenge, our techniques are three-fold. First, we investigate a skewness-aware clustering algorithm, seamlessly integrated with the Stackelberg game to constitute a {\it clustering-refinement} framework for graph partitioning, aligning with the cutting-edge framework of stream partitioners. Second, we use the count-min sketch for summarizing graph information to construct compact strategy sets for different players with high time and space efficiency yet without compromising the quality.
Third, we exploit parallelization for accelerating the game process, enabling the concurrent operations among graph entities participated in the partitioning process, akin to a colony of ants working towards a common goal.

Our contributions can be listed as follows.

\begin{itemize}
\item We have spearheaded the examination of the profound connection between skewness and the graph partitioning quality, based on which we propose the first Stackelberg game-based approach to tackle the challenges of streaming graph partitioning.
\item
Taking graph skewness into consideration, we first propose skewness-aware graph clustering. Oriented at vertex-cut partitioning, this approach brings in over 16$\times$ space efficiency and over 8$\times$ time efficiency (Section~\ref{sec:ca}).
Furthermore, it offers excellent versatility and can be seamlessly integrated into other partitioning algorithms, conforming to the clustering-refinement framework.

\item
  We are the first to introduce the Stackelberg game from the perspective of graph skewness into the solution of vertex-cut partitioning problem. We offer relevant theoretical analysis, specifically in the context of graph partitioning.
  Additionally, we have incorporated sketching and parallelization techniques to streamline and optimize the time and space complexities associated with the game process.
\item Extensive experiments are conducted on $11$ real and $6$ synthetic graphs, varying in scales and types. The experimental results show that S5P outperforms state-of-the-art solutions in terms of quality, especially for social and web graphs.
      Moreover, the deployment of partitioning algorithms to real distributed graph systems, e.g., Powergraph, showcases an enhancement in executing graph tasks, e.g., PageRank, with up to an $81$\% reduction in communication cost and $130\%$ increase in runtime efficiency.

  \end{itemize}

The subsequent sections of this paper are structured as follows.
In Section \ref{sec:preliminaries}, we commence by formalizing and delving into the vertex-cut partitioning problem, and present preliminaries of the two-stage Stackelberg game  and graph skewness.
Section \ref{sec:relatedwork} engages in a discussion of related works.
Section \ref{sec:method} is dedicated to introducing the S5P framework, encompassing the innovative streaming skewness-aware graph clustering algorithm, the two-stage Stackelberg-based partitioning algorithm, and the optimization strategy.
In Section \ref{sec:theory}, we present a comprehensive theoretical analysis.
Section \ref{sec:experiment} evaluates the performance of S5P against {8} robust baselines under {11} real-world graphs and {6 synthetic graphs}.
Section~\ref{sec:conclusion} concludes the paper.

\section{preliminaries}
\label{sec:preliminaries}
\subsection{Vertex-Cut Streaming Partitioning}
\subsubsection*{Vertex-Cut Graph Partitioning}

Give a directed or undirected graph $G=(V, E)$, where $V$ is the set of vertices and $E \subseteq V \times V$ is the set of edges, and given a number of $k$ partitions $P=\{p_i\}_{i \leq k}$, the vertex-cut partitioning algorithm exclusively assigns each edge $e_i \in G$ to a partition $p_i \in P$, satisfying $\cup_{i \leq k}p_i=E$ and $p_i \cap p_j = \emptyset (i \neq j)$.

\subsubsection*{Vertex-Cut Streaming Partitioning} The stream edge model $S_G = \{e_1, e_2, ..., e_{|E|}\}$ assumes the edges of an input graph $G=(V, E)$ arrive in a sequence.
Then, the vertex-cut (edge) streaming partitioning performs single or multiple passes over the graph stream and makes partitioning decisions.

\subsubsection*{Quality of Vertex-cut Partitioning}
The primary goal of the partitioning algorithm is to optimize the efficiency of upper-level distributed graph systems.
So, the \emph{replication factor} (\emph{RF}) serves as the standard metric for evaluating partitioning quality, as shown in Equation~(\ref{eqn:rf}).

\begin{equation}
\label{eqn:rf}
\small
  \text{RF} = \underbrace{\frac{\sum_{v\in V}{|P(v)|}}{|V|}}_{\textit{Vertex Replication Form}} = \underbrace{\frac{\sum_{d_{i}}{g(d_{i})f(d_{i})}}{|V|}}_{\textit{ Degree Distribution Form}}
\end{equation}
, where $P(v)$ represents the set of partitions holding vertex $v$, and $|P(v)|$ signifies the number of partitions holding $v$, a.k.a., the number of replications of $v$.

We can also rewrite the equation of replication factor from the {\it vertex replication form} into the {\it degree distribution form}, as shown in Equation~(\ref{eqn:rf}), where $g(d_{i}) = \sum_{d(v) = d_{i}}|P(v)|$ represents the average replication of all the vertices with degree $d_i$, and $f(d_{i})$ represents the frequency of degree $d_{i}$.
 Thus, RF can be viewed as the result of integrating $g(d_i)$ across various degrees.
So, the development of partitioning strategies that take into account the skewness in distributions holds the promise of effectively reducing RFs.

With Equation~(\ref{eqn:rf}), the vertex-cut partitioning problem can be formalized as an optimization problem~\cite{powergraph}, which is known to be NP-hard~\cite{nphard, FeigeHL08}, as follows.
\begin{equation}
	minimize \ \text{RF} \ \ s.t. \ \frac{k \max_{i=1,...k}|P_i|}{|E|} \leq \tau
	\label{eq:youhua}
\end{equation}
Here, $\tau$ is the threshold of imbalance.

\subsection{Stackelberg Game}
The \emph{Stackelberg game} can be categorized into single-stage and multi-stage versions. In a single-stage scenario, leaders and followers make decisions simultaneously and adopt a unique optimization function. In contrast, the multi-stage Stackelberg game unfolds over multiple stages. The leaders and followers are divided into different stages to make decisions.
In this paper, we adopt the simplest variant of the multi-stage Stackelberg game, known as the two-stage Stackelberg game, to maximally reduce the time overhead incurred in-between multiple subsequent stages.
In a two-stage Stackelberg game, all leaders make the initial move (Stage 1), and then all followers choose an action after being informed of the leaders' choice (Stage 2). 
The cost function of the leaders can be defined as $\Omega(\theta, \lambda)$, where $\theta$ is one of the leader's strategies and $\lambda$ is one of the follower's strategies. Also, the cost function of the follower can be defined as $\Phi(\theta, \lambda)$.
In this paper, we consider a two-stage Stackelberg game.

\subsubsection*{Disincentive Strategy Modeling} Determining the optimal disincentive strategy $\langle \theta^*, \lambda^* \rangle$ is modeled as a two-stage Stackelberg game. Each of the players tries to minimize its own cost by determining an optimal parameter in the disincentive strategy, which can be manipulated by itself (called its optimal strategy hereafter, for simplicity), satisfying:
\begin{equation}
	\begin{aligned}
	\textbf{Stage 1 [Leaders' Side]: } \theta^* = argmin_\theta\Omega(\theta, \lambda) \\
	\textbf{Stage 2 [followers' Side]: } \lambda^* = argmin_\lambda\Phi(\theta, \lambda)
	\end{aligned}
\end{equation}

In the above game, the objective is to find an optimal disincentive strategy $\langle \theta^*, \lambda^* \rangle$, by which each participant can minimize its own cost. Meanwhile, the optimal solution must satisfy the \emph{Stackelberg Equilibrium} (\textit{SE}), so that no one is willing to adopt other strategies. The \textit{SE} is defined as follows:

\subsubsection*{Stackelberg Equilibrium, SE} An optimal disincentive strategy $\langle \theta^*, \lambda^* \rangle$ constitutes a  \textit{SE}  \textit{iff} the following set of inequalities is satisfied:
\begin{equation}
	\begin{aligned}
		\Omega(\theta^*, \lambda^*) \leq \Omega(\theta, \lambda^*) \\
		\Phi(\theta^*, \lambda^*) \leq \Phi(\theta^*, \lambda)
	\end{aligned}
\end{equation}

\subsection{Graph Skewness}
{
\label{subsec:skew}

The significance of graph skewness has long been recognized in many fields, referring to the measurement of uneven distributions of graph entities (e.g., vertices or edges) or statistics (e.g., structural characteristics), each attributed to varying definitions.
\subsubsection*{Regression-based Graph Skewness~\cite{newman2005power}}

Given a specific vertex degree $d$, the number of vertices follows a power-law distribution, $f(d) \propto d^{-\rho}$. Most natural graphs typically have a power-law constant $2 < \rho < 3$ \cite{choromanski2013scale, durrett2007random}. A smaller $\rho(G)$ indicates a more skewed graph. Since regression-based skewness incurs low precision due to the inability of logarithmic transformation to handle zero values, two other metrics are more often used~\cite{MerkelMFJ23}.

\subsubsection*{Pearson's Graph Skewness~\cite{pearson}}  
Pearson's first skewness of a graph $G$ is
$\rho_1(G) = \frac{mean(d) - mode(d)}{\sigma}$, with $d$ being the degree of vertices in $G$ and $\sigma$ being the standard deviation of the degree.
Pearson's second skewness of $G$ can be obtained by replacing the calculation of mode with that of the median,
$\rho_2(G) = \frac{3(mean(d) - median(d))}{\sigma}$.
Both
$\rho_1(G)$ and $\rho_2(G)$
are based on degree perspectives, with bigger values indicating more skewed graphs.
\subsubsection*{Planarization Graph Skewness~\cite{cimikowski1992graph}}  
The skewness of a graph $G$ can also be defined from the topology perspective, as the minimum number of edges, whose removal results in a planar graph. Since such determination is NP-hard, 
an alternative indicator
$\rho_3(G)=|E|-(3|V|-6)$
 is proposed. It implies that
the graph is more skewed as the increase of $\rho_3(G)$.
}

\section{related work}
\label{sec:relatedwork}

\subsubsection*{Edge- vs. Vertex-cut Partitioning}

There are two major graph partitioning strategies, namely {\it edge-} and {\it vertex-cut} partitioning, based on whether edges or vertices are cut.
The vertex-cut partitioning strategy evenly distributes graph edges across distributed nodes while minimizing vertex replications.
Empirical and theoretical evidence, as shown in \cite{powergraph, bgep, IAA, albert2000error}, consistently highlights the superior effectiveness of the vertex-cut partitioning over the edge-cut counterpart, especially when dealing with large graphs.
This superiority is particularly pronounced in real-world graphs characterized by a skewed distribution~\cite{tailgnn}, where vertex degrees follow a power-law pattern, yielding a large number of low-degree vertices and a small number of very high-degree vertices, as depicted in Figure~\ref{fig:skewdegree}.

\subsubsection*{Stream vs. Offline Vertex-cut Partitioning}
Despite advances in the field of vertex-cut partitioning over the past decade, the challenge of effective vertex-cut partitioning remains unsolved.
Existing approaches to address the vertex-cut partitioning problem can be broadly classified into two categories: \emph{streaming} methods \cite{clugp,hdrf,2ps-l,adwise,cusp,dbh,powergraph, gcnsplit, adwise, TaimouriS19, hcpd} and \emph{offline} methods \cite{metis,ne,dne, hep}. The offline algorithms necessitate the preloading of the partial or entire graph into the main memory before partitioning, while \emph{streaming} algorithms process the graph incrementally as a continuous stream of edges. This allows streaming methods to operate with big graphs under lower memory constraints.

\subsubsection*{Stream Partitioning }
Streaming partitioning is deemed practical for managing large-scale graphs. Methods like DBH~\cite{dbh} and Grid~\cite{grid}, which solely rely on hash-based partitioning, often exhibit poor partitioning quality.
Greedy~\cite{powergraph}, ADWISE~\cite{adwise}, and HDRF~\cite{hdrf} improve the partitioning quality by designing different scoring functions. However, their algorithmic time complexity is dependent on the number of partitions, thus limiting the system scalability.
To improve the performance, cutting-edge streaming partitioning algorithms follow a common clustering-refinement framework~\cite{2ps-l}~\cite{clugp}.
In particular, 2PS-L~\cite{2ps-l} adopts a two-stage graph traversal strategy, whereas the first phase refers to vertex clustering, i.e., Holl~\cite{holl}, and the second phase utilizes existing heuristic methods, i.e., HDRF, for graph partitioning. 
Concurrently, CLUGP~\cite{clugp} improves the performance by revising clustering algorithm, i.e., Holl~\cite{holl}, specifically for optimizing the replication factor, and accelerates the refinement process by employing a post-clustering game.
However, CLUGP and S5P differ significantly, because 1) game processes differ theoretically, leading to divergent downstream techniques. CLUGP is with static game theories, while S5P is with sequential game theories; 2) targets differ. CLUGP targets web graphs but struggles with skewed graphs. In contrast, S5P recognizes the skewness in general graphs.

\subsubsection*{Other Game-based Partitioning}
There exist edge-cut partitioning algorithms that employ a Nash game.
MDSGP~\cite{mdsgp} takes $O(r(min(2^mkm,kmG_{max})))$ time, where $m$ is the size of the strategy space, which is combinatorially high, and $r$ is the number of windows. The game function of RMGP~\cite{rmgp} is
derived based on graphical distances and semantic similarities, which incurs significant time cost ($O(|V|^3)$) and space cost ($O(|V|^2)$).
What is more, RMGP's optimization relies on heavy offline graph coloring, which takes $O(|V|^2)$ time and $O(|E|)$ space. Its game process takes $O(|V|+|E|)$ time for each iteration, and the number of iterations is not bounded.
Besides, the game has been studied in other research problems, such as \textit{vertex separator}, which aims to find a set of vertices that split a graph
into equally-sized components. CVSP~\cite{c+cv} proposes a Stackelbeg game-based method to solve this problem.
There exist significant distinctions between CVSP and S5P. First, their problem settings differ. S5P is on stream partitioning, requiring access to edges in a sequence, a prerequisite not met by CVSP. This dissimilarity gives rise to markedly different downstream techniques and strategies. Second, the application of game theories diverges entirely. In CVSP, the leader extracts a subgraph, and the follower identifies the maximum connected component, whereas, in S5P, leaders and followers are represented by head and tail clusters. Theoretically, in CVSP, each game iteration requires $O(|E|)$ time and $O(|V|^2)$, and the number of iterations is unbounded. In contrast, S5P's total time and space complexity is $O(|E|)$ and $O(|V|)$ (cf. Section~\ref{sec:theory}). Empirically, S5P shows an efficiency advantage of up to 2 orders of magnitude over CVSP (Table~\ref{tab:gameresult}).
So, it is technically challenging to apply game theories to graph partitioning.

\subsubsection*{Streaming Graph Clustering}
Streaming graph clustering is used to efficiently and effectively analyze large-scale dynamic graphs \cite{graphsummary, tcm, GSS}.
In the field of graph partitioning, representative graph clustering algorithms include \emph{Holl}~\cite{holl}, \emph{CLUGP-Clustering}\cite{clugp}, and \emph{2PS-L-Clustering}~\cite{2ps-l}. These algorithms can be generalized into an \emph{allocation-migration} framework for graph clustering.
For the migration step, the Holl algorithm calculates cluster volume based on local vertex degrees. The 2PS-L-Clustering precomputes the global degrees for the alternative of local degrees to improve the clustering performance.
Differently, the CLUGP-Clustering algorithm adopts local degrees to save the precomputation cost, while enhancing the clustering performance by introducing a \emph{splitting} operation in the migration step.
The characteristics of various streaming clustering algorithms are shown in Table \ref{tab:clusteringNEW}.
\begin{table}[ht] 
	\centering
	\caption{Streaming Graph Clustering Algorithms}
	\begin{tabular}{c|ccc}
		\toprule
		Algorithm &  Allocation & Migration & Skewness-aware \\
		\midrule
		Holl  &  $\textbf{\Checkmark}$ & local & $\text{\sffamily \textbf{X}}$ \\
		CLUGP-Clustering  & $\textbf{\Checkmark}$ & local & $\text{\sffamily \textbf{X}}$ \\
		2PS-L-Clustering   & $\textbf{\Checkmark}$ & global & $\text{\sffamily \textbf{X}}$ \\
		S5P-Clustering   & $\textbf{\Checkmark}$ & local/global & $\textbf{\Checkmark}$ \\
		\bottomrule
	\end{tabular}
	\label{tab:clusteringNEW}
\end{table}

\section{Skewness-aware Vertex-cut Partitioner}
\label{sec:method}
 \subsection{Approach Overview}
\begin{figure}[ht!]
	\centering
	\includegraphics[width=\columnwidth]{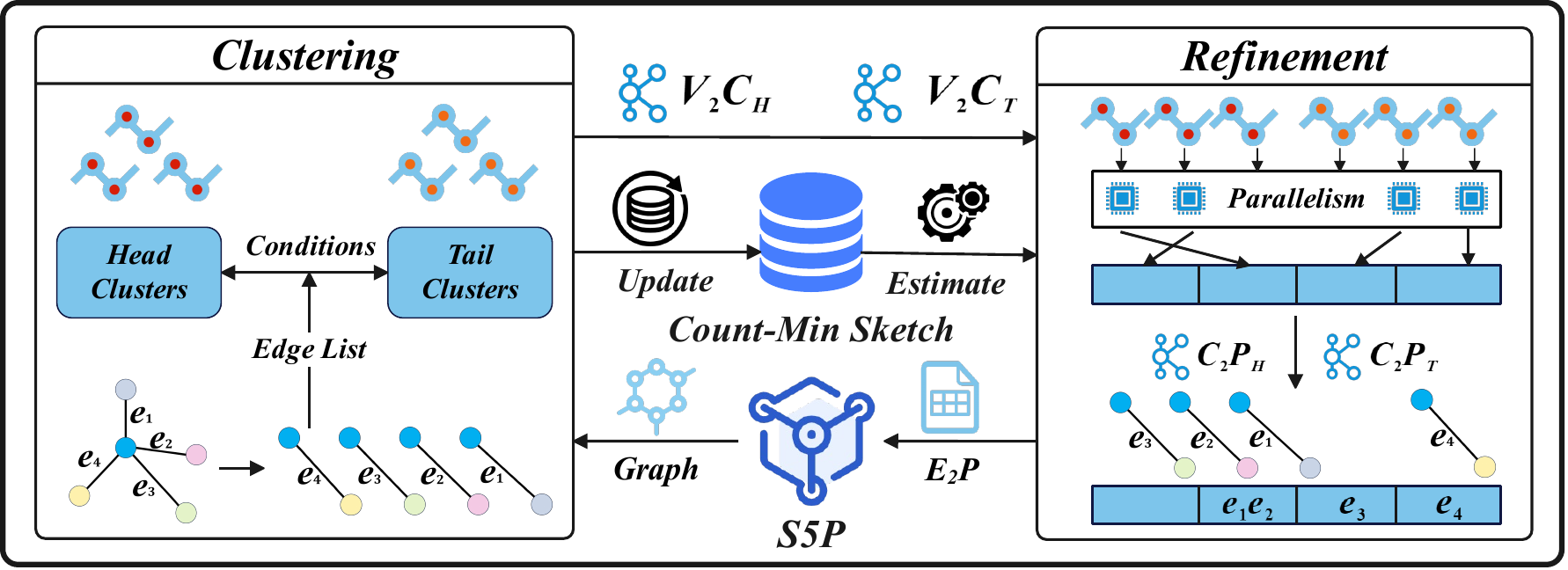}
	\caption{Skewness-aware Vertex-cut Partitioner Framework}
	\label{fig:framework}
\end{figure}

The architecture of our proposed skewness-aware vertex-cut partitioner
follows {\it clustering-refinement} framework \cite{2ps-l,clugp}, where streaming graph edges are processed in a biphasic manner, as shown in Figure \ref{fig:framework}.
First, we study skewness-aware graph clustering to generate fine-grained head and tail clusters (Section 3.2).
Second, we delve into a Stackelberg game to allocate head and tail clusters to a predefined set of partitions (Section 3.3).
This allocation minimizes the number of vertices crossing partitions while ensuring an evenly distributed storage load.
At last, we present a series of optimization techniques to enhance the time and space efficiency in meeting the requirements of stream partitioning (Section 3.4).

\subsection{Skewness-aware Streaming Graph Clustering}

\begin{definition}[\bf Head and Tail Vertices/Edges/Clusters]
 Given a vertex $v \in V$ of a graph $G=(V, E)$, if the degree of $v$ ($d(v)$) is higher than a predefined threshold\footnote{\label{ft:para}
In our implementation, we adopt consistent parameter settings for all datasets, with $\xi$ and $\kappa$ following~\cite{holl,clugp,2ps-l}.
In particular, we set $\xi = \beta\frac{2|E|}{|V|}$, which is the product of coefficient $\beta$ and the average degree of the graph. Here, $\beta$ equals $1$, which is a constant (as discussed in Section~\ref{sec:ps}) and $|E|$ and $|V|$ are fixed for a given dataset.
We set $\kappa = \frac{2|E|}{k}$, where $k$ is the number of partitions, and $|E|$ is fixed for a given dataset.
} $\xi$, i.e., $d(v)>\xi$, then $v$ is a head vertex; otherwise, it is a tail vertex. Accordingly, an edge $e(v_i, v_j)$ is a head edge, if $v_i$ and $v_j$ are both head vertices; otherwise, $e$ is a tail edge.

	\label{defGHCM}
\end{definition}

It is worth highlighting that tail vertices exclusively appear within tail edges, whereas some head vertices may be associated with two distinct edges: one as a head edge and another as a tail edge. This occurs because a head vertex can be present in both head and tail edges, according to Definition \ref{defGHCM}.  
We showcase an example in Figure~\ref{fig:vis1}, where head vertices and edges are in blue, and tail vertices and edges are in orange.
The number of an edge represents the order of its arrival in a stream.

\begin{figure}[ht!]
	
	\centering
	\includegraphics[width=0.6\columnwidth]{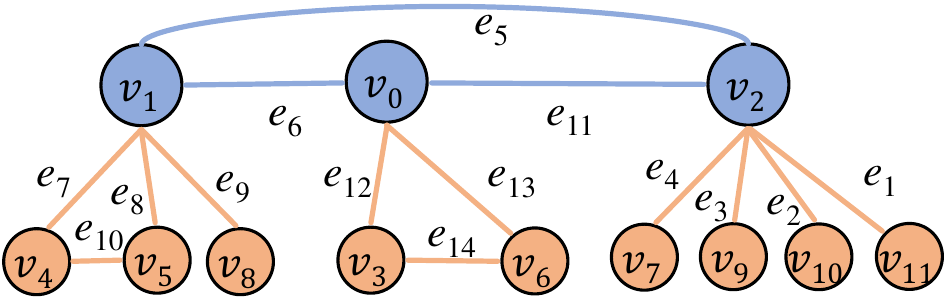}
	\caption{A Toy Graph with $12$ Vertices and $14$ Edges}
	\label{fig:vis1}
\end{figure}

Based on the definition of head/tail vertices/edges, we can define head/tail clusters.
For a cluster produced by head edges/vertices, the cluster is a head cluster named $C_H$; otherwise, it is a tail cluster named $C_T$. Next, we formalize the problem of streaming graph clustering as follows.

\begin{definition}[\textbf{Streaming Graph Clustering}]
Consider a graph of streaming edges, denoted as $G_S = \{e_1, e_2, \cdots, e_{|E|}\}$, along with a maximum cluster volume represented by $\kappa$\footref{ft:para}
the objective of streaming graph clustering is to assign each vertex $v$ to head or tail clusters, such that the edge-cutting is minimized.
The clustering output can be presented as a table that maps a vertex to multiple specific (head/tail) clusters, i.e., $\{\langle v_i, c_j\rangle\}$.

\end{definition}

\begin{algorithm}[h!]
	\caption{Skewness-aware Graph Clustering}
	\label{alg:kagc}
	\begin{flushleft}
		\hspace*{\algorithmicindent} \textbf{Input} Streaming Graph Edge Set $E_S$, Cluster constrains $\kappa$\\
		\hspace*{\algorithmicindent} \textbf{Output} $V_2C_H$, $V_2C_T$
	\end{flushleft}
	\begin{algorithmic}[1]
		\For{$e(u, v) \in E_S$}
		\If{$e \in E_H$}
		\If{$V_2C_H[u] \ or \ V_2C_H[v] \ is \ NULL$}	
		\state Assign a new ID
		\EndIf
		\State Update $vol(\cdot)$ by $d(u)$ and $d(v)$
		\If{ $vol(V_2C_H[u])\ and \ vol(V_2C_H[v]) \ < \ \kappa$}
		\State $i \leftarrow \mathop{\arg\min}_{z \in \{u, v\}} (vol(V_2C_H[z]) - d(z))$
		\State $j \leftarrow z \in \{u, v\} : z \neq i$
		\If{ $vol(V_2C_H[j]) + d(i) < \ \kappa$}
		\State $vol(V_2C_H[j]) \leftarrow vol(V_2C_H[j]) + d(i)$
		\State $vol(V_2C_H[i]) \leftarrow vol(V_2C_H[i]) - d(i)$
		\State $V_2C_H[i] \leftarrow V_2C_H[j]$
		\EndIf
		\EndIf
		\EndIf
		\If{$e \in E_T$}
		\If{$V_2C_T[u] \ or \ V_2C_T[v] \ is \ NULL$}	
		\state Assign a new ID
		\EndIf
		\State Update $vol(\cdot)$ by $1$
		\State Update $ld(\cdot)$ by $1$
		\If{ $vol(V_2C_T[u])\ and \ vol(V_2C_T[v]) \ < \ \kappa$}
		\State $i \leftarrow \mathop{\arg\min}_{z \in \{u, v\}} (vol(V_2C_T[z]))$
		\State $j \leftarrow z \in \{u, v\} : z \neq i$
		\State $vol(V_2C_T[j]) \leftarrow vol(V_2C_T[j]) + ld(i)$
		\State $vol(V_2C_T[i]) \leftarrow vol(V_2C_T[i]) - ld(i)$
		\State $V_2C_T[i] \leftarrow V_2C_T[j]$
		\EndIf
		\EndIf
		\EndFor
	\end{algorithmic}
\end{algorithm}

Clustering is widely acknowledged as a method for data summarization. In the context of graphs, clustering functions as a method for graph summarization\cite{graphsummary}.  {Graph clustering can be classified into two categories: {\it vertex-} and {\it edge-centric clustering}. Vertex-centric clustering offers the advantage of being more space- and time-efficient ($O(|V|)$ vs. $O(|E|)$), since the number of vertices is significantly smaller than the number of edges~\cite{clugp}. Conversely, edge-centric clustering preserves more connection information but at the expense of high space cost, i.e., $O(|E|)$.} To strike a balance between the two, we represent edge-centric clusters in the form of vertex clusters, employing two tables, denoted as $V_2C_H$ (Vertex-to-Head-Cluster) and $V_2C_T$ (Vertex-to-Tail-Cluster).

Algorithm \ref{alg:kagc} processes graph edges in a one-pass manner. For each upcoming edge, we first determine its type (head or tail) and subsequently execute an allocation operation in the corresponding Vertex-to-Cluster table ($V_2C$).
This involves assigning a unique ID to vertices without cluster labels. In our experiments, we employed natural numbers as IDs, commencing from $0$ and incrementing by $1$ for each new allocation.
Following this, we update the corresponding cluster size, denoted as \emph{vol} in Algorithm~\ref{alg:kagc}.
In the case of a head edge, a ``splitting'' operation is performed independently for both clusters associated with the two vertices of the edge, provided their sizes exceed the specified cluster capacity $\kappa$. Regarding the ``Update $vol(\cdot)$'' operation, for tail clusters, when a vertex is added to a cluster , we increase the volume of the cluster by $1$. This process is equivalent to updating based on local degree ($ld$) information.
For head clusters, when a vertex joins a cluster, we increase the volume of the cluster by the degree of the vertex, reflecting an update that incorporates global degree information.

For head edges, we execute a global degree-aware operation (lines $5-11$). In contrast, for tail edges, we employ a local degree-aware operation (lines $13-21$). This distinction arises from the fact that the vertices in head edges possess higher degrees, making it necessary to use a global degree approach to accurately represent their degrees. Conversely, the nodes in tail edges have relatively smaller degrees, and employing local degree measures is adequate for reflecting their true degrees.

 {Taking the toy graph in Figure~\ref{fig:vis1} as an example ($k$ = 3), we can compute $\kappa = \frac{2 \times 14}{3} \approx 9.3$. When considering the initial head edge $e_5(v_1, v_2)$, which arrives first, both vertices $v_1$ and $v_2$ are assigned to newly created head clusters, $c_1$ and $c_2$ (cf. line 3). Following this, we proceed to update the volumes of $c_1$ and $c_2$ to $5$ and $6$, respectively (cf line 4). After that, due to the combined volume of $c_1$ and $d(v_2)$ exceeding $\kappa$ (11 > 9.3), we abstain from executing migration operations (lines 8-11). Similarly, when the subsequent edge $e_6(v_0, v_1)$ arrives, $v_0$ is assigned to a new head cluster and migrated to $c_1$.
For tail edges, we apply a comparable operation, with the primary distinction being the substitution of the global degree with a real-time updated local degree.

Once the clustering process is completed, we can obtain the clustering results, as shown in Figure~\ref{fig:vis2}. Here, $c_1$ and $c_2$ are head clusters, while $c_4$, $c_5$, and $c_6$ are tail clusters.
From the outcome presented, we can observe that skewness-aware clustering effectively captures information from both head and tail nodes, resulting in the formulation of different cluster types, including head and tail clusters, as well as leaders and followers.
This underscores a capacity that eludes the clustering methods described in Section~\ref{sec:relatedwork}..
}
\begin{figure}[ht!]
	
	\centering
	\includegraphics[width=0.6\columnwidth]{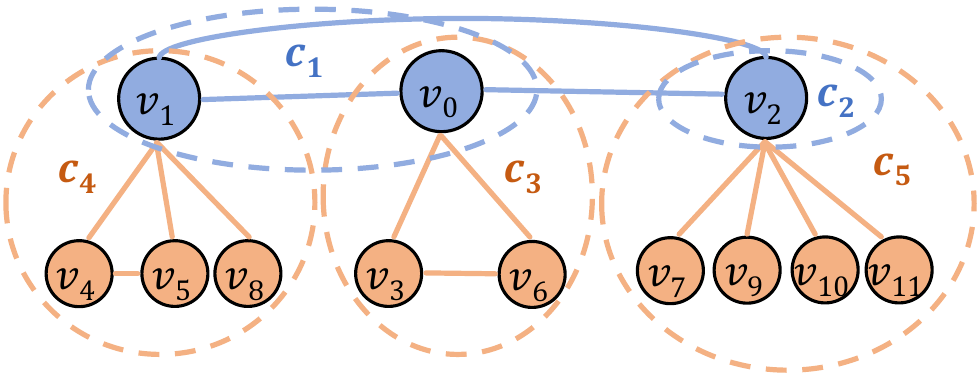}
	\caption{\small {Skewness-aware Streaming Graph Clustering ($k$=3)} }
	\label{fig:vis2}
\end{figure}

\subsection{Stackelberg Game-based Partitioning}

Using the clusters generated by Algorithm~\ref{alg:kagc} as input, we can transform the graph partitioning problem into cluster-to-partition assignment process. In the process, each cluster is allocated to one of the predefined partitions. Once clusters are assigned, the edges within each cluster can be mapped to their respective partitions.

In the cluster-to-partition assignment process, every (head/tail) cluster takes on the role of a player. These players engage in both competitive and collaborative actions concurrently, all geared toward minimizing the replication factor during the partition assignment process. To portray this dynamic, each player is equipped with a distinct optimization objective encapsulated within their cost function. In the context of a Stackelberg game involving both head and tail clusters, these cost functions stand as the measure of their objectives.
The process continues until the attainment of the \emph{Nash equilibrium} within the Stackelberg game, signifying that all clusters have fine-tuned their cost functions, and no player can improve the cost by unilaterally changing their strategy.

In the sequel, we investigate the player modeling, cost functions, Stackelberg game Nash equilibrium, as well as the best response dynamics and postprocessing.

\subsubsection*{Clusters as Players}
As aforementioned, each head or tail cluster is taken as a player of a two-stage Stackelberg game. The decision not to directly operate at the level of individual graph edges is driven by two primary factors.
{First, the time complexity of using edges is $O(|E|^2)$ and that of using clusters is only $O(|C|^2)$ (cf. Section~\ref{sec:timeana}), and $|E|$ is much greater than $|C|$ (cf.  Section~\ref{sec:timeana}), so conducting the game at the granularity of edges would be expensive and using clusters as players can avoid computational bottlenecks, i.e., from $O(|E|^2)$ to $O(|C|^2)$.}
Second, clustering offers a means to summarize the graphical structural information and connectivity patterns through cluster sizes and intersections\cite{graphsummary}. With this streamlined representation, modeling interactions between head and tail clusters via a Stackelberg game becomes a practical undertaking. The distinct leader-follower roles align with the varying influence of head and tail edges. The cluster-level representatives are generated through our skewness-aware clustering method. It can preserve skewness properties because the graph clustering is regarded as a summarization of graph information\cite{graphsummary}.

\subsubsection*{Cost Function}
In a \emph{Stackelberg game}, cost functions are vital mathematical constructs, which quantify the objectives and modeling the actions of all players, i.e., head and tail clusters, by assigning numerical values to their strategies based on utility or cost.
These functions are instrumental in modeling the decision-making process within this hierarchical framework, where head clusters' initial actions influence subsequent tail clusters' responses.
Essentially, cost functions guide players in optimizing their strategies, considering both their objectives and the expected reactions of others, thus shaping the equilibrium dynamics of a Stackelberg game.
Hereby, we define a new social welfare function as follows.

\begin{equation}
	\label{eq:software}
	S(\Lambda) =  \underbrace{\delta\frac{\sum_{i=1}^k{|p_i|^2}}{k}}_{load \ balance \ part}  + \underbrace{\frac{\sum_{i=1}^k{\Theta(p_i, V)}}{k}}_{communication \ part}
\end{equation}
where $\delta$ is the normalization factor and $\Theta(p_i, V - p_i)$ is the total number of edges in $|E|$ that span across partition $p_i$ and all other partitions.
Thus, $\Theta(p_i, V) = \Theta(p_i, V- p_i) + |p_i|$. Subsequently, we establish the unique cost function for every player (cluster) using the social welfare metric $S(\Lambda)$ as the basis.

For each cluster, $c_i \in C_H$ or $C_T$, the cost associated with the selection of partition $p_i$ for the cluster is expressed as follows.
\begin{equation}
	S_{c_i}(p_i) = \frac{\delta}{k} |c_i| \cdot |p_i| + \frac{F(c_i) + |c_i|}{k} \text{~~, where}
\label{eq:clustercost}
\end{equation}
\begin{equation}
\label{eqn:theta}
F(c_i) = \sum_{c_j \in C_H \cup C_L}\Theta(c_i, c_j)\mathbb{I}(i, j)
\end{equation}
Here, $\Theta(c_i, c_j)$ is the total number of edges in $|E|$ that span across cluster $c_i$ and $c_j$. $\mathbb{I}(i, j)$ serves as an indicator function that yields $1$ if $P(c_i) \neq P(c_j)$, and $0$ otherwise. According to Theorem \ref{theorem:social}, it holds that the social welfare $S(\Lambda)$ corresponds to the summation of all individual cost incurred by clusters.

\subsubsection*{Stackelberg Game Nash Equilibrium}
Given established cost functions, the \emph{Stackelberg game Nash equilibrium} occurs when all head and tail clusters simultaneously achieve their optimization objectives while accounting for the strategic choices made by the other clusters, in relevance to graph partitioning.

A strategy decision profile, denoted as $\Lambda^* =  \Lambda^*_H + \Phi^*_T$, constitutes a Nash equilibrium, where $\Lambda^*_H$ and $\Lambda^*_T$ represent the set of strategies employed by head and tail clusters, respectively.
In this equilibrium state, when both head and tail clusters achieve their respective local optimization targets, there exists no cluster with a motivation to unilaterally deviate from its strategy to obtain a lower cost. 
{
Leaders in the Stackelberg game have the advantage of the first move \cite{stackelberggameeq}, making the equilibrium of game in favor of optimization towards leaders, which is consistent with the strategies in skewness-aware graph partitioning.
For example, as seen in Figure~ \ref{fig:degree}, S5P has a significant advantage over CLUGP in terms of average RF for vertices with high degrees, which is based on simultaneous-move game theories. We empirically examine the performance of Stackelberg game in comparison to other game variants in Table~\ref{tab:gameresult}.

}
\begin{algorithm}[h]
	\caption{Stackelberg Game}
	\label{alg:psgg}
	\small
	\begin{flushleft}
		\hspace*{\algorithmicindent} \textbf{Input} $V_2C_H$, $V_2C_T$ \\
		\hspace*{\algorithmicindent} \textbf{Output} $C_2P_H$, $C_2P_T$
	\end{flushleft}
	\begin{algorithmic}[1]
		\State \Call{Initialization}{$V_2C_H$, $V_2C_T$}
		\Repeat
		\State\Call {LeaderDesisionProcess}{}
		\State\Call {FollowerDesisionProcess}{}
		\Until{No clusters change their current strategies}
		
		\Procedure{Leader/FollowerDesisionProcess}{}
		\For{each high/tail cluster $c_i \in C_H/C_T$}
		\State Find $C_i$'s best response $\theta$ which is based on Eq.(\ref{eq:clustercost})
		\If{$\theta \neq \theta_{c_i}$}
		\State$\theta_{c_i} \leftarrow \theta$
		\EndIf
		\EndFor
		\EndProcedure
	\end{algorithmic}
\end{algorithm}
\begin{algorithm}[h]
	\caption{Postprocessing}
	\label{alg:ppe}
	\small
	\begin{flushleft}
		\hspace*{\algorithmicindent} \textbf{Input} $C_2P_H$, $C_2P_T$, $L = \frac{\tau|E|}{k}$ \\
		\hspace*{\algorithmicindent} \textbf{Output} Edge Partition Results
		
	\end{flushleft}
	\begin{algorithmic}[1]
		\For{$e(u, v) \in E$}
		\If{$e \in E_H$}
		\State $Clu_u = V_2C_H(u), Clu_v = V_2C_H(v)$
		\EndIf
		\If{$e \in E_L$}
		\State $Clu_u = V_2C_T(u), Clu_v = V_2C_T(v)$
		\EndIf
		\State $P_u = C_2P(Clu_u), P_v = C_2P(Clu_v)$
		\If {$Load(P_u) > L$ and $Load(P_v) > L$}
		\State Place $e$ in a partition with available space.
		\ElsIf{$Load(P_u) > Load(P_v)$} 
		\state \ \ Place $e$ into $P_v$
		\Else
		\state \ \ Place $e$ into $P_u$
		\EndIf
		\State update $Load$
		\EndFor
	\end{algorithmic}
\end{algorithm}
\subsubsection*{Best Response Dynamics}
The Nash equilibrium of a game can be either pure Nash equilibrium or mixed Nash equilibrium.
{{The decision form corresponding to the mixed strategy Nash equilibrium is expressed in probabilities. As a result, the game is with randomness and poses challenges in convergence~\cite{nash_2005}.} Therefore, in this work, we consider pure Nash equilibrium, which offers more intuitive and stable solutions.
A pure Nash equilibrium in a Stackelberg game can be achieved through the application of \emph{Best Response Dynamics}, as discussed in~\cite{cusp}.
The core of the Best Response Dynamics algorithm lies in the design of the {\it response} function, as shown in Algorithm~\ref{alg:psgg} (line 8). The design of the response function needs to satisfy two requirements. First, it should optimize the global software function (Equation~\ref{eq:software}) through local best response (Equation ~\ref{eq:clustercost}), guaranteed by Theorem~\ref{theorem:social}. Second, the computation cost of the response function should not be very high, as ensured by the optimized sketch-based methods discussed in Section \ref{sec:sketch}.
\begin{figure}[ht!]
	\centering
	\includegraphics[width=\columnwidth]{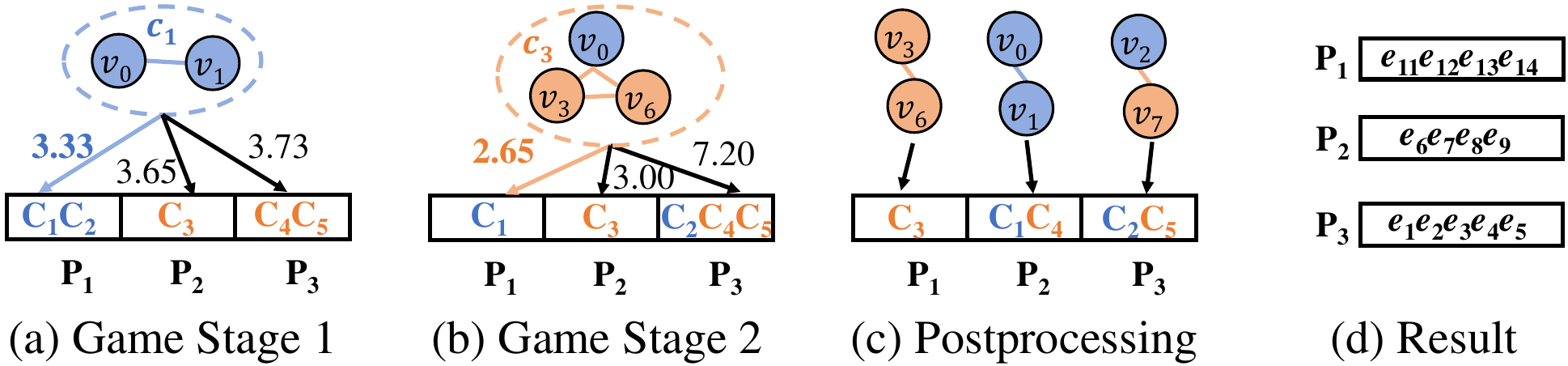}
	\caption{\small {Stackelberg Game-based 
Partitioning ($k$=$3$)}}
	\label{fig:vis3}
\end{figure}

 {{Figures~\ref{fig:vis3} (a) and (b) depict the operation of the Stackelberg game. Assuming the initial partitioning as $c_1$ and $c_2$ in $p_1$, $c_3$ in $p_2$, and $c_4$ and $c_5$ in $p_3$, we first calculate
$\delta \approx 0.65$ based on Equation (\ref{sigmamax}).

In Stage $1$, head clusters, starting with $c_1$, compute their partition cost (cf. Equation~\ref{eq:clustercost}) and migrate to the partition with the minimum cost. In this case, $c_1$ stays in $p_1$ and $c_2$ is mitigated to $p_3$. Then, tail clusters initiate their decision making process in Stage $2$. For example, $c_3$ calculates its cost and moves to the appropriate partition. Stage 2 terminates once all tail clusters have been processed, making the end of this game round.
Another round of the game commences, and continues until Nash equilibrium is reached, where decisions of all clusters remain unchanged (Figure~\ref{fig:vis3}(c)).

}}

\subsubsection*{Postprocessing}
 {Following the Stackelberg game phase, we can get a
 cluster-level balancing because of the load balance part of Equation~(\ref{eq:software}). Then, the transitioning from a broader cluster-level balancing to a more refined edge-level one is expected.}
Leveraging the previously obtained graph information, we can make efficient use of existing clusters during the final allocation of edges. Similarly, in the allocation process outlined in Algorithm \ref{alg:ppe}, we adopt a strategy that takes into account the skewness of the distribution.
For a given edge $e(u, v)$, we query the current partitions to which vertices $u$ and $v$ belong to. If both vertices belong to partitions whose sizes have already exceeded the capacity threshold $L$, we execute the operation described in line 8 of the algorithm.
For head edges, we traverse from the first partition to the last, adding these edges to the first partition that satisfies the condition $|p_i| < L$.
Conversely, for tail edges, we perform the same operation in reverse order. The objective of this operation is to minimize the number of partitions an edge can be assigned to, thereby reducing the replication factor.
If each of them belongs to partitions whose sizes are already less than $L$, add this edge to the partition with a larger size.
Upon completing the postprocessing phase, all edges have been successfully assigned to partitions, and no partition contains more than $\lceil\tau\frac{|E|}{k} \rceil$ edges\footnote{In our implementation, we set $\tau$ to $1.0$ to meet the requirement of load balancing. }. {In Figure~\ref{fig:vis3}(c), for $e_6(v_0,v_1)$, since the head clusters corresponding to $v_0$ and $v_1$ are both in partition $p_2$, the edge is added to $p_2$. Similarly, $e_{14}(v_3,v_6)$ is added to $p_1$, and $e_4(v_2,v_7)$ is added to $p_3$. After processing, we obtain the edge partitioning results, as shown in Figure~\ref{fig:vis3} (d).}

\subsection{Optimization Strategy}
\subsubsection*{Sketch}
\label{sec:sketch}
As a player of the game, a partition necessitates the calculation the cost function for strategy tuning, which requires the calculation of the number of inter-cluster edges (i.e., $\Theta(c_i, c_j)$ in Equation~\ref{eqn:theta}).
The most straightforward approach would be to design a red-black tree data structure for mapping incident vertices of an edge to its count, which is incrementally updated during the edge traversal.
However, such a design would incur significant cost in terms of both query and insertion operations. To address this issue, we adopt the \emph{Count-Min Sketch}~\cite{cmsketch} (\emph{CMS}) method to achieve the same functionality. It uses multiple hash functions and maintains a two-dimensional array of $w$ columns and $r$ rows.
One of the key features of CMS is its efficient utilization of space in summarizing large-scale data, making it particularly suitable for applications where memory usage is of utmost concern. As a trade-off, it provides approximate results with guaranteed errors stemming from hash collisions.
The parameters $w$ and $d$ are chosen by setting $w = \lceil\frac{\hat{e}}{\epsilon}\rceil$ and $d = \lceil ln\frac{1}{\nu}\rceil$, where the error in answering a query is within an additive factor of $\epsilon$ with probability $1 - \nu$ and $\hat{e}$ is Euler's number\footnote{{In our implementation, we set $\epsilon$ to 0.1 and $\nu$ to 0.01 and we can get $w = 27$ and $d=4.6$.}}.
In our specific problem context, we set the elements to be hashed as the concatenation of two cluster ID strings. The concatenated string can be posted to a CMS during the edge traversal, and the corresponding count can be approximately retrieved for the calculation of Equation~\ref{eqn:theta}. {If we do not use a CMS, we would need to allocate space of size $|C|^2$ to store the intersection sizes of different clusters. By using CMS, we only need a two-dimensional array of size $w \times d$ for insertion and querying. The accuracy of the queries is guaranteed by the principles of CMS~\cite{cmsketch}. For instance, with our configuration, we only need $27 \times 4.6$ units of space to ensure that the probability of the query result differing from the true result by no more than 10\% is greater than 99\%. We can reduce $\epsilon$ and $\nu$ to make the probability of errors in the CMS smaller; however, this will result in an increase in the space of the two-dimensional array (i.e., larger values of w and d), leading to higher memory usage.}

\subsubsection*{Parallelism}
\label{sec:para}
The primary advantage of the game theory in graph partitioning is its ability to enable parallel acceleration~\cite{clugp, cusp}.
In S5P, parallelization is facilitated through clustering, which preserves the graph's locality, ensuring that clusters with closely aligned cluster IDs are often adjacent in the graph structure. This feature makes our Stackelberg game method well-suited for the parallelization implementation.
On the other hand, given that clustering algorithms effectively extract graph information and reduce the scale of game players, the number of the game players is still significant.
Besides, we employ a batch-wise parallel approach to accelerate the gaming process. 
The parallelism between different batches is managed through a thread pool, with each batch running independently.
To enhance the convergence speed and the quality of the game, we can also impose restrictions on the number of rounds in the game.
 {There also exists potential enhancement of partitioning efficiency through more sophisticated parallelization techniques.
We assert that a significant contribution of the Stackelberg game lies in its facilitation of parallelization. It is essential to highlight that the current implementation's efficiency suffices for enhancing downstream distributed graph processing performance. The partitioning runtime overhead constitutes only a small fraction of the overall distributed graph processing. For example, in Figure~\ref{fig:powergraph} (c), the runtime cost of partitioning on S5P is merely 19.7\% of that of the overall processing. Further optimization efforts in parallelization are orthogonal to the scope of this paper.}

\section{theoretical analysis}
\label{sec:theory}
\subsection{Time Cost Analysis}
\label{sec:timeana}
We analyze each phase of S5P separately. Phase 1, speciﬁed in Algorithm \ref{alg:kagc}, performs a ﬁxed number of passes through the edge set. In each pass, a constant number of operations is performed on each edge. Hence, the time complexity of the ﬁrst phase is in $O(|E|)$. Phase 2, the time cost for each round of Stackeberg graph game is $O(|C|^2 + k|C|)$ and the round complexity can be bound by $O(k|V|)$. With parallelization, the average time complexity can be approximated as $O(|C_{avg}|^{2} \frac{|C|}{batchsize \times t})$, where $|C_{avg}|$ is the average number of clusters in each batch. In summary, the second phase of S5P has a time complexity of $O(|E|)$, as $|E| \gg |V|\gg |C|$.
Phase 3, edges are partitioned with a constant number of operations resulting in $O(|E|)$ time complexity. The total time complexity of S5P is, hence, in $O(|E|)$, i.e., linear in the number of edges.

\subsection{Memory Cost Analysis}
In our analysis of the S5P algorithm's data structures, we employ various arrays as illustrated in Algorithm \ref{alg:kagc}. These arrays are utilized to store information such as vertex local and global degrees, cluster volumes, vertex-to-cluster mappings, and cluster details. It's worth noting that each of these arrays has a space complexity of $O(|V|)$. In Algorithm \ref{alg:psgg}, we introduce additional arrays for mapping clusters to partitions and maintaining cluster-related information. Similar to the previous case, these arrays also exhibit a space complexity of $O(|V|)$. In Algorithm \ref{alg:ppe}, we employ a vertex-to-partition replication matrix, which, despite its seemingly larger size, boasts a space complexity of $O(k|V|)$. Notably, this space complexity remains unaffected by the number of edges present in the graph.

\subsection{Metric Analysis }
\begin{theorem}
	The relative load balance $\tau$ can be bounded by $\frac{kL}{|E|}$.  If we want to ensure that the upper limit for relative load balance is $t$, we can set $maxLoad$ to $\frac{t|E|}{k}$.
\end{theorem}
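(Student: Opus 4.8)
The statement to prove is essentially a bookkeeping fact about the postprocessing step (Algorithm~\ref{alg:ppe}), where the partition capacity bound $L = maxLoad$ controls the final imbalance.

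\medskip

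\textbf{Plan.} The plan is to trace through Algorithm~\ref{alg:ppe} and observe that no partition is ever allowed to exceed $L$ edges by more than a single edge's worth, so that $\max_i |P_i| \le L$ (or at most $L$ rounded up by one unit). Recall the imbalance is defined in Equation~(\ref{eq:youhua}) as $\tau = \frac{k\max_i |P_i|}{|E|}$. First I would argue that at every iteration of the main loop, an edge $e(u,v)$ is only placed into a partition $P$ with $Load(P) < L$: in the branch where both candidate partitions $P_u, P_v$ are already at or above $L$ (line~8), the edge is redirected to ``a partition with available space,'' i.e. one whose load is still below $L$; in the remaining branches (lines~10--14) the edge goes to $P_u$ or $P_v$, whichever is not the larger, and the guard of line~8 having failed means at least one of them satisfies $Load(\cdot) < L$ — in fact the smaller-loaded one does. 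Hence whenever we add an edge to a partition, that partition had load $\le L-1$ beforehand, so afterwards its load is $\le L$.

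\medskip

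\textbf{Key steps, in order.} (1) State the loop invariant: after processing any prefix of the edge stream, every partition $P_i$ satisfies $|P_i| \le L$ (using integrality of edge counts and $Load(\cdot) < L$ before each insertion). (2) Verify the invariant is preserved in each of the three cases of the conditional, with the crucial observation that case~8 presupposes a partition with spare capacity exists — which it does as long as $kL \ge |E|$, since the total number of edges placed so far is less than $|E|$ and they occupy at most $k$ partitions each of capacity $L$. (3) Conclude $\max_i |P_i| \le L$ once all $|E|$ edges are placed. (4) Substitute into the definition of $\tau$: $\tau = \frac{k \max_i |P_i|}{|E|} \le \frac{kL}{|E|}$. (5) For the converse/design direction, set $L = maxLoad = \frac{t|E|}{k}$ and read off $\tau \le \frac{k}{|E|}\cdot\frac{t|E|}{k} = t$, giving the desired tunable bound.

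\medskip

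\textbf{Main obstacle.} The one genuinely delicate point is justifying that line~8's ``place $e$ in a partition with available space'' never fails — i.e. that such a partition always exists when both $P_u$ and $P_v$ are saturated. This needs the global counting argument that fewer than $|E|$ edges have been distributed among $k$ partitions, so if all $k$ were full we would need $kL \ge |E|$ edges placed, contradicting that we are mid-stream (strictly fewer than $|E|$ placed) unless $kL$ is not much larger than $|E|$; more carefully, one shows that as long as $kL \ge |E|$ the total capacity suffices to absorb all edges, so the algorithm never gets stuck, and the edge currently being placed can always find room. A secondary subtlety is the off-by-one from integrality ($Load < L$ versus $Load \le L$), which I would handle by noting $L$ may need a ceiling $\lceil t|E|/k\rceil$ as the paper itself footnotes; this does not affect the asymptotic bound and only shifts $\tau$ by a negligible $O(k/|E|)$ term.
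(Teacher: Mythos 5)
The paper states this theorem without any accompanying proof, so there is nothing to compare against line by line; your argument is the natural one and it is essentially correct. Once you establish the invariant that Algorithm~\ref{alg:ppe} never lets any partition exceed the capacity $L$ (up to one edge), the first claim is immediate from the definition of relative load balance in Equation~(\ref{eq:youhua}), and the second claim is the substitution $L=\tfrac{t|E|}{k}$. You correctly isolate the two genuine subtleties: (i) the feasibility of line~8 (``place $e$ in a partition with available space''), which your counting argument handles --- strictly fewer than $|E|$ edges have been placed, so if every partition had load at least $L$ the total would be at least $kL\ge|E|$, a contradiction, and the condition $kL\ge|E|$ holds for any $t\ge 1$; and (ii) the off-by-one coming from the algorithm's guard being $Load>L$ rather than $Load\ge L$, which means a partition can in fact reach $L+1$ edges, so the clean bound is $\max_i|P_i|\le\lceil L\rceil$ (or $L+1$), consistent with the paper's own footnote that no partition holds more than $\lceil\tau|E|/k\rceil$ edges. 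You flag this and correctly note it only perturbs $\tau$ by $O(k/|E|)$. The only cosmetic inaccuracy is that your opening paragraph asserts the precondition $Load(P)<L$ before each insertion, which does not match the algorithm's strict-inequality guard; since you repair this at the end, the proof as a whole stands.
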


\subsubsection*{Bounds on Replication Factor}
To the best of our knowledge, most of the streaming vertex-cut algorithms do not have bounds. {In the line of research on stream partitioning, RF bounds are commonly employed to quantify the gap with optimal solutions~\cite{hdrf, dbh}.} The Grid \cite{grid} and HDRF \cite{hdrf} have a well-defined bound, and the DBH algorithm \cite{dbh} only bounds the average RF value. The typical approach for bounding vertex-cut partitioning algorithms is to limit the maximum number of partitions each edge can be assigned to. However, such constraints often lead to improved Load Balance. Striking a trade-off between RF and load balance is a challenging task. Based on this, we modified S5P to create \emph{S5P w/ bound} (\emph{S5P-B}), which is a bounded version of the streaming vertex-cut algorithm with an RF bound.

S5P-B has two key modifications compared to S5P. Firstly, it entirely relies on global degrees in place of all local degrees in its processing. Secondly,
We have removed the cluster constraints $\kappa$ Algorithm \ref{alg:kagc} and $maxLoad$ in Algorithm \ref{alg:ppe} to better formalize the RF bound of our skewness-aware design.

\begin{theorem}

	Algorithm S5P-B achieves a replication factor,
when applied to partition a graph with $|E|$ edges on $k$ partitions, that can be bounded by:
\begin{align}
	\label{eq:rfbound}
    \text{RF} \leq \underbrace{\chi_H \cdot k}_{head \ part} + \underbrace{\sum_{i = 1}^{\chi_T|V|}\frac{ d_m((\frac{k - 1}{d_m})^{1-\rho} + \frac{i-1}{|V|})^{-1}}{\chi_T\cdot |V|} + 1}_{tail \ part}
\end{align}
	where $\chi_T$ (or $\chi_H$) represents the fraction of low-degree (or high-degree) vertices and $d_m$ is the global minimum vertex degree.
\end{theorem}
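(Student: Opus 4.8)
The plan is to expand the replication factor in its vertex-replication form, $\text{RF} = \frac{1}{|V|}\sum_{v\in V}|P(v)|$, and split the vertex set along the head/tail classification that \emph{S5P-B} uses, writing $V = V_H \cup V_T$ with $|V_H| = \chi_H|V|$ and $|V_T| = \chi_T|V|$. This decomposes $\text{RF}$ into a head term $\frac{1}{|V|}\sum_{v\in V_H}|P(v)|$ and a tail term $\frac{1}{|V|}\sum_{v\in V_T}|P(v)|$, which are bounded separately and which correspond to the two summands in the statement. For the head term I would use only the trivial observation that any vertex is replicated across at most $k$ partitions, so $|P(v)|\leq k$; summing over the $\chi_H|V|$ head vertices and dividing by $|V|$ yields exactly $\chi_H\cdot k$. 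No finer structure is claimed here, which is consistent with \emph{S5P-B} having discarded the $\kappa$ and $maxLoad$ constraints precisely so that the head part can be charged this crude (but cheap, since head vertices are rare) bound.

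The tail term is where the argument does real work, and it proceeds in two moves. First, a per-vertex bound better than $k$: because the postprocessing step (Algorithm~\ref{alg:ppe}) assigns every edge to a single partition, a vertex is replicated at most once per incident edge, so $|P(v)| \leq 1 + (d(v)-1)$, i.e.\ one guaranteed replica plus at most $d(v)-1$ extra ones, and the excess is bounded by $d(v)$. Second, a degree bound driven by the power-law assumption $f(d)\propto d^{-\rho}$ with $2<\rho<3$: normalizing the degree distribution on $[d_m,\infty)$ gives that the number of vertices of degree at least $d$ equals $|V|(d/d_m)^{1-\rho}$. Listing the tail vertices in order of non-increasing degree as $v_1,\dots,v_{\chi_T|V|}$, the vertices of degree at least $k-1$ (which play a leader-like role) number $|V|\big(\tfrac{k-1}{d_m}\big)^{1-\rho}$, so $v_i$ occupies overall rank about $|V|\big(\tfrac{k-1}{d_m}\big)^{1-\rho} + (i-1)$; inverting the rank--degree relation gives $d(v_i) \leq d_m\big(\big(\tfrac{k-1}{d_m}\big)^{1-\rho} + \tfrac{i-1}{|V|}\big)^{1/(1-\rho)}$. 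Since the argument of this power lies in $(0,1)$ and $1/(1-\rho) > -1$ for $\rho\in(2,3)$, we may weaken the exponent from $1/(1-\rho)$ to $-1$, obtaining $d(v_i) \leq d_m\big(\big(\tfrac{k-1}{d_m}\big)^{1-\rho} + \tfrac{i-1}{|V|}\big)^{-1}$. Plugging this into $|P(v_i)| \leq 1 + d(v_i)$, averaging over the $\chi_T|V|$ tail vertices (using $\tfrac{1}{|V|} \leq \tfrac{1}{\chi_T|V|}$ to pass from the RF normalization to a per-tail-vertex normalization), and keeping the ``$+1$'' for the one guaranteed replica, yields the tail summand. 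Adding the head and tail parts gives the claimed inequality.

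The step I expect to be the main obstacle is making the rank--degree inversion rigorous rather than heuristic: the power law constrains the degree distribution, and turning it into a pointwise upper bound on the degree of the $i$-th tail vertex requires a careful treatment of the continuous approximation, of where the head/tail boundary is taken (in particular, identifying the effective head fraction with $\big(\tfrac{k-1}{d_m}\big)^{1-\rho}$ rather than the operational threshold $\xi$), and of the monotonicity relaxation of the exponent, which needs the argument to stay in $(0,1)$ over the whole index range $1\leq i\leq \chi_T|V|$. A secondary item to verify is that $|P(v)|\leq 1 + (d(v)-1)$ really holds for the output of \emph{S5P-B}'s clustering--game--postprocessing pipeline; this is immediate from the fact that postprocessing places every edge in exactly one partition, so the set of partitions holding $v$ has size at most the number of edges incident to $v$.
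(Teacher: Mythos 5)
Your proposal follows essentially the same route as the paper: it splits RF by the vertex-replication form into head and tail contributions, charges each head vertex the trivial bound $|P(v)|\leq k$ to get $\chi_H\cdot k$, bounds each tail vertex's replication by $1+d(v_i)$, and controls $d(v_i)$ via the power-law rank--degree inversion with the exponent relaxed to $-1$ (the paper obtains the same degree bound by citing and rescaling the HDRF-style tail-degree bound, and attributes the ``$+1$'' to the single replica produced by allocation on tail-to-head edges rather than to a generic guaranteed replica, but these are cosmetic differences). Your write-up actually makes explicit two steps the paper leaves implicit --- the $\tfrac{1}{|V|}\leq\tfrac{1}{\chi_T|V|}$ renormalization and the monotonicity of $x^a$ in $a$ for $x\in(0,1)$ --- so it is, if anything, more complete than the original.
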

\begin{proof}

	{
	The RF bound comprises two components: the head part and the tail part. The head part accounts for the fraction $\chi_H$ of high-degree vertices in the graph.
In the worst case, head vertices are replicated across all partitions, resulting in the head part of RF as $\chi_H\cdot k$.
Here, $\chi_T$ (or $\chi_H$) can be expressed as $\frac{\sum_{j=1}^{\xi}j^{-\rho}}{|V|}$ (or $1- \frac{\sum_{j=1}^{\xi}j^{-\rho}}{|V|}$), because the vertex degrees in the graph follow a power-law distribution (Section~\ref{subsec:skew}).

The tail part considers the contribution to the RF from tail vertices.
Firstly, only \emph{allocation} in Algorithm \ref{alg:kagc} (lines 3 and 13) replicates a tail vertex connecting head vertices and produces at most 1 replica. Secondly, the replication of a tail vertex connecting to tail vertices can be bound by its degree. Rescaling based on the combination of the tail vertex degree bound from \cite{cohen, hdrf} and the range of $\rho$ in real-world graphs in \cite{choromanski2013scale, durrett2007random}, the degree of a tail vertex connecting tail vertices can thus be bounded by $d_m((\frac{k - 1}{d_m})^{1-\rho} + \frac{i-1}{|V|})^{-1}$. So, the tail part RF can be bounded by $\sum_{i = 1}^{\chi_T|V|}\frac{1 + d_m((\frac{k - 1}{d_m})^{1-\rho} + \frac{i-1}{|V|})^{-1}}{\chi_T|V|} = \sum_{i = 1}^{\chi_T|V|}\frac{ d_m((\frac{k - 1}{d_m})^{1-\rho} + \frac{i-1}{|V|})^{-1}}{\chi_T|V|} + 1$.}
\end{proof}
{Given that RF of an optimal solution is known to be no less than 1, Equation~(\ref{eq:rfbound}) provides an upper bound for the approximation ratio. 
}

\begin{theorem}
	\label{thm:dandiao}
	{The bound of \text{RF} will decrease as the decrease of $\rho$.}
	\end{theorem}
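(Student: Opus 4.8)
The plan is to regard the right-hand side of Equation~(\ref{eq:rfbound}) as a function $B(\rho)$ of the power-law exponent on the admissible range $2<\rho<3$, with $\xi$, $d_m$, $k$, $|V|$ held fixed, and to prove that $B$ is non-decreasing in $\rho$; the claim then follows, since lowering $\rho$ can only lower $B(\rho)$. I would split $B(\rho)=\chi_H(\rho)\,k+T(\rho)$ into its head part $\chi_H(\rho)\,k$ and its tail part $T(\rho):=\sum_{i=1}^{\chi_T(\rho)|V|}\frac{d_m\left(a(\rho)+\frac{i-1}{|V|}\right)^{-1}}{\chi_T(\rho)\,|V|}+1$, where $a(\rho):=\left(\frac{k-1}{d_m}\right)^{1-\rho}$, and recall from the proof of the preceding theorem that $\chi_T(\rho)=\frac{1}{|V|}\sum_{j=1}^{\xi}j^{-\rho}$ and $\chi_H(\rho)=1-\chi_T(\rho)$, so that $N(\rho):=\chi_T(\rho)|V|=\sum_{j=1}^{\xi}j^{-\rho}$.

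First I would dispatch the head part. For each integer $j\ge 1$ the map $\rho\mapsto j^{-\rho}=e^{-\rho\ln j}$ is non-increasing (strictly decreasing once $j\ge 2$), hence $N(\rho)=\sum_{j=1}^{\xi}j^{-\rho}$ is non-increasing in $\rho$ and $\chi_H(\rho)=1-N(\rho)/|V|$ is non-decreasing; multiplying by $k>0$ shows the head part is non-decreasing in $\rho$. Next, for the tail part write $g(i,\rho):=d_m\left(a(\rho)+\frac{i-1}{|V|}\right)^{-1}$, so $T(\rho)=\frac{1}{N(\rho)}\sum_{i=1}^{N(\rho)}g(i,\rho)+1$, and establish two monotonicity facts. (i) Under the mild assumption $k-1\ge d_m$ — automatic when $d_m=1$, and met for any realistic number of partitions otherwise — we have $\frac{\partial a}{\partial \rho}=-a(\rho)\ln\frac{k-1}{d_m}\le 0$, so $a(\rho)$ is non-increasing in $\rho$; since $g(i,\cdot)$ is decreasing in $a$, each $g(i,\rho)$ is non-decreasing in $\rho$. (ii) The sequence $i\mapsto g(i,\rho)$ is decreasing in $i$, and the arithmetic mean of a decreasing sequence over its first $N$ terms is non-increasing in $N$, because $g(N+1,\rho)\le\frac1N\sum_{i\le N}g(i,\rho)$, so appending the next term cannot raise the mean; hence, as $\rho$ increases and $N(\rho)$ shrinks, the mean in $T(\rho)$ does not decrease. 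Combining (i) and (ii) gives that $T(\rho)$ is non-decreasing in $\rho$, and therefore $B(\rho)$ is non-decreasing in $\rho$, which is exactly the assertion.

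I expect the main obstacle to be the tail part, where $\rho$ enters both through the summand $g(i,\rho)$ and through the number of summands $N(\rho)$; these two dependences must be shown to push the mean in the same direction, and the ``mean of a monotone sequence over a variable number of terms'' argument should be made rigorous — most cleanly by replacing $\sum_{i=1}^{\chi_T|V|}$ by its integral analogue $\int_0^{\chi_T|V|} d_m\left(a(\rho)+\frac{t}{|V|}\right)^{-1}\,dt$ to sidestep integer-rounding of $\chi_T(\rho)|V|$, after which monotonicity in the upper limit is immediate from $\frac{d}{dM}\!\left[\frac1M\!\int_0^M h\right]=\frac{1}{M}\!\left(h(M)-\frac1M\!\int_0^M h\right)\le 0$ for decreasing $h$. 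A secondary point is the hypothesis $k-1\ge d_m$: it guarantees $\ln\frac{k-1}{d_m}\ge 0$ so that the $a$-effect and the $N$-effect do not oppose one another, and I would either state it as a standing assumption or note that the opposite regime $d_m>k-1$ is irrelevant for the skewed, massively-partitioned graphs targeted by S5P-B.
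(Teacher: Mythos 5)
Your proof is correct and follows the same head/tail decomposition as the paper's, but it is noticeably more careful on the tail part. The head-part argument is identical to the paper's (monotonicity of $\sum_{j=1}^{\xi}j^{-\rho}$ in $\rho$). For the tail part, however, the paper simply observes that the summand $\Gamma(\rho)=d_m\bigl((\frac{k-1}{d_m})^{1-\rho}+\frac{i-1}{|V|}\bigr)^{-1}$ is monotone in $\rho$ and concludes that the average is too --- it never addresses the fact that the number of summands $\chi_T|V|$ (and hence the normalizing denominator) also depends on $\rho$. Your step (ii), showing that the mean of a decreasing sequence over its first $N$ terms is non-increasing in $N$, so that the shrinking of $N(\rho)$ as $\rho$ grows pushes the mean in the \emph{same} direction as the per-term monotonicity, closes a genuine gap in the published argument rather than merely restating it. You also make explicit the hypothesis $k-1\ge d_m$ needed for $(\frac{k-1}{d_m})^{1-\rho}$ to be non-increasing in $\rho$; the paper silently assumes this (harmless in practice, since $d_m=1$ for typical graphs and $k\ge 2$, but worth stating, as the monotonicity reverses when $d_m>k-1$). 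The integral reformulation you propose to sidestep integer rounding of $\chi_T|V|$ is a clean way to make the variable-upper-limit step airtight.
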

	\begin{proof}
	{For the head part, as $\rho$ decreases, the head part decreases due to the decrease of 1- $\frac{\sum_{j=1}^{\xi}j^{-\rho}}{|V|}$.
The tail part can be regarded as the average of $\Gamma(\rho)= d_m((\frac{k - 1}{d_m})^{1-\rho} + \frac{i-1}{|V|})^{-1}$. Since $\Gamma(\rho)$ decreases monotonically with respect to $\rho$, the tail part also deceases monotonically. Therefore, the theorem is proven.}
	\end{proof}

	S5P-B can be regarded as the approximation of S5P. Firstly, for tail edges, the local degree can be a good estimate for the global degree \cite{hdrf}. Secondly, adding load balance constraints during the clustering process and the postpartitioning process can better serve downstream tasks. Overall, S5P is expected to achieve a replication factor similar to S5P-B.
	{Since the $\rho$ serves as an indicator of graph skewness,
Theorem~\ref{thm:dandiao} theoretically guarantees a tighter RF bound of S5P on a more skewed graph.}

\subsection{Stackelberg Game Analysis}

\begin{theorem}
\label{theorem:social}
The stackelberg game social welfare is the sum of all individual cost of clusters.
\end{theorem}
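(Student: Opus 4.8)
\textbf{Proof proposal for Theorem~\ref{theorem:social}.}
The plan is to expand $\sum_{c_i} S_{c_i}(p_i)$ over all clusters and show, term by term, that it collapses to the two summands of $S(\Lambda)$. I would split the argument according to the two additive pieces of the per-cluster cost in Equation~(\ref{eq:clustercost}): the quadratic ``load balance'' term $\frac{\delta}{k}|c_i|\cdot|p_i|$ and the ``communication'' term $\frac{F(c_i)+|c_i|}{k}$.

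First, for the load-balance piece, I would use the fact that the clusters assigned to a fixed partition $p_\ell$ partition its edge set, so $\sum_{c_i : P(c_i)=p_\ell} |c_i| = |p_\ell|$. Summing the per-cluster load term over all clusters and grouping by the partition each cluster lands in gives $\sum_{c_i}\frac{\delta}{k}|c_i|\cdot|p_{P(c_i)}| = \frac{\delta}{k}\sum_{\ell=1}^k |p_\ell|\sum_{c_i:P(c_i)=p_\ell}|c_i| = \frac{\delta}{k}\sum_{\ell=1}^k |p_\ell|^2$, which is exactly the load-balance part of $S(\Lambda)$. This step is routine bookkeeping.

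Second, for the communication piece, I would note that $\sum_{c_i}|c_i| = |E| = \sum_{\ell=1}^k |p_\ell|$, handling the $+|c_i|$ term, so it remains to show $\sum_{c_i} F(c_i) = \sum_{\ell=1}^k \Theta(p_\ell, V-p_\ell)$ (recalling $\Theta(p_\ell,V) = \Theta(p_\ell, V-p_\ell)+|p_\ell|$). The key observation is a double-counting identity: expanding $\sum_{c_i}F(c_i) = \sum_{c_i}\sum_{c_j \in C_H\cup C_L}\Theta(c_i,c_j)\mathbb{I}(i,j)$, each edge $e$ whose two endpoints lie in clusters $c_a$ and $c_b$ contributes $\Theta$-mass that is counted once as the pair $(c_a,c_b)$ inside $F(c_a)$ and once as $(c_b,c_a)$ inside $F(c_b)$, but only when $P(c_a)\neq P(c_b)$, i.e., only when $e$ is a cut edge; intra-cluster edges and edges between distinct clusters in the same partition contribute $0$ because $\mathbb{I}$ vanishes. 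Hence $\sum_{c_i}F(c_i) = 2\cdot(\#\text{cut edges})$. By the same reasoning applied at the partition level, $\sum_{\ell}\Theta(p_\ell,V-p_\ell)$ also counts each cut edge exactly twice, so the two sums agree. Adding the two pieces back together yields $\sum_{c_i}S_{c_i}(p_i) = \frac{\delta}{k}\sum_\ell|p_\ell|^2 + \frac{1}{k}\sum_\ell\Theta(p_\ell,V) = S(\Lambda)$.

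The main obstacle I anticipate is the bookkeeping in the double-counting step, specifically making sure the head/tail cluster structure does not break the argument: a single vertex may simultaneously belong to a head cluster and a tail cluster, but every edge is \emph{either} a head edge \emph{or} a tail edge (Definition~\ref{defGHCM}), so each edge is associated with exactly one well-defined pair of clusters and is therefore counted consistently in both $\sum_i F(c_i)$ and $\sum_\ell \Theta(p_\ell, V-p_\ell)$. I would also state explicitly that $C_L$ denotes the tail clusters $C_T$ and that self-pairs $c_j=c_i$ contribute nothing since $\mathbb{I}(i,i)=0$, to keep the index manipulation unambiguous.
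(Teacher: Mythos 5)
Your proposal is correct and follows essentially the same route as the paper: regroup the sum of per-cluster costs by the partition each cluster is assigned to, so the load terms collapse to $\frac{\delta}{k}\sum_{\ell}|p_\ell|^2$ and the $F(c_i)+|c_i|$ terms collapse to $\frac{1}{k}\sum_{\ell}\Theta(p_\ell,V)$. Your explicit double-counting justification of $\sum_i F(c_i)=\sum_\ell\Theta(p_\ell,V-p_\ell)$ and the remark about head/tail edge disjointness are slightly more careful than the paper's one-line identification, but they are the same argument.
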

\begin{proof}
\begin{align}
		&~~~~\sum_{c_i \in C_H \cup C_L}\frac{\delta}{k} |c_i| \cdot |p_i| + \sum_{c_i \in C_H \cup C_L}\frac{F(c_i)}{k} + \sum_{c_i \in C_H \cup C_L}\frac{|c_i|}{k}  \\
		&= \frac{\delta}{k}\sum_{i=1}^{|C|}|c_i|\cdot|C_2P(i)| + \frac{1}{k}\sum_{i=1, j=1}^{|C|} \Theta(c_i, c_j)\mathbb{I}(i, j) +\frac{1}{k}\sum_{i=1}^{k}\sum_{c_j \in p_i} |c_j| \nonumber \\
		&= \frac{\delta}{k}\sum_{i=1}^{k}\sum_{c_j \in p_i}|c_j|\cdot|p_i| + \frac{1}{k}\sum_{i=1}^{k}\sum_{c_j \in p_i} \Theta(c_j, V - C_2P(c_j)) + \sum_{i=1}^{k}\frac{|p_i|}{k} \nonumber\\
		&= \delta \frac{\sum_{i=1}^k{|p_i|^2}}{k} + \frac{1}{k}\sum_{i=1}^{k} (\Theta(p_i, V - p_i) + |p_i|) \nonumber\\ 
		&=\delta \frac{\sum_{i=1}^k{|p_i|^2}}{k}  + \frac{\sum_{i=1}^k{\Theta(p_i, V)}}{k}
	\label{eq:socialproof}
\end{align}
\end{proof}
\subsubsection*{Normalization}
We demonstrate the process of selecting the normalization factor $\delta$. The numerical values of the two components of the social welfare function exhibit significant disparities. As a result, we present a strategy for establishing the value of $\delta$. We can assume the two components in Equation~(\ref{eq:software}) are of importance to calculate $\delta = \frac{\sum_{i=1}^k{\Theta(P_i, V)}}{\sum_{i=1}^k{|P_i|^2}}$ \cite{clugp}. Then, we have:
\begin{equation}
	\label{eq:nomal}
	\frac{1}{\sum_{c_i \in C_H \cup C_L}|c_i|} \leq \delta  \leq \frac{k\sum_{c_i \in C_H \cup C_L}(F(c_i) + |c_i|)}{(\sum_{c_i \in C_H \cup C_L}|c_i|)^2}
\end{equation}
\begin{proof}
	We can employ the method of extreme assumptions to demonstrate the above formula. When all clusters are assigned to the same partition, we can get:
	\begin{equation}
	\nonumber	\delta_{min} = \frac{\sum_{c_i \in C_H \cup C_L}|c_i|}{(\sum_{c_i \in C_H \cup C_L}|c_i|)^2} = \frac{1}{\sum_{c_i \in C_H \cup C_L}|c_i|}
	\end{equation}
	
	If all clusters are evenly distributed to all partitions, we have:
	\begin{equation}
		\label{sigmamax}
		\delta_{max} = \frac{\sum_{c_i \in C_H \cup C_L}(F(c_i) + |c_i|)}{k(\frac{\sum_{c_i \in C_H \cup C_L}|c_i|}{k})^2} = \frac{k\sum_{c_i \in C_H \cup C_L}(F(c_i) + |c_i|)}{(\sum_{c_i \in C_H \cup C_L}|c_i|)^2}
	\end{equation}
	
	Setting $\delta$ to a value within the range of Equation~(\ref{eq:nomal}) can help ensure that the two components are treated as equally as possible. In the experiment, we set $\delta$ to the maximum value within its range.	
\end{proof}

\begin{theorem}
	\label{theorem:poabound}
	The \text{PoA} of the stackelberg game is bounded by k + 1
	
\end{theorem}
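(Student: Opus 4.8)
The plan is to bound the ratio between the social welfare $S(\Lambda)$ at an arbitrary pure Nash equilibrium $\Lambda$ of the Stackelberg game and the social welfare $S(\Lambda^{\text{OPT}})$ of a welfare-minimizing strategy profile. Recall from Theorem~\ref{theorem:social} that $S(\Lambda) = \sum_{c_i} S_{c_i}(P(c_i))$, so the natural route is the standard smoothness/Nash-inequality argument: for each cluster $c_i$, since $\Lambda$ is a Nash equilibrium, deviating to the partition $P^{\text{OPT}}(c_i)$ it occupies under the optimum cannot lower its cost, i.e.
\begin{equation}
S_{c_i}\bigl(P(c_i)\bigr) \;\le\; S_{c_i}\bigl(P^{\text{OPT}}(c_i); \Lambda_{-c_i}\bigr),
\end{equation}
where the right side evaluates $c_i$'s cost function when $c_i$ moves to its optimal partition but all other clusters keep their equilibrium strategies. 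I would sum this inequality over all clusters $c_i \in C_H \cup C_T$ and then upper-bound the right-hand side in terms of $S(\Lambda^{\text{OPT}})$ (and possibly a small multiple of $S(\Lambda)$ that can be absorbed), which is the mechanism that produces the factor $k+1$.

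The key steps, in order: (1) write $S_{c_i}(p) = \frac{\delta}{k}|c_i|\,|p| + \frac{F(c_i)+|c_i|}{k}$ and split the Nash inequality into the load-balance term and the communication term; (2) for the load-balance term, use that when $c_i$ deviates to partition $p' = P^{\text{OPT}}(c_i)$ its contribution is $\frac{\delta}{k}|c_i|\bigl(|p'| + |c_i|\bigr)$ in the worst case (the $+|c_i|$ accounting for $c_i$'s own mass being added to $p'$), and sum $\sum_i |c_i|\,|P^{\text{OPT}}(c_i)| \le k \sum_{i=1}^k |P^{\text{OPT}}_i|^2$ by a counting argument over which clusters share a partition under the optimum — this is where a factor of $k$ enters; (3) for the communication term $F(c_i)$, observe that $\Theta(c_i,c_j)\mathbb{I}(i,j)$ is bounded by $\Theta(c_i,c_j)$ regardless of placement, so $\sum_i F(c_i) \le \sum_{i,j}\Theta(c_i,c_j)$, which is at most twice the total edge count and hence controlled by the communication part of $S(\Lambda^{\text{OPT}})$ (since every edge contributes to $\Theta(p_i,V)$ for the partition(s) it touches); (4) combine the two bounds, collect the coefficients, and verify the aggregate ratio is $k+1$ — the $k$ from the load-balance counting plus the $1$ from the communication term that is already present additively in $S$.

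The main obstacle I expect is step (2): making the counting argument $\sum_i |c_i|\,|P^{\text{OPT}}(c_i)| \le k\sum_i |P^{\text{OPT}}_i|^2$ tight and correctly handling the ``own-mass'' term $+|c_i|$ that appears when a cluster hypothetically deviates into an optimum partition it does not currently occupy — this term must be shown not to blow up the bound (e.g.\ by noting $|c_i| \le |P^{\text{OPT}}(c_i)|$ since $c_i \subseteq$ some partition in the optimum, or by absorbing it into the $\delta$-normalization inequalities of Equation~(\ref{eq:nomal})). A secondary subtlety is that this is a Stackelberg (two-stage) game rather than a simultaneous one, so I would need to argue the Nash inequality holds in the appropriate stagewise sense — leaders best-respond anticipating followers, followers best-respond to leaders — but since the welfare decomposition of Theorem~\ref{theorem:social} is stage-agnostic and the equilibrium conditions give a per-cluster inequality either way, the same summation goes through; I would state this reduction explicitly at the start of the proof.
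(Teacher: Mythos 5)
Your proposal takes a genuinely different route from the paper, and as sketched it has a real gap. The paper's proof never invokes the Nash (or Stackelberg) equilibrium condition at all: it upper-bounds the welfare of \emph{any} strategy profile by the worst possible configuration (all clusters in one partition, all inter-cluster edges cut), lower-bounds $\text{OPT}$ by the best possible configuration (perfectly balanced load, no cuts), and uses the normalization $\delta \le \delta_{\max} = \frac{k\sum_{c_i}(F(c_i)+|c_i|)}{(\sum_{c_i}|c_i|)^2}$ from Equation~(\ref{eq:nomal}) to show the first quantity is at most $\frac{k+1}{k}\sum_{c_i}(F(c_i)+|c_i|)$ and the second is at least $\frac{1}{k}\sum_{c_i}(F(c_i)+|c_i|)$; the ratio $k+1$ falls out immediately. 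Your smoothness-style deviation argument is the more standard PoA technique and would, if completed, be a stronger statement, but the quantitative steps do not go through as written.

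The concrete problem is in your step (2). When cluster $c_i$ deviates to $p' = P^{\text{OPT}}(c_i)$, the load term in its deviation cost is $\frac{\delta}{k}|c_i|\bigl(|p'|_{\Lambda} + |c_i|\bigr)$, where $|p'|_{\Lambda}$ is the size of that partition under the \emph{equilibrium} profile $\Lambda_{-c_i}$, not under the optimum. The sum $\sum_i |c_i|\,|P^{\text{OPT}}(c_i)|_{\Lambda}$ is therefore a cross term between $\Lambda$ and $\Lambda^{\text{OPT}}$; it is not equal to (nor straightforwardly dominated by) $\sum_p |P^{\text{OPT}}_p|^2$, so the counting inequality you invoke does not apply. (If the sizes \emph{were} the optimal ones, you would in fact have equality $\sum_i |c_i|\,|P^{\text{OPT}}(c_i)| = \sum_p |P^{\text{OPT}}_p|^2$ with no factor $k$, so the place you say ``a factor of $k$ enters'' is not where it enters.) Handling the cross term requires an AM--GM or $(\lambda,\mu)$-smoothness step that leaves a residual multiple of $S(\Lambda)$ to be absorbed, and you have not verified that the resulting constant collapses to $k+1$. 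The same mixing issue affects the communication term, since $\mathbb{I}(i,j)$ in the deviation cost compares $c_i$'s new partition against the other clusters' equilibrium partitions. Until those constants are actually computed, the bound is not established; the paper's cruder max-over-min argument sidesteps all of this by exploiting the choice of $\delta$.
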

\subsubsection*{Price of Anarchy}
PoA (\emph{Price of Anarchy}) is the worse ratio of social welfare $S(\Lambda)$ when the strategy profile is a Nash Equilibrium over the optimum value of $S(\Lambda)$. It is an important metric to measure the quality of Nash Equilibriums. To minimize cost social welfare objective, it is formally defined as:
\begin{equation}
	\text{PoA} = \frac{max_{\Lambda^* \in \text{PNE}}S(\Lambda^*)}{\text{OPT}}
\end{equation}
where PNE is the set of pure Nash equilibrium and OPT is the global minimum value of social welfare in the case of cost minimization problem.
\begin{proof}
	When all clusters are in one partition, the load balance part is maximized. It is easy to see from the mean inequality that the load balance part is minimized when the sizes in each partition are equal.
	If the load balancing and cutting factors get their maximum values simultaneously, the upper bound of $S(\Lambda)$ is:
	\begin{equation}
	\label{eq:upperboundPOA}
	S(\Lambda^*) \leq \frac{\delta}{k} (\sum_{i=1}^{m}|c_i|)^2 + \frac{1}{k}\sum_{c_i \in C }(F(c_i) + |c_i|) \leq  \frac{k+1}{k}\sum_{c_i \in C }(F(c_i) + |c_i|)
	\end{equation}
	Similarly, if the two factors simultaneously get their minimum values, the lower bound of $S(\Lambda)$ can be computed as:
	\begin{equation}
		\label{eq:lowerboundPOA}
		\text{OPT} \geq \delta  (\frac{\sum_{i=1}^{m}|c_i|}{k})^2 + \frac{\sum_{i=1}^{m}|c_i|}{k} \geq \frac{1}{k}\sum_{c_i \in C }(F(c_i) + |c_i|)
	\end{equation}
	PoA must be no larger than the quotient of the upper bound of $S(\Lambda^*)$ (Equation~(\ref{eq:upperboundPOA})) and the lower bound of OPT (Equation~(\ref{eq:lowerboundPOA})). Thus we have $\text{PoA} = \frac{max_{S(\Lambda^*) \in \text{PNE}}S(\Lambda)}{\text{OPT}} = \frac{S(\Lambda^*)}{\text{OPT}} \leq k + 1$.
\end{proof}

\subsubsection*{Round Complexity}

\begin{theorem}
	The rounds \text{(RD)} required until Stackelberg game converges to a Nash equilibrium can be bounded by:
	\begin{equation}
	\label{eq:rntbound}
	\text{RD} \leq 2(\underbrace{\sum_{i = 1}^{\tau}d_m((\frac{k - 1}{d_m})^{1-\rho} +\frac{i-1}{|V|})^{-1}}_{tail \ part}+\underbrace{|V|(1-\sum_{i = 1}^{\xi}i^{-\rho})d_{M}}_{head \ part} + |V|)
	\end{equation}
	\label{nor}
where $\tau = |V| - |V|(d_M-\xi)d_M^{-\rho}$ is the bound of the tail vertices' number and $d_m$ (or $d_M$) is the global minimum (or maximum) degree.
\end{theorem}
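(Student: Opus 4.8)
The plan is to attach to each configuration a non-negative quantity that strictly decreases in every non-terminal round of Algorithm~\ref{alg:psgg}, and then to bound its initial value by separating the contributions of the head and tail clusters, exactly as in the replication-factor bound of Equation~(\ref{eq:rfbound}).

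First I would use the potential structure of the dynamics. By Theorem~\ref{theorem:social}, the social welfare $S(\Lambda)$ of Equation~(\ref{eq:software}) equals $\sum_{c_i} S_{c_i}(p_i)$, so a cluster's cost-reducing best response (line~8 of Algorithm~\ref{alg:psgg}) strictly decreases $S(\Lambda)$; hence $S(\Lambda)$ strictly decreases in every round that is not the last. Rewriting $k\,S(\Lambda) = \delta\sum_i|p_i|^2 + \sum_i\Theta(p_i,V-p_i) + |E|$ via $\Theta(p_i,V)=\Theta(p_i,V-p_i)+|p_i|$, both $\sum_i|p_i|^2$ (cluster sizes are integral) and the inter-partition endpoint count $\sum_i\Theta(p_i,V-p_i)$ (which charges every cut edge exactly twice) are integer-valued; I would argue that each strategy-changing round strictly decreases at least one of these two integers, so the game stabilizes after a number of rounds no larger than their combined initial value.

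It remains to bound that budget, which I would split over $C_H$ and $C_T$. The part charged to tail clusters is at most the number of edge-endpoints incident to tail vertices, i.e.\ $\sum_{v\ \mathrm{tail}} d(v)$; reusing the per-vertex tail-degree bound $d_m\big((\frac{k-1}{d_m})^{1-\rho}+\frac{i-1}{|V|}\big)^{-1}$ from the proof of Equation~(\ref{eq:rfbound}) together with the bound $\tau=|V|-|V|(d_M-\xi)d_M^{-\rho}$ on the number of tail vertices yields the tail term $\sum_{i=1}^{\tau} d_m\big((\frac{k-1}{d_m})^{1-\rho}+\frac{i-1}{|V|}\big)^{-1}$. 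The part charged to head clusters is at most (number of head vertices)$\,\times\, d_M = |V|\big(1-\sum_{i=1}^{\xi}i^{-\rho}\big)d_M$, using the power-law count of vertices of degree above $\xi$. The additive $|V|$ absorbs the $|E|$-like term of $k\,S(\Lambda)$ and the at-most-one replica per tail vertex created by the allocation step of Algorithm~\ref{alg:kagc}, and the outer factor $2$ absorbs the double counting of each cut edge in $\sum_i\Theta(p_i,V-p_i)$ (equivalently, the separate per-round budgets for the leader stage and the follower stage). Collecting the three terms gives Equation~(\ref{eq:rntbound}).

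The hard part is the claim inside the second paragraph: that every strategy-changing round makes ``one integer unit'' of progress. A best response can simultaneously raise the cut count and lower the load term $\delta\sum_i|p_i|^2$ (or the reverse), so neither integer is monotone on its own; one must show that the trade it performs cannot be arbitrarily small — essentially because cluster sizes and edge counts are integral and $\delta$ lies in the range of Equation~(\ref{eq:nomal}) — and then charge the round to whichever integer strictly dropped, making the charges disjoint across rounds. Verifying that the head and tail budgets above dominate the total charge, rather than some tighter expression, is exactly why the statement carries the slack factor $2$ and the extra $+|V|$.
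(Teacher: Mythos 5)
Your proposal follows essentially the same route as the paper: treat $S(\Lambda)$ as a potential that strictly decreases in every non-terminal round (justified via Theorem~\ref{theorem:social}), lower-bound the per-round decrement by an integrality argument, and upper-bound the initial potential by splitting $\sum_{v}d(v)$ into the head contribution $|V|(1-\sum_{i=1}^{\xi}i^{-\rho})d_{M}$ and the power-law-bounded tail contribution, with the factor $2$ arising because the load-balance and communication parts admit the same upper bound. The ``hard part'' you flag --- that a best response might trade the cut count against the load term by an arbitrarily small amount --- is precisely the step the paper dispatches with the bare assertion that $\delta\sum_{i}|P_i|^2+\sum_{i}\Theta(P_i,V)$ lies in the integer domain (dubious, since $\delta$ is a ratio), so your attempt reproduces the paper's argument, including its weakest link.
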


\begin{proof}

Based on Equation~(\ref{eq:nomal}), the upper bound of the load balance part is $\frac{1}{k}\sum_{v \in V}d(v) + \frac{|V|}{k}$ .
The number of tail vertex can be calculated as $|V| -|V| \sum_{i=\xi+1}^{d_M}i^{-\rho} < \tau$.
So, the sum of tail vertices' degrees can be bounded by the head part. The number of head vertex can be calculated as $|V|(1-\sum_{i = 1}^{\xi}i^{-\rho})$. So we can get the sum of head vertices' degrees as the head part. The upper bound of communication part is the same as the load balance part. Since $\delta \sum_{i=1}^k{|P_i|^2}  + \sum_{i=1}^k{\Theta(P_i, V)}$ is in the integer domain, it implies that if a cluster changes its current strategy in the game, the reduction of $S(\Lambda)$ is at least $\frac{1}{k}$. So, the therom is proven.
\end{proof}

{It can  theoretically guarantee a tighter RD bound of S5P on a more skewed graph, because both the tail part and the head part monotonically decrease as $\rho$ decreases.}

\section{experiments}
We conducted experiments with S5P on real-world graphs of varying types and sizes to assess its effectiveness and scalability, aiming at answering the following  {six} key questions.
\begin{itemize}
	\item {\bf $Q_1$.} Can S5P outperform other state-of-the-art partitioning methods, including streaming, in-memory, and hybrid vertex-cut partitioners?
	\item {\bf $Q_2$.} How does each component of S5P, such as clustering and stackelberg gaming, affect the partitioner's performance?
	\item {\bf $Q_3$.} What is the effect of proposed optimization techniques, such as sketching and parallelization, particularly concerning time and space efficiency?
	\item {\bf $Q_4$.} What is the influence of S5P's parameters on its overall performance and effectiveness?
	\item {\bf $Q_5$.} How much improvement does the deployment of \emph{S5P} bring to real distributed graph systems, e.g., PowerGraph?
	\item {\bf $Q_6$.}What is the relationship between the skewness of graph and the performance improvement achieved by S5P.
\end{itemize}

\label{sec:experiment}

\subsection{Experimental Setup}
\subsubsection*{Evaluation Platform} We perform all experiments on a server with $2 \times 32$ Intel(R) Xeon(R) Platinum 8358 CPU @ 2.60GHz, 377 GiB of main memory. We use Ubuntu 20.04.6 LTS as our operating system. To test the partitioning quality on a real distributed environment, we use docker to simulate 32 computing nodes equipped with PowerGraph \cite{powergraph}, and allocate one CPU for each computing node.
Each reported value is the average of 10 runs.
\begin{table}[h!]
	\tiny
	\centering
	\caption{ Details of Graphs}
	\label{ret:dataset}
	\begin{tabular}	{c|c|c|c|c|c}
		\hline
		Graphs & $|V|$ & $|E|$ & Size & Type& {($\rho$,  $\rho_1$, $\rho_2$, $\rho_3$)}\\
		\hline
		OK(com-orkut)  & 3.1M & 117M & 1.7GiB & Social &{(2.13, 0.49, 0.61, 108M)}\\
		TW(twitter-2010)  & 42M & 1.5B & 25.0GiB & Social&{(1.43, 0.03, 0.07, 1.3B)} \\
		FR(com-friendster)  & 66M & 1.8B & 31.0GiB  & Social&{(2.56, 0.39, 1.00, 1.6B)} \\
		LJ(com-livejournal) & 4M & 35M & 479MiB & Social&{(2.40, 0.38, 0.79, 227M )}  \\
		\hline
		IT(it-2004)  & 41M & 1.2B & 19.0GiB & Web&{(1.74, 0.06, 0.13, 1B)}  \\
		{UK7(uk-2007-05)} & 106M & 3.7B & 63.0GiB  & Web&{(1.31, 0.10, 0.20, 3.4B)} \\
		IN(in-2004) & 1M & 16M & 231MiB & Web&{(1.36, 0.15, 0.31, 12M)}  \\
		SK(sk-2005) & 51M & 1.9B & 32.0GiB & Web&{(1.11, 0.04, 0.07, 1.8B)} \\
			{UK2(uk-2002)} &  {18M} & {298M} & {4.7GiB} & {Web}&{(2.06, 0.21, 0.38, 243M)} \\
			{AR(arabic-2005)} & {23M} & {639M} & {11.0GiB} & {Web}&{(1.62, 0.10, 0.19, 572M)} \\
			{WB(webbase-2001)}  & {118M} & {1B} & {17.2GiB} & {Web}&{(2.21, 0.11, 0.23, 665M)} \\
		\hline
		{R-MAT-$G_{1}$}  & {1.04M} & {314M} & {3.0GiB} &{Synthetic} & {(0.89, 0.15, 0.44, 102M)}\\
		{R-MAT-$G_{2}$} & {1.04M} &  {629M} & {5.5GiB} &{Synthetic} & {(0.87, 0.17, 0.48, 626M)}\\
		{R-MAT-$G_{3}$}  & {1.04M} &  {1.04B} & {8.6GiB} &{Synthetic} &  {(0.84, 0.19, 0.52, 1B)}\\
		{R-MAT-$G_{4}$}  & {67.1M} &  {671M} & {10.1GiB} &{Synthetic} &{(1.16, 0.048, 0.145, 469M)}\\
		{R-MAT-$G_{5}$} & {67.1M} &  {2.01B} & {30.1GiB} &{Synthetic} & {(1.11, 0.051, 0.152, 1B)}\\
		{R-MAT-$G_{6}$}  & {67.1M} & {3.36B} & {49.8GiB} &{Synthetic}  & {(1.07, 0.053, 0.157, 3B)}\\

		\hline
	\end{tabular}
\end{table}

\subsubsection*{Graph Datasets}
We employ {11} distinct graphs characterized by varying sizes, originating from diverse web repositories, and obtained through independent web crawling efforts by distinct organizations. These are social network graphs (OK~\cite{snap,fr}, TW~ \cite{snap, tw}, FR~\cite{snap, fr}, and LJ~\cite{snap, fr}), as well as web graphs (IT~\cite{wg, llp, craw}, {UK7}~\cite{wg, llp, craw}, IN~\cite{wg, llp, craw}, SK~\cite{wg, llp, craw}, {UK2}~\cite{wg, llp, craw}, {AR}~\cite{wg, llp, craw}, and  {WB}~\cite{wg, llp, craw}).
We select the real-world graphs for comprehensively evaluating the partitioning strategies, which provide: 1) standardized test cases, 2) scalability challenges of big data, and 3) diversity in graph topology and partitioning difficulty.
First, many of the graphs are established benchmarks commonly used to assess partitioning algorithms in academic literature. Using widely adopted datasets enables fair comparison to prior works.
Second, most graphs are massive in scale, comprising billions of edges. Processing networks of this size poses considerable memory and computational challenges. Our experiments show that not all partitioning techniques scale effectively to such large graphs on our test platform.
Third, the graphs exhibit diversity in structure and complexity. This heterogeneity stringently evaluates partitioning methods.
For certain datasets, even the most proficient partitioning algorithms result in relatively elevated replication factors. {We also use R-MAT (TrillionG)~\cite{rmat, chakrabarti2004r} to generate $6$ big graphs with varying amounts of skewness to get deeper insights into the performance of skewness-aware partitioners}.

\subsubsection*{Competitors}
We compare \emph{S5P} to 8 of the most recent and best
\emph{streaming} and \emph{offline} partitioners. From the line of \emph{streaming}
partitioners, we compare S5P to HDRF~\cite{hdrf}, Greedy~\cite{powergraph}, DBH~\cite{dbh}, 2PS-L~\cite{2ps-l},
and CLUGP~\cite{clugp}. From the line of offline partitioners, we compare to NE~\cite{ne}, METIS~\cite{metis}, and HEP~\cite{hep}. {Additionally, we compare S5P with 3 game-based approaches, denoted as RMGP~\cite{rmgp}, MDSGP~\cite{mdsgp}, and CVSP~\cite{c+cv}, which are related to the streaming vertex-cut partitioning.}

\subsubsection*{Implementation\footnote{Our source code and links to all datasets are available online (\url{https://github.com/BearBiscuit05/S5P}).
}} When possible, we utilize the official reference implementations provided by the authors of~\cite{dbh, powergraph,cusp,2ps-l,clugp,ne,dne,metis,hep}. For HDRF, we employ the improved version in~\cite{2ps-l}, since the original version \cite{hdrf} does not perform well on large graphs.
DBH, {RMGP, and MDSGP have} no public reference implementation, so we use the re-implementation of DBH~\cite{2ps-l} {and re-implement RMGP, MDSGP, and CVSP}. We adopt the C++ versions in our experiments and also provide the Java ports in our open-source repository. We configure all algorithm parameters based on recommendations from the respective papers.

\subsection{Performance ($Q_1$)}
\begin{figure}[h!]
	
	\centering
		\subfigure[{FR: Replication Factor}] {\includegraphics[width= 0.32\columnwidth]{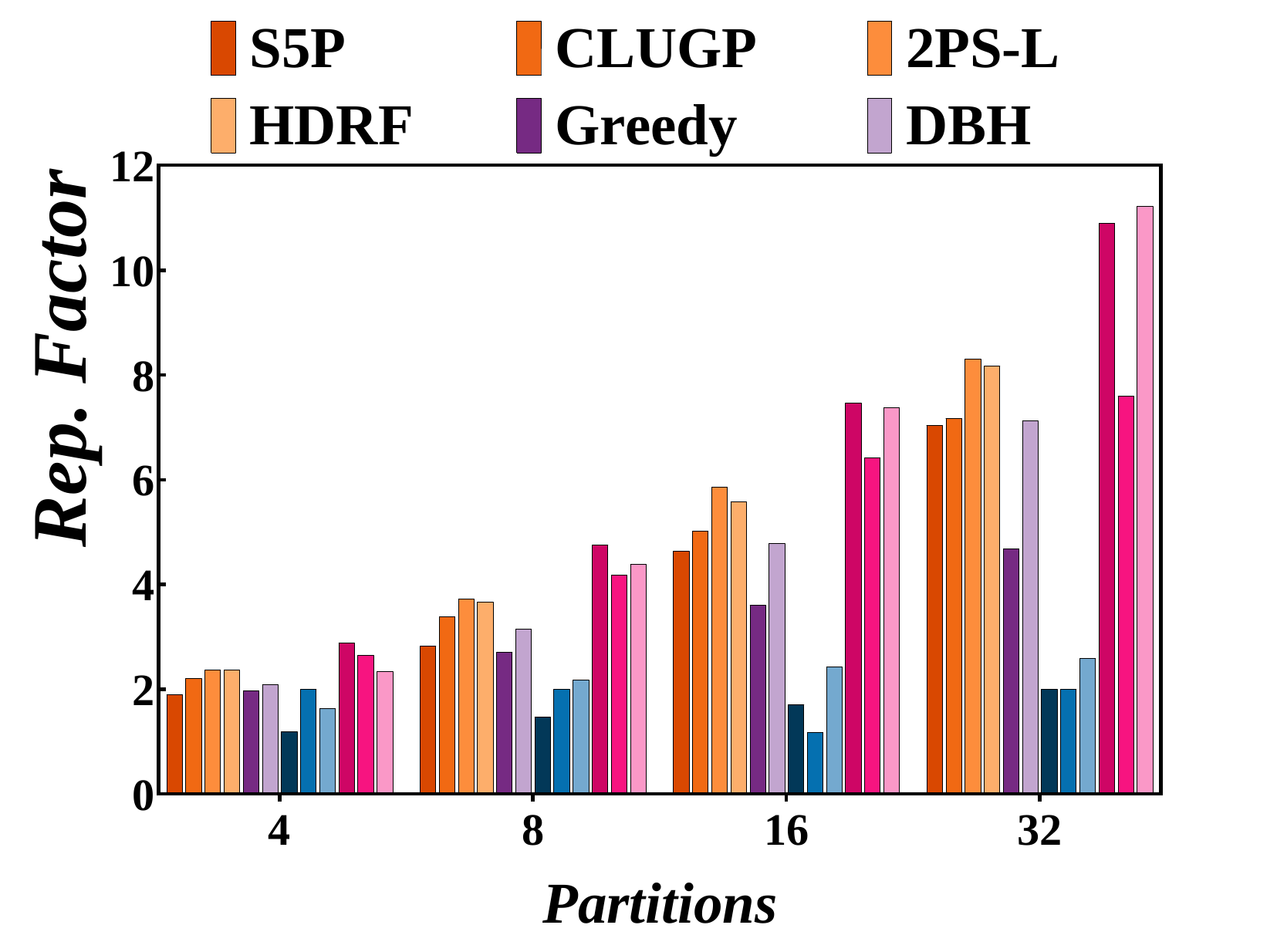}}
		\subfigure[{UK7: Replication Factor}] {\includegraphics[width= 0.32\columnwidth]{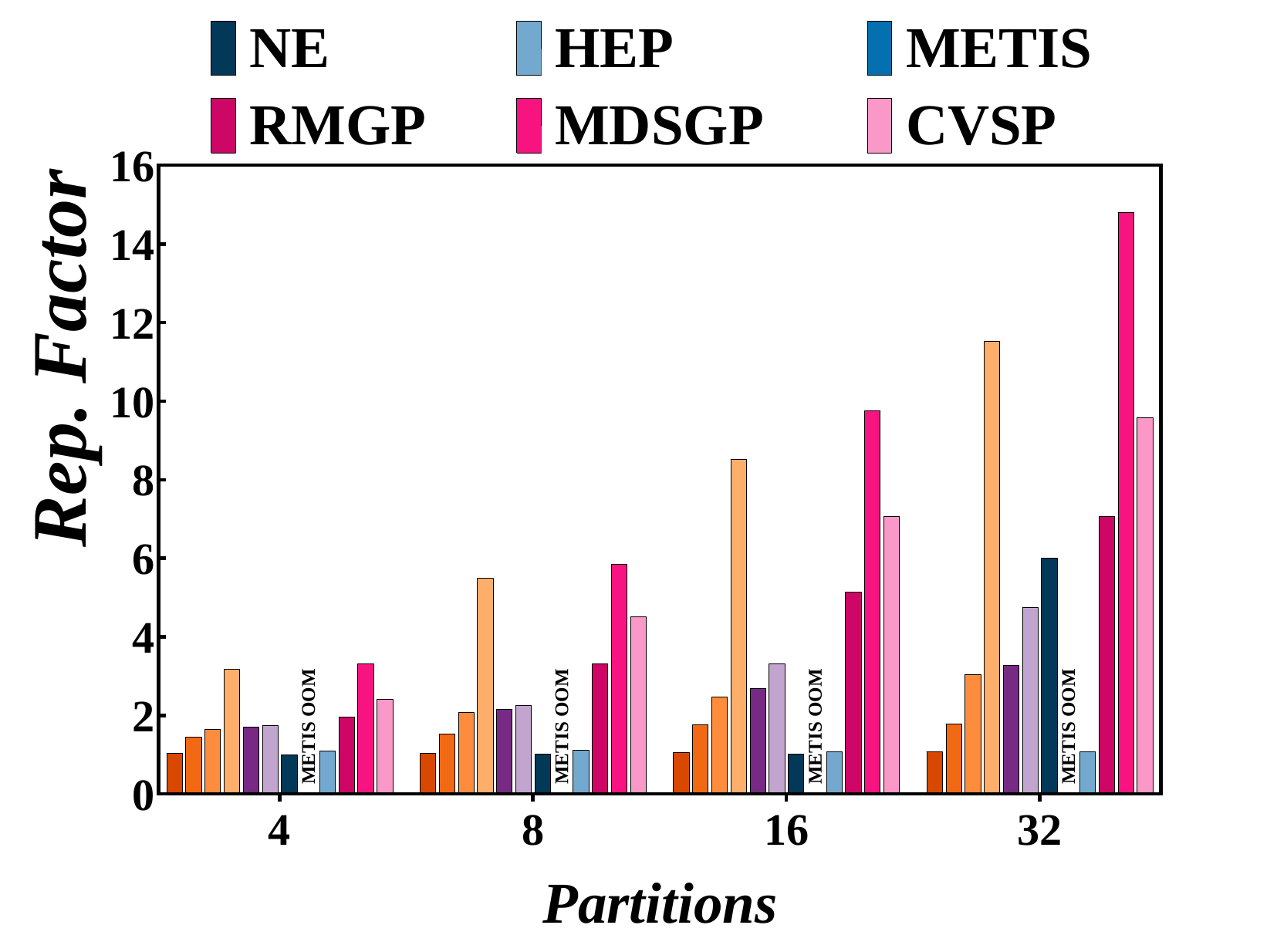}}
		\subfigure[{FR: Run-time}] {\includegraphics[width= 0.32\columnwidth]{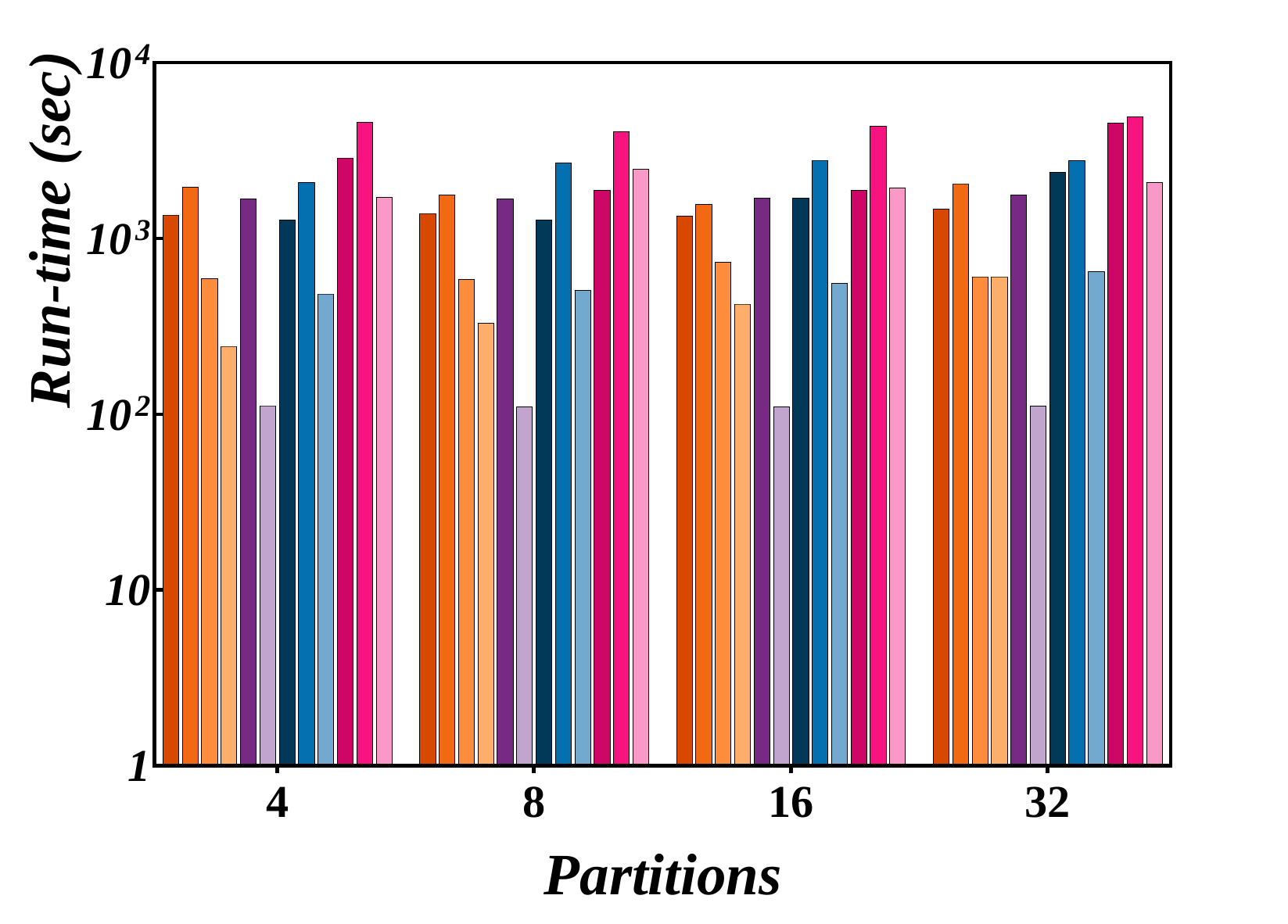}}
		\subfigure[{UK7: Run-time}] {\includegraphics[width= 0.32\columnwidth]{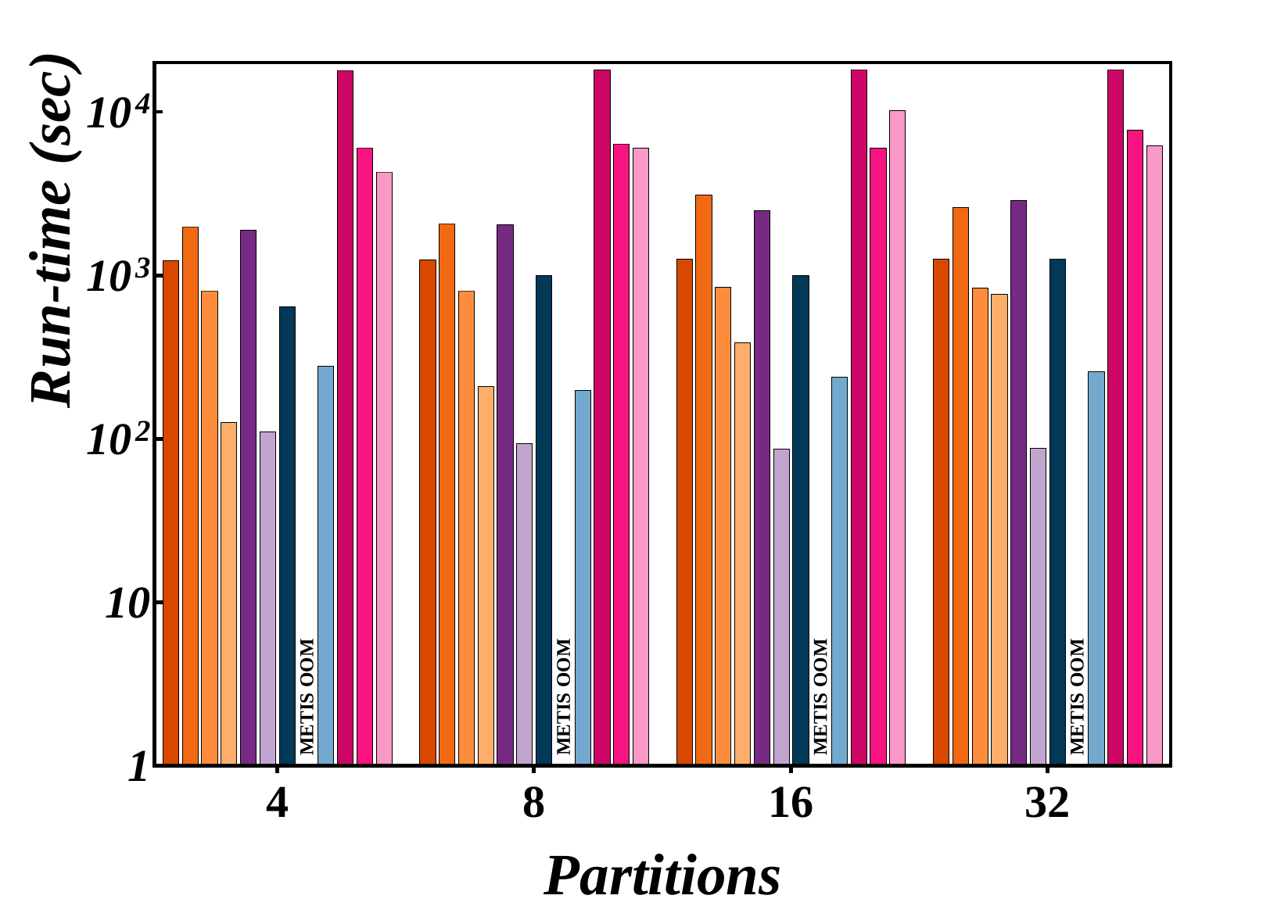}}
		\subfigure[{FR: Memory Overhead}] {\includegraphics[width= 0.32\columnwidth]{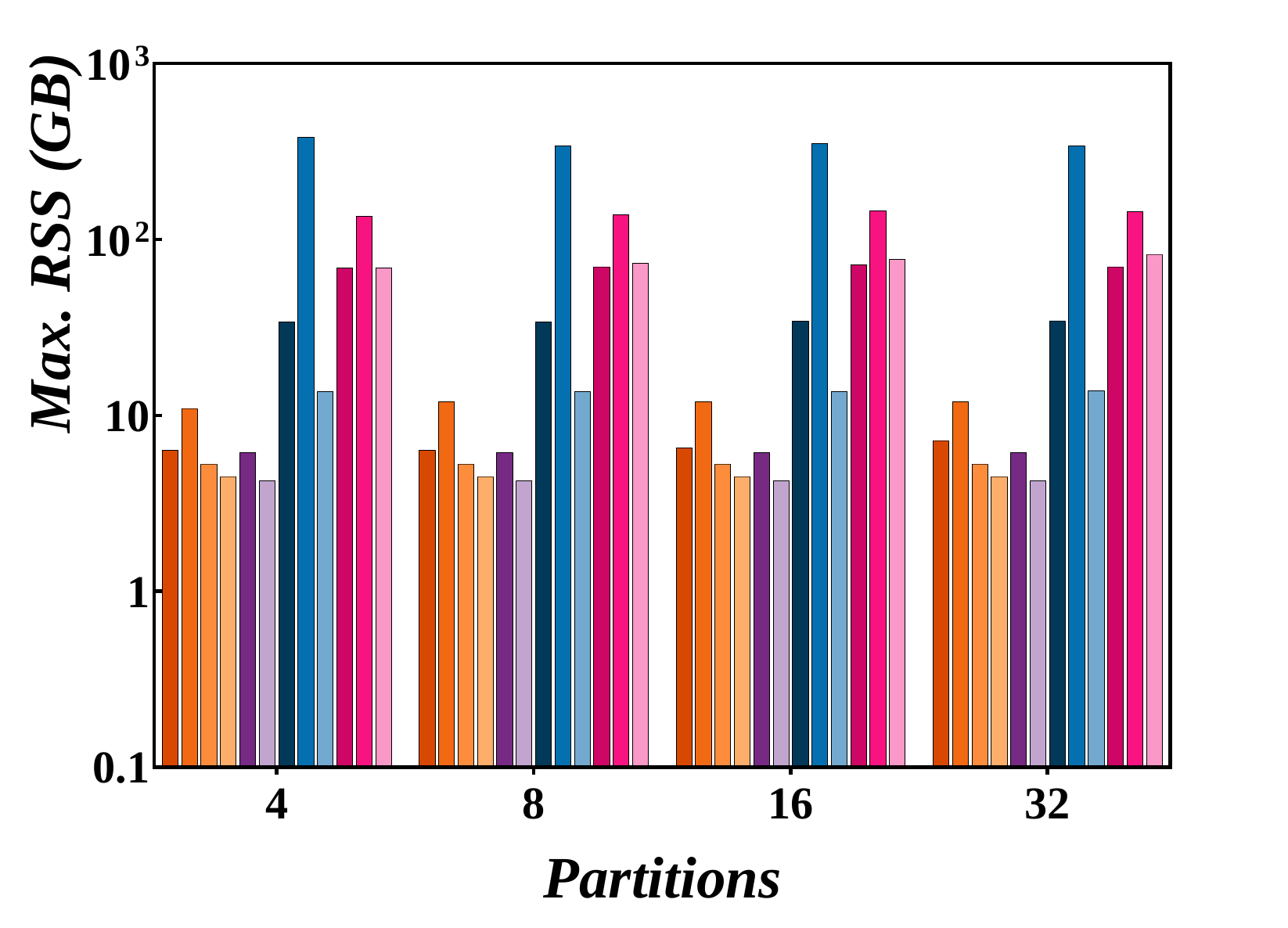}}
		\subfigure[{UK7: Memory Overhead}] {\includegraphics[width= 0.32\columnwidth]{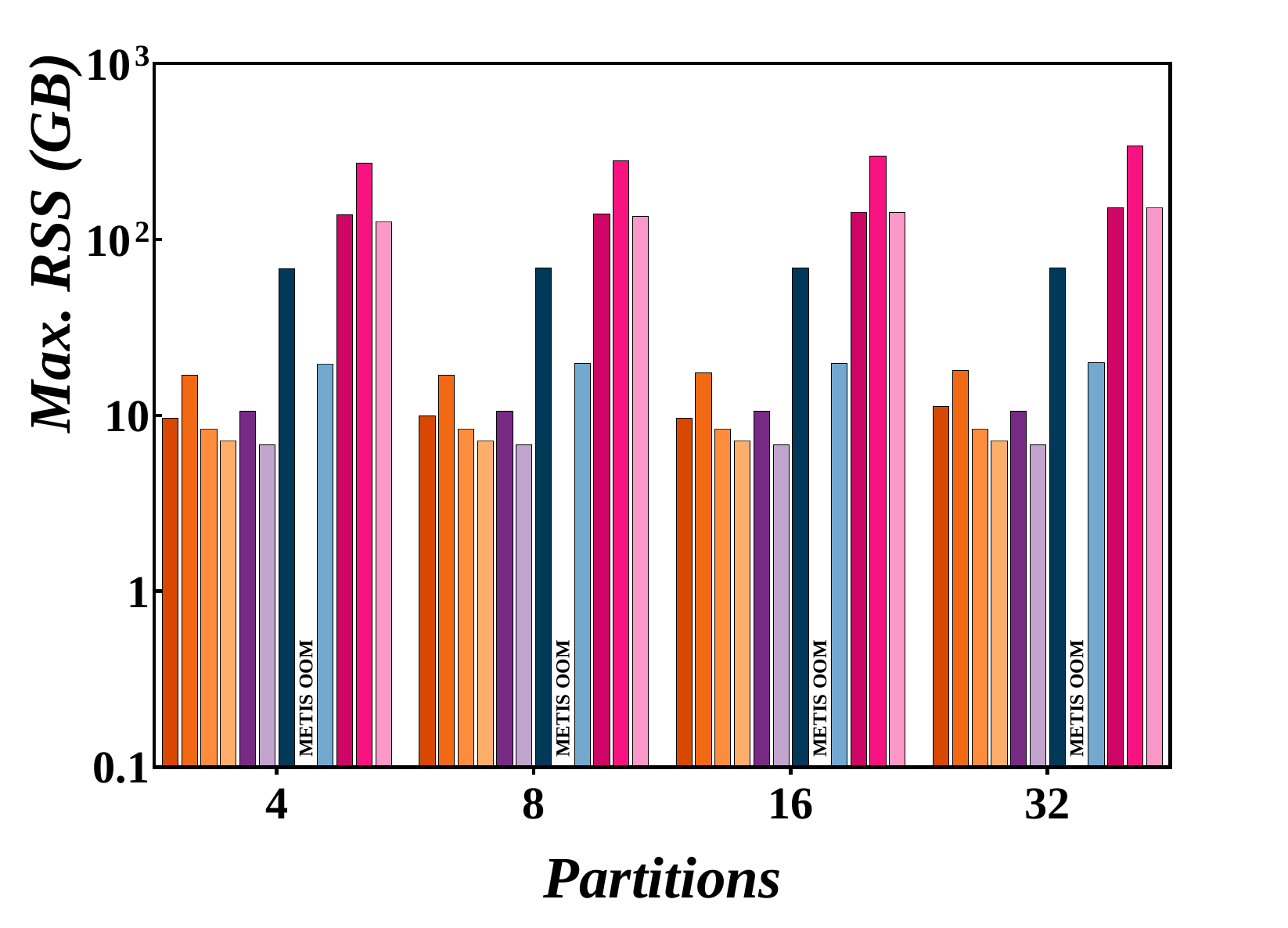}}
		\caption{Performance on Real-world Graphs} 
		\label{fig:performances}
\end{figure}

\begin{table*}[h]
    \tiny
	\centering
	\caption{Replication Factor of Different Graphs}
	\begin{tabular}{|c|c|c|c|c|c|c|c|c|c|c|c|c|}
		\hline
		\multirow{2}{*}{\diagbox{\emph{Graph}}{\emph{Partitioner}}} & \multicolumn{3}{c|}{CLUGP} & \multicolumn{3}{c|}{2PS-L} & \multicolumn{3}{c|}{HDRF} & \multicolumn{3}{c|}{S5P} \\
		\cline{2-13}
		& $k$:64 & $k$:128 & $k$:256 & $k$:64 & $k$:128 & $k$:256 & $k$:64 & $k$:128 & $k$:256 & $k$:64 & $k$:128 & $k$:256 \\
		\hline
		\hline
		OK &14.288  &17.522  &20.636  &15.112 &18.915 &23.200  &17.860  &22.617  & 27.023 & \textbf{11.614} & \textbf{15.391}  & \textbf{19.055} \\
		\hline
		TW & 8.808 & 10.817 & 11.861&10.642 &13.074  &15.577 &9.520  &11.789  &14.408  & \textbf{7.583} & \textbf{9.068} &\textbf{10.526}  \\
		\hline
		FR & 10.311 & 13.432 & 17.011 & 11.241  & 14.359 & 17.457 & 11.324 & 14.757 & 18.122 & \textbf{7.870} & \textbf{11.244} &\textbf{14.995}  \\
		\hline
		LJ &4.913  &5.471  &5.945&5.036  &5.593  &6.045  &6.778  &7.763  &8.545  &\textbf{4.549}  &\textbf{5.112}  & \textbf{ 5.636} \\
		\hline
		IT & 1.908  & 1.973 &2.041 &3.680  &4.110  &4.420  &12.538  &14.500  &16.469  & \textbf{1.273}  & \textbf{1.232} &  \textbf{1.210} \\
		\hline
		{UK7} & 1.754 & 1.876 & 1.839 &3.338  & 3.760 & 4.077 &14.190  & 16.700 & 19.181 & \textbf{1.265} & \textbf{1.213} & \textbf{1.196} \\
		\hline
		IN & 1.415 &1.542 & 1.621 &1.895  &2.241  &2.887  &6.884  &8.028  &8.890  & \textbf{1.229} & \textbf{1.207} & \textbf{1.225} \\
		\hline
		SK & 2.299 & 2.584 & 2.566 & 4.001 & 5.466 &7.029  & 16.561 & 19.413 &  21.766 &  \textbf{1.337} & \textbf{1.310} & \textbf{ 1.293} \\
		\hline
		{UK2} &  {1.561} &  {1.698} &  {1.692} & {2.644} &  {2.752} &  {2.921} &  {9.414} & {10.673} &  {11.791} &   {\textbf{1.371}} & {\textbf{1.227}} &  {\textbf{1.238}} \\
		\hline
		{AR} &  {2.015} &  {1.929} & {2.005}  &  {3.409} &  {3.803} &  {4.119} &  {12.599} &  {14.768} &  {16.762} &   {\textbf{1.131}} &  {\textbf{1.213}} &  {\textbf{1.233}} \\
		\hline		
		 {WB} &  {1.446} &  {1.493} &  {1.485} &  {1.829} &  {1.836} &  {1.822} &  {5.951} &  {6.646} &  {7.283} &   {\textbf{1.296}} &  {\textbf{1.178}} &  {\textbf{1.188}} \\
		\hline		
	\end{tabular}
	\label{tab:rfresult}
\end{table*}

	\begin{table*}[h]
			\centering
			\tiny
			\centering
			\caption{\small {Game-based Methods {\scriptsize (Time/sec | Memory/GB | Hour(s): h | $k$ = 32)}} }
			\begin{tabular}{|c|c|c|c|c|c|c|c|c|c|c|c|c|c|c|c|}
				\hline
				\multirow{2}{*}{\diagbox{ {\emph{G. }}}{ {\emph{Par.}}}} & \multicolumn{3}{c|}{ {RMGP}} & \multicolumn{3}{c|}{ {MDSGP}} & \multicolumn{3}{c|}{ {CVSP}} & \multicolumn{3}{c|}{ {CLUGP}} & \multicolumn{3}{c|}{ {S5P}} \\
				\cline{2-16}
				&  {RF} &  {Time} &  {Mem.} 	&  {RF} &  {Time} &  {Mem.}  &  {RF} &  {Time} &  {Mem.}  &  {RF} &  {Time} &  {Mem.}  &  {RF} &  {Time} &  {Mem.}  \\
				\hline
				\hline
				 {OK}  & {16.7} & {535} & {4.01}
				& {9.9}  & {324}
				& {8.95} 			
				& {17.4}  & {141}  &  {2.25}
				& {10.7}  & {91}  & {1.02}
				& \textbf{ {8.5}} & \textbf{ {60}}  & \textbf{ {0.38}} \\
				\hline
				 {TW} &  {-} &  {>24h}  & {48.70} &  {6.8}& {5189} & {99.08}  & {-}  & {>24h}  & {56.01}
				& {7.6}  & {1333}  & {11.65}
				&\textbf{ {6.0}} & \textbf{ {808}} &\textbf{ {4.64}}  \\
				\hline
				 {FR} &  {10.9} &  {4553} &  {70.20} &  {7.6}  &  {4934} &  {144.96} &  {11.2} &  {2078} &  {80.69}
				& {7.2}  & {3045}  & {14.12}
				& {\textbf{7.0}} & \textbf{ {1466}} &\textbf{ {7.22}}  \\
				\hline
				 {LJ} & {5.4}  & {65}  & {2.08}& {4.5}  & {184}  & {3.83}  & {5.7}  & {32}  & {2.25}
				& {4.2}  & {111}  & {1.11}
				
				&\textbf{ {3.9}}  &\textbf{ {28}}  & \textbf{ {0.48}} \\
				\hline
				 {WB} & {4.2}  & {1871}  & {61.10}& {6.2}  & {6320}  & {119.45}  & {4.8}  & {822}  & {79.46}
				& {1.5}  & {1101}  & {25.11}
				&\textbf{ {1.1}}  &\textbf{ {696}}  & \textbf{ {12.90}} \\
				\hline
				 {$G_6$} & {-}  & {>24h}  & {115.5}& {4.9}  & {11915}  & {231.87}  & {-}  & {>24h}  & {110.8}
				& {4.8}  & {4847}  & {18.01}
				&\textbf{ {4.4}}  &\textbf{ {2620}}  & \textbf{ {8.06}} \\
				\hline
				
			\end{tabular}
			\label{tab:gameresult}
		\end{table*}
	\begin{table*}[h]
		\begin{minipage}{\linewidth}
			\centering
			\tiny
			\caption{ {Optimality (Opt.)}}
			\label{ret:approximate}
			\begin{tabular}{|c|c|c|c|c|c|c|c|c|c|}
				\hline
				\multirow{2}{*}{\diagbox{ {\emph{$G_{\alpha/\beta/\gamma}$(|V|, |E|)} [Opt.]}}{ {\emph{Partitioner}}}} & \multicolumn{2}{c|}{ {CLUGP}} & \multicolumn{2}{c|}{ {2PS-L}} &  \multicolumn{2}{c|}{ {S5P}}
				\\
				\cline{2-7}
				&  {RF} &  {$\alpha$} &  {RF} &  {$\alpha$} &  {RF} &  {$\alpha$}
				\\
				\hline
				\hline
				 {$G_\alpha$(7, 12) {[\textbf{1.43}]} } & {1.86}  & {1.30}  & {2.00}  & {1.41} & {\textbf{1.71}} & {\textbf{1.20}}
				\\
				\hline
				 {$G_\beta$(8, 15) {[\textbf{1.63}]}} & {2.38}  & {1.46}  & {2.38}  & {1.46} & {\textbf{2.12}} & {\textbf{1.30}}
				\\
				\hline
				 {$G_\gamma$(10, 12) {[\textbf{1.30}]}} & {1.90}  & {1.46}  & {2.00}  & {1.54} & {\textbf{1.80}} & {\textbf{1.38}}
				\\
				\hline
			\end{tabular}
		\end{minipage}
	\end{table*}

Figures \ref{fig:performances} (a-f) report the performance under the metrics of replication factor, runtime cost, and space cost, for all $12$ partitioners, on UK7 and FR. Three key observations can be drawn.

(1) Under the same constraints of load balancing, S5P achieves lower replication factors than all other streaming partitioners, like HDRF, 2PS-L, DBH, and CLUGP. On average, CLUGP's replication factor is  {10\%} higher, 2PS-L's is  {72\%} higher, and HDRF's is  {3$\times$} higher, than S5P. Although offline partitioners and hybrid method yields slightly better RF, their time and space overheads are much higher.

(2) We compare the time cost of different methods, including its
scalability w.r.t. the number of partitions, in Figures~\ref{fig:performances} (c) and (d).
It shows that the time cost of HDRF increases significantly as the increase of the number of partitions.
In contrast, results of S5P and others are not sensitive to the number of partitions.
For example, when the number of partitions varies from 4 to 32, the runtime cost of S5P merely increases about  {$2$\%, i.e., from 1230 to 1254 seconds (in UK7)}.
The result also shows that DBH is faster than S5P, but its quality is much lower (cf. Figure~\ref{fig:performances}).

(3) We measure S5P's memory consumption and its scalability w.r.t. the number of partitions
in Figure \ref{fig:performances} (e) and (f). It shows that offline methods consume the largest amount of memory. The widely-recognized benchmark METIS, for instance, utilizes roughly {$50$}$\times$ more memory than S5P. {On UK7, METIS runs out of memory (OOM).}
For the hybrid method HEP, the memory cost is more than $2$ times higher than that of S5P.
}

\begin{figure*}[h!]
	
	\centering
	\begin{minipage}{0.8\textwidth}
		\subfigure[Time S5P- vs. Edge-Clustering]
		{\includegraphics[width= 0.24\columnwidth]{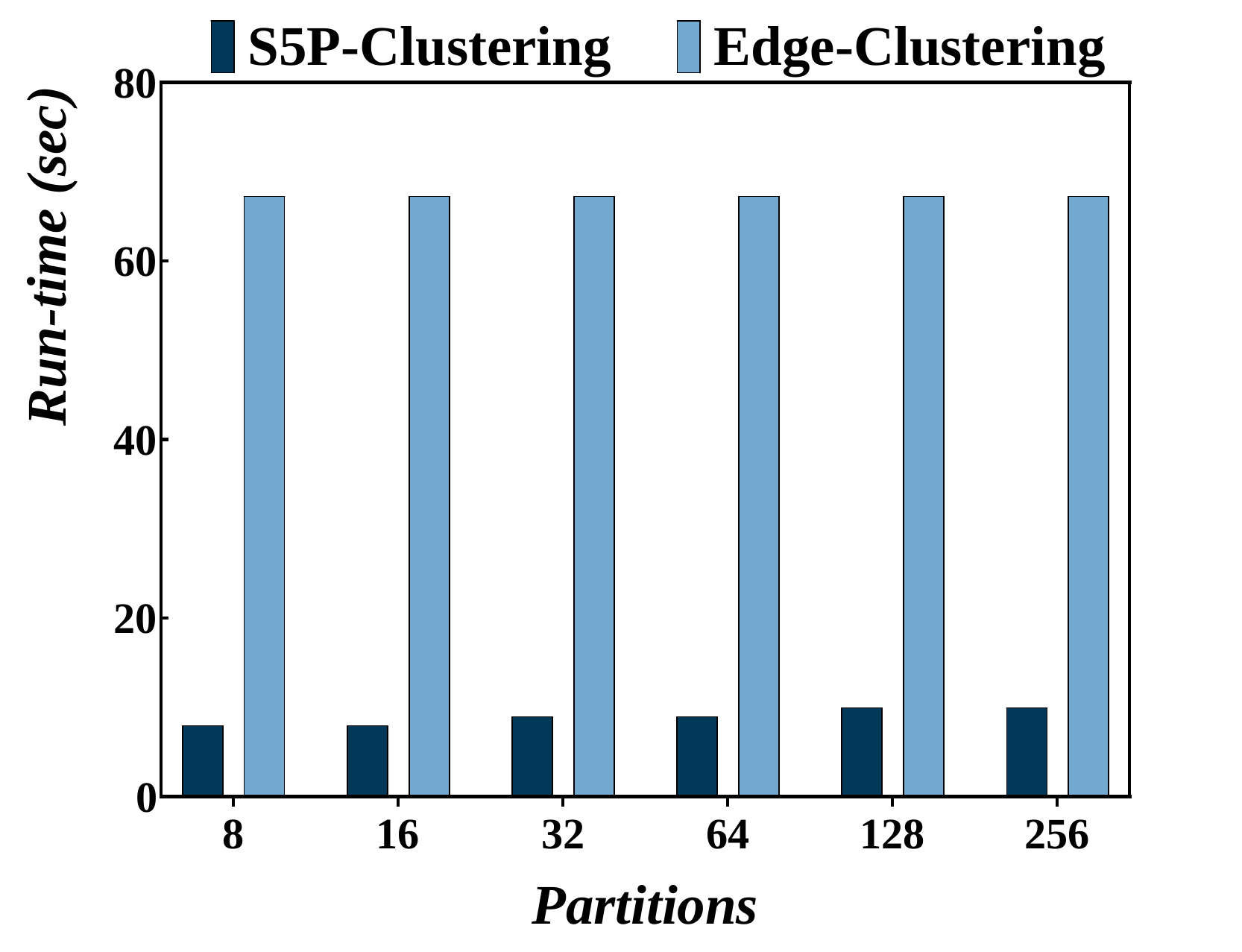}}
		\subfigure[Mem. S5P- vs. Edge-Clustering]
		{\includegraphics[width= 0.24\columnwidth]{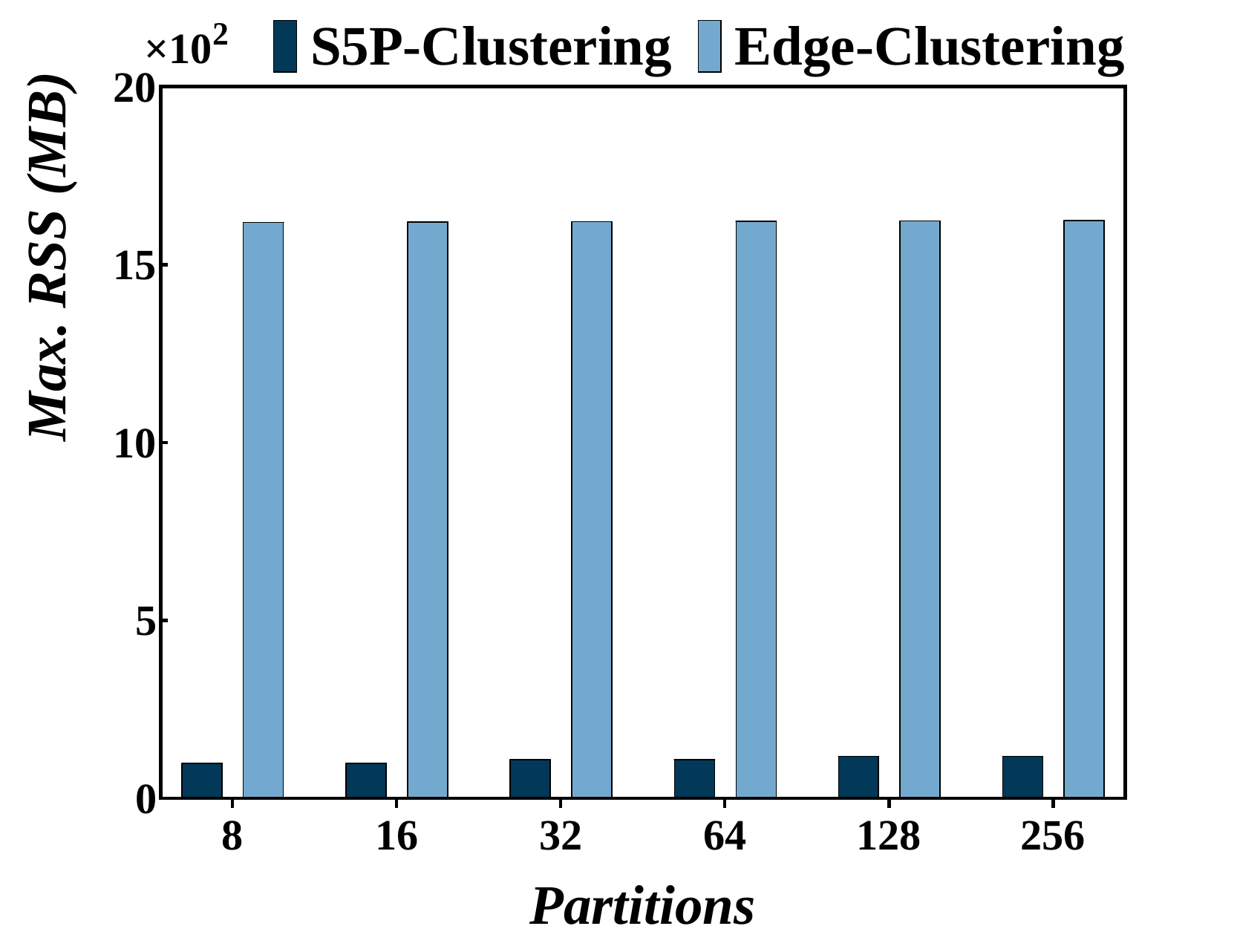}}
		\subfigure[RF w/ vs. w/o Clustering]
		{\includegraphics[width= 0.24\columnwidth]{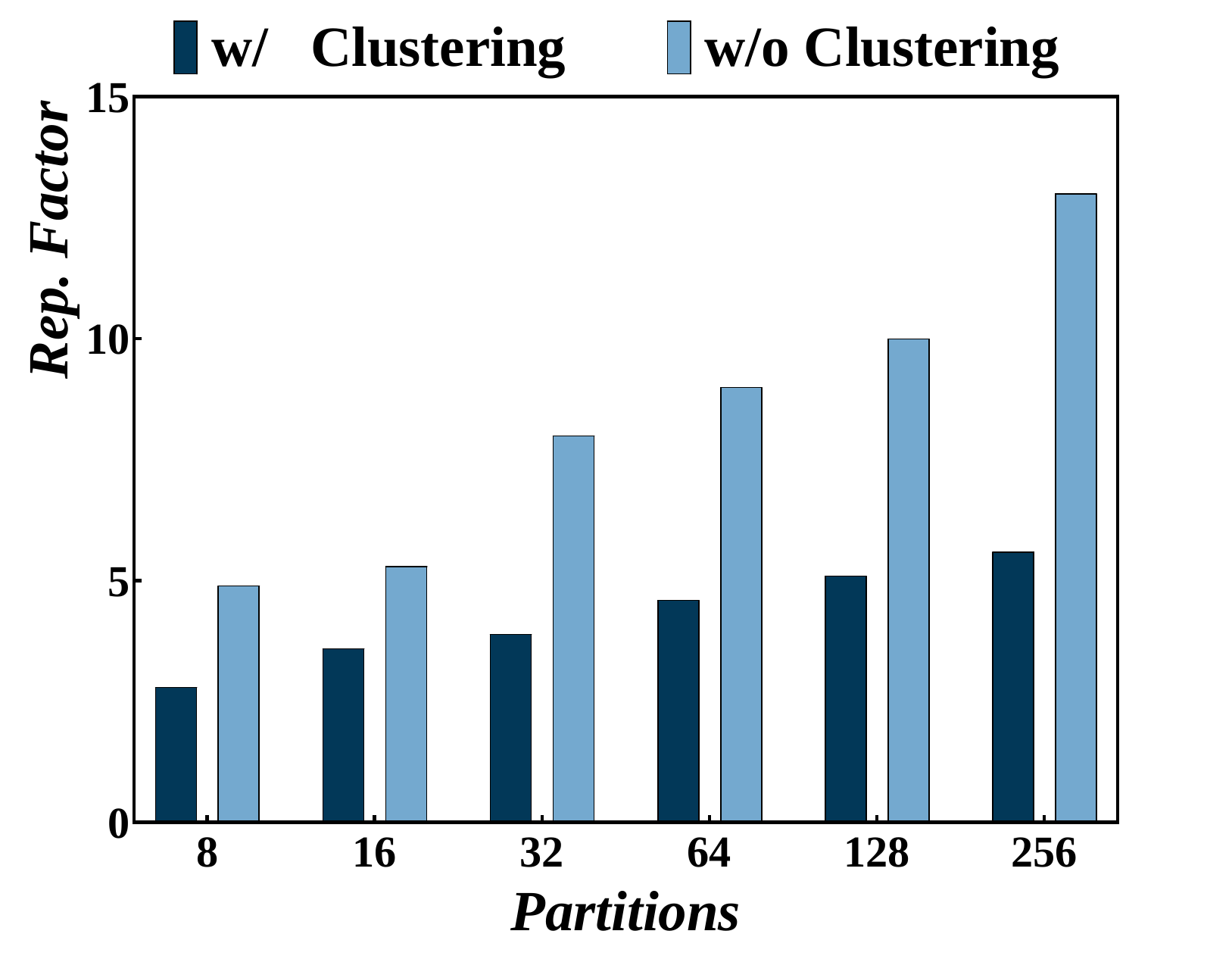}}
		\subfigure[w/ vs. w/o Stackelberg Game]
		{\includegraphics[width= 0.24\columnwidth]{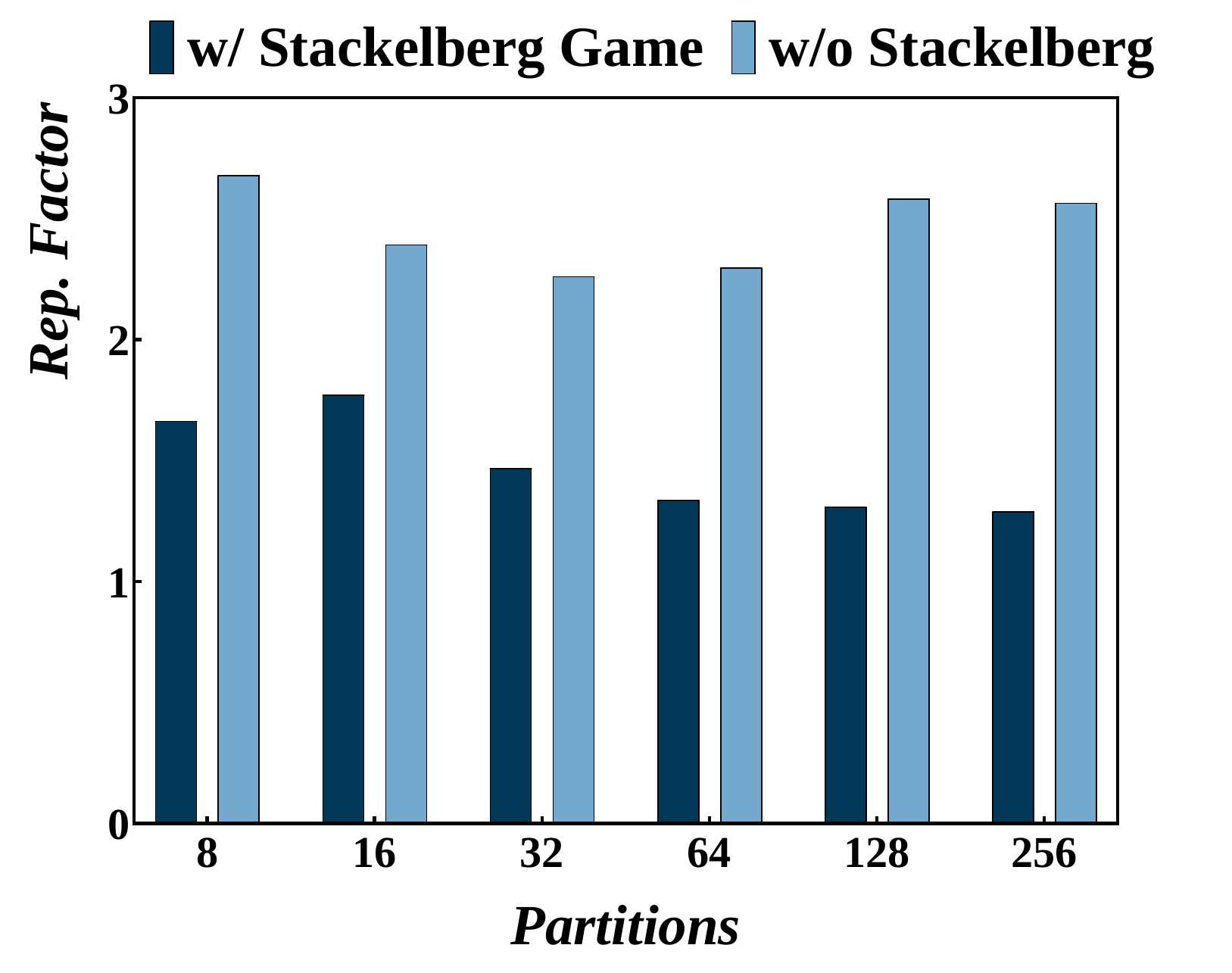}}
		\caption{Component Analysis}
		\label{fig:ablation}
	\end{minipage}%
	\begin{minipage}{0.2\textwidth}
			{\includegraphics[width= 1\columnwidth]{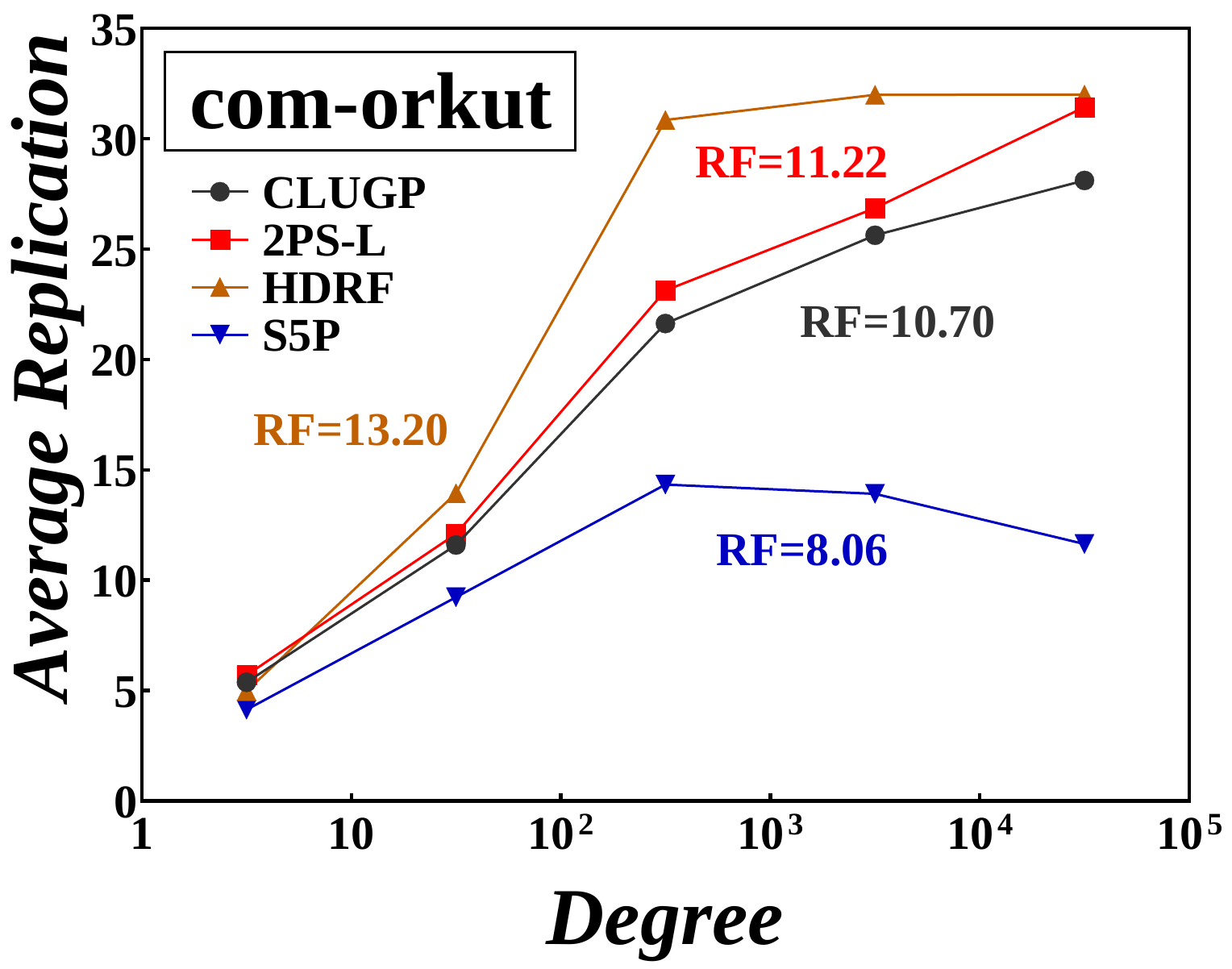}}
			\caption{Degree Analysis ($k$=32)}
			\label{fig:degree}
	\end{minipage}
		\begin{minipage}{1\textwidth}
		\centering
			\subfigure[Sketch (Time)]
			{\includegraphics[width= 0.24\columnwidth]{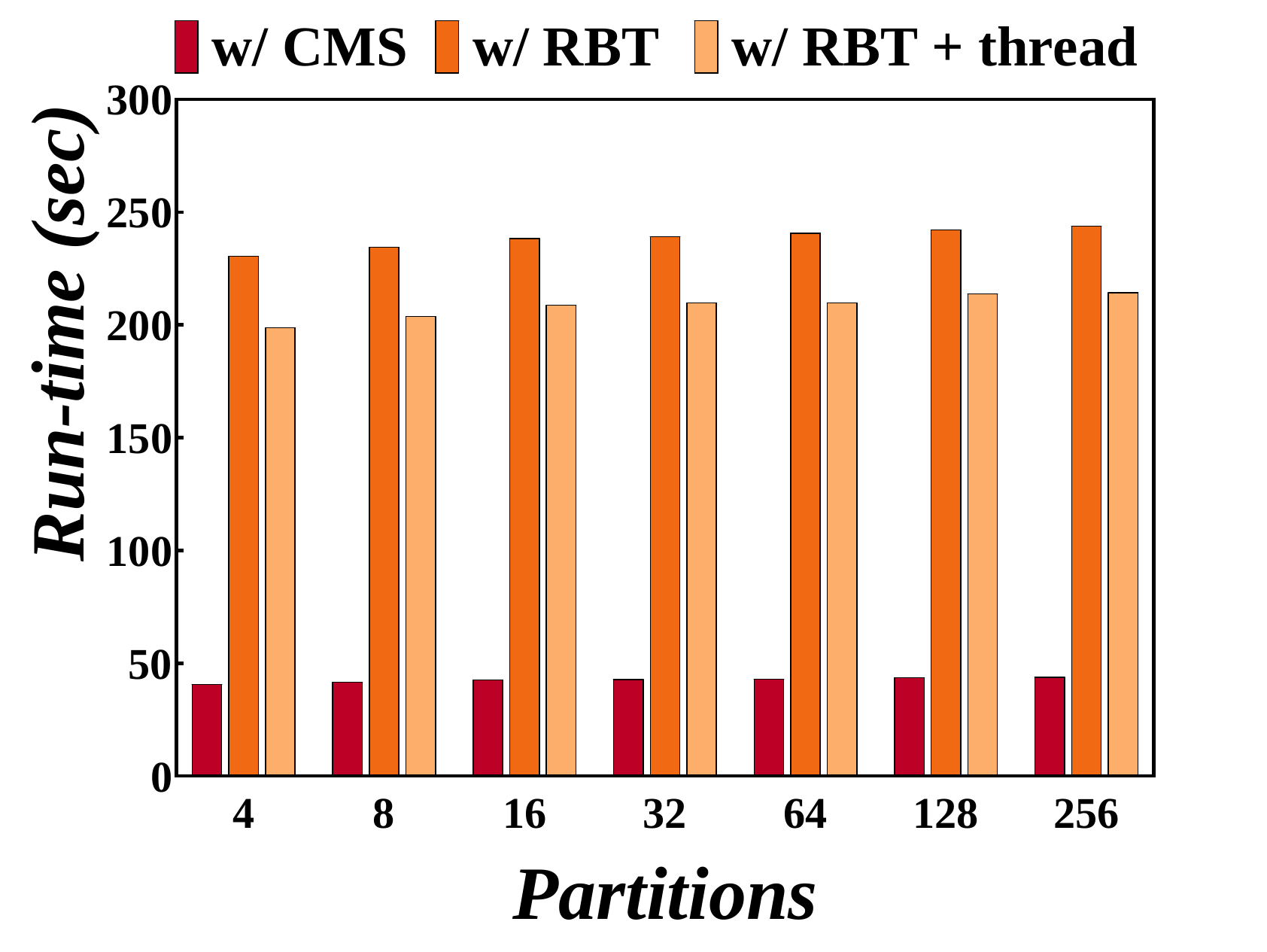}}
			\subfigure[Sketch (Mem.)]
			{\includegraphics[width= 0.24\columnwidth]{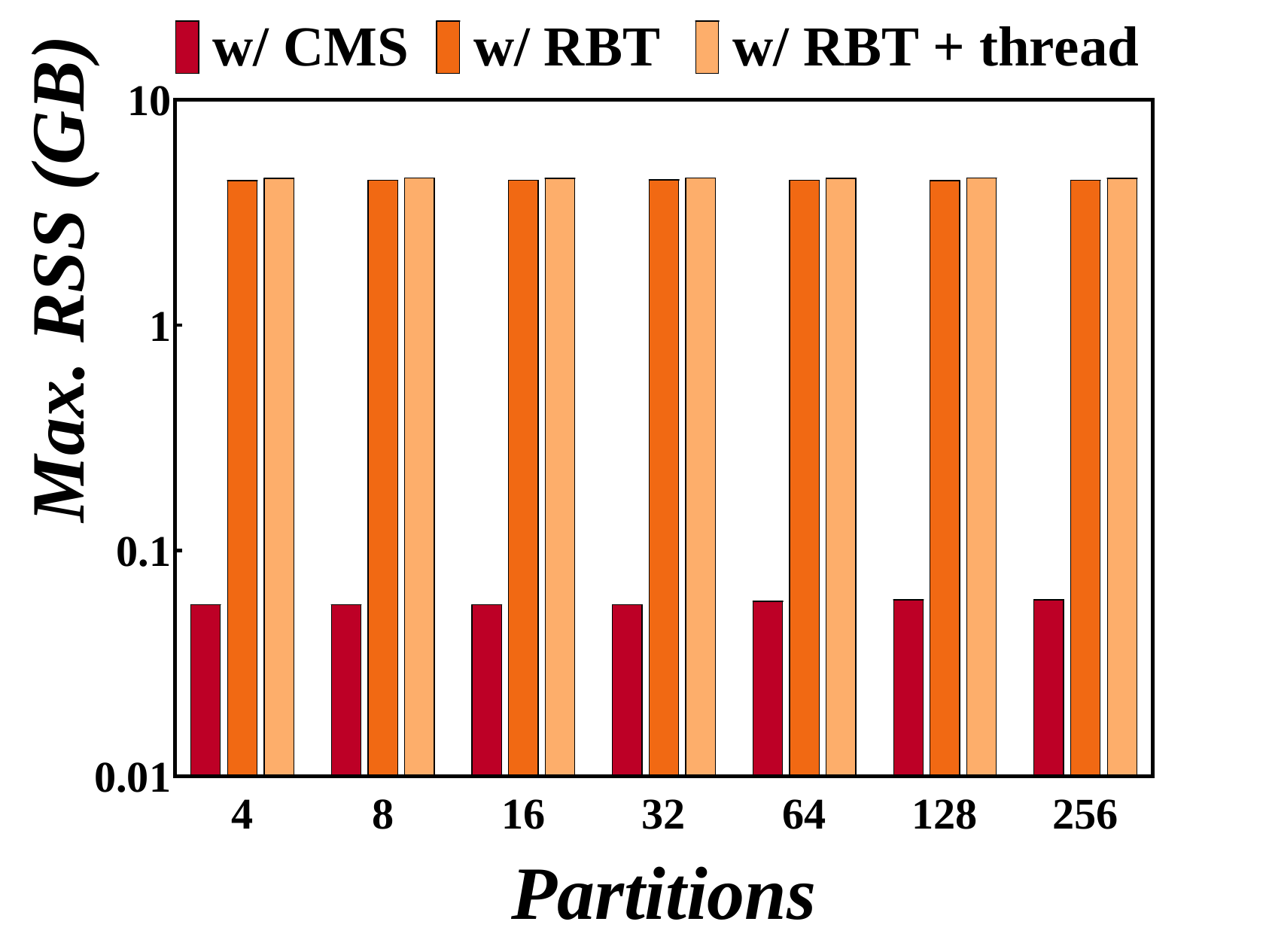}}
			\subfigure[Sketch (Rep. Factor)]
			{\includegraphics[width=0.24\columnwidth]{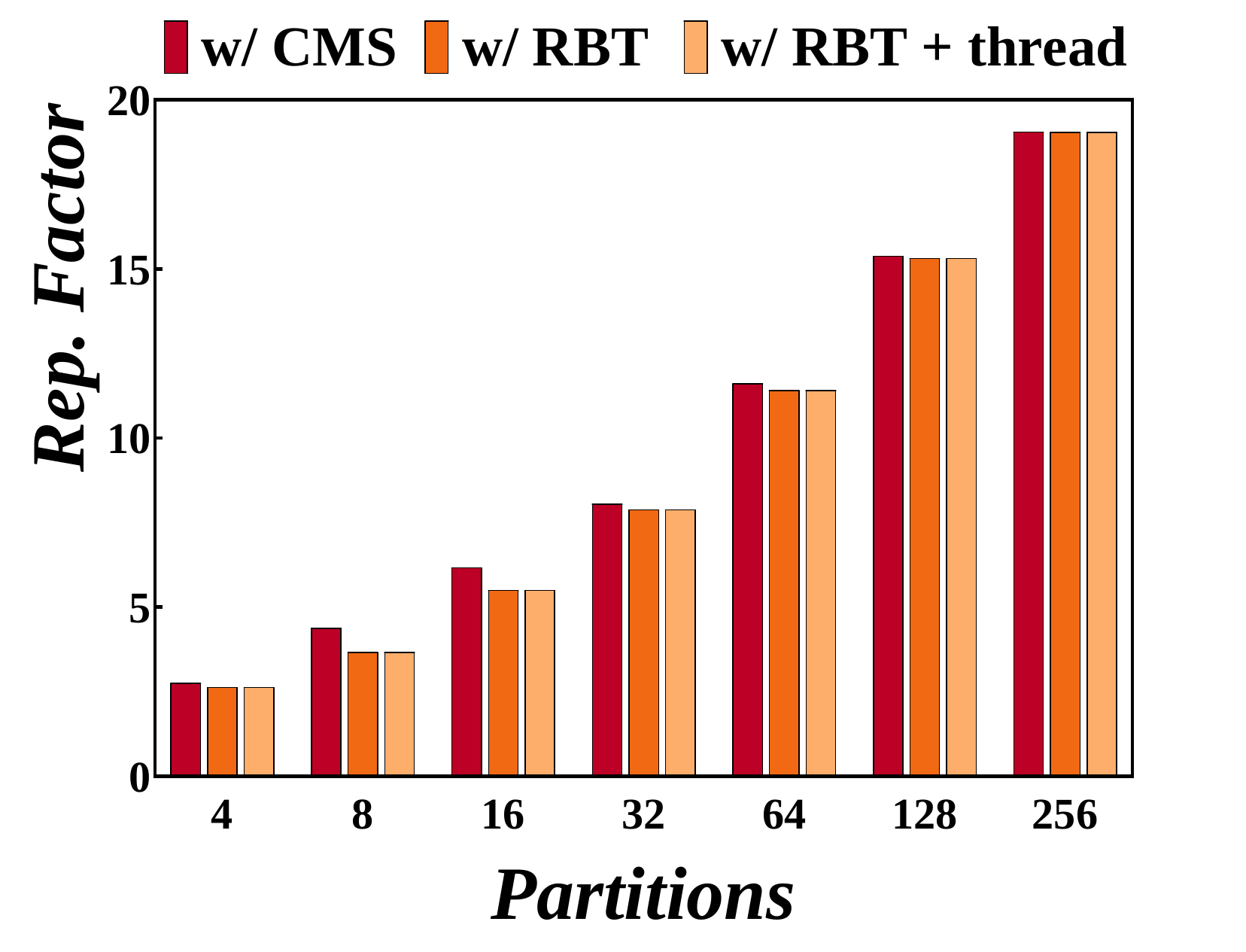}}
			\subfigure[ {Sketch ($\epsilon$ and $\nu$)}]
			{\includegraphics[width=0.26\columnwidth]{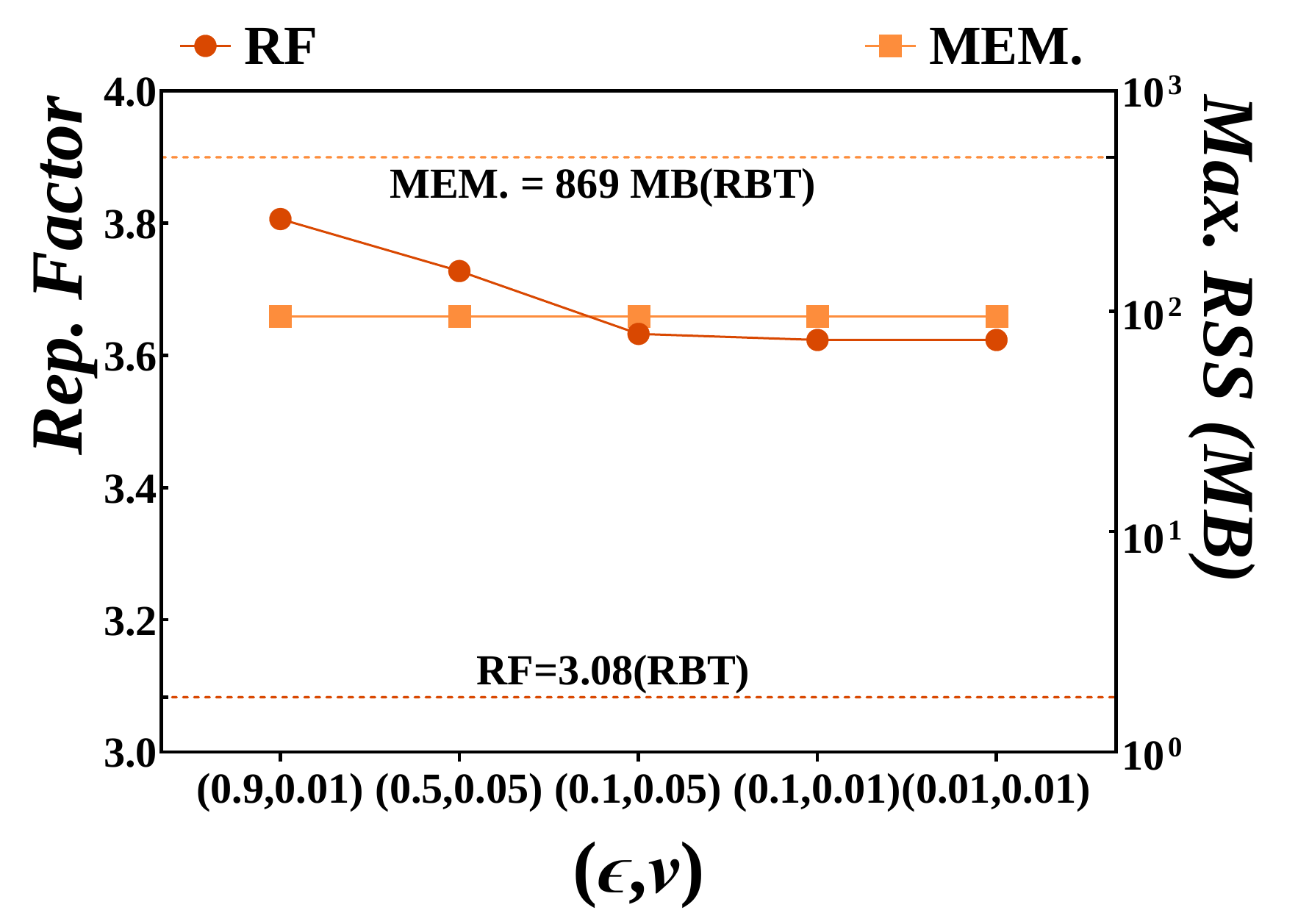}}		
\caption{ {Sketch-based Optimization} }
			\label{fig:expsketch}
	\end{minipage}

		\begin{minipage}{0.29\textwidth}
			{\includegraphics[width= 0.98\columnwidth]{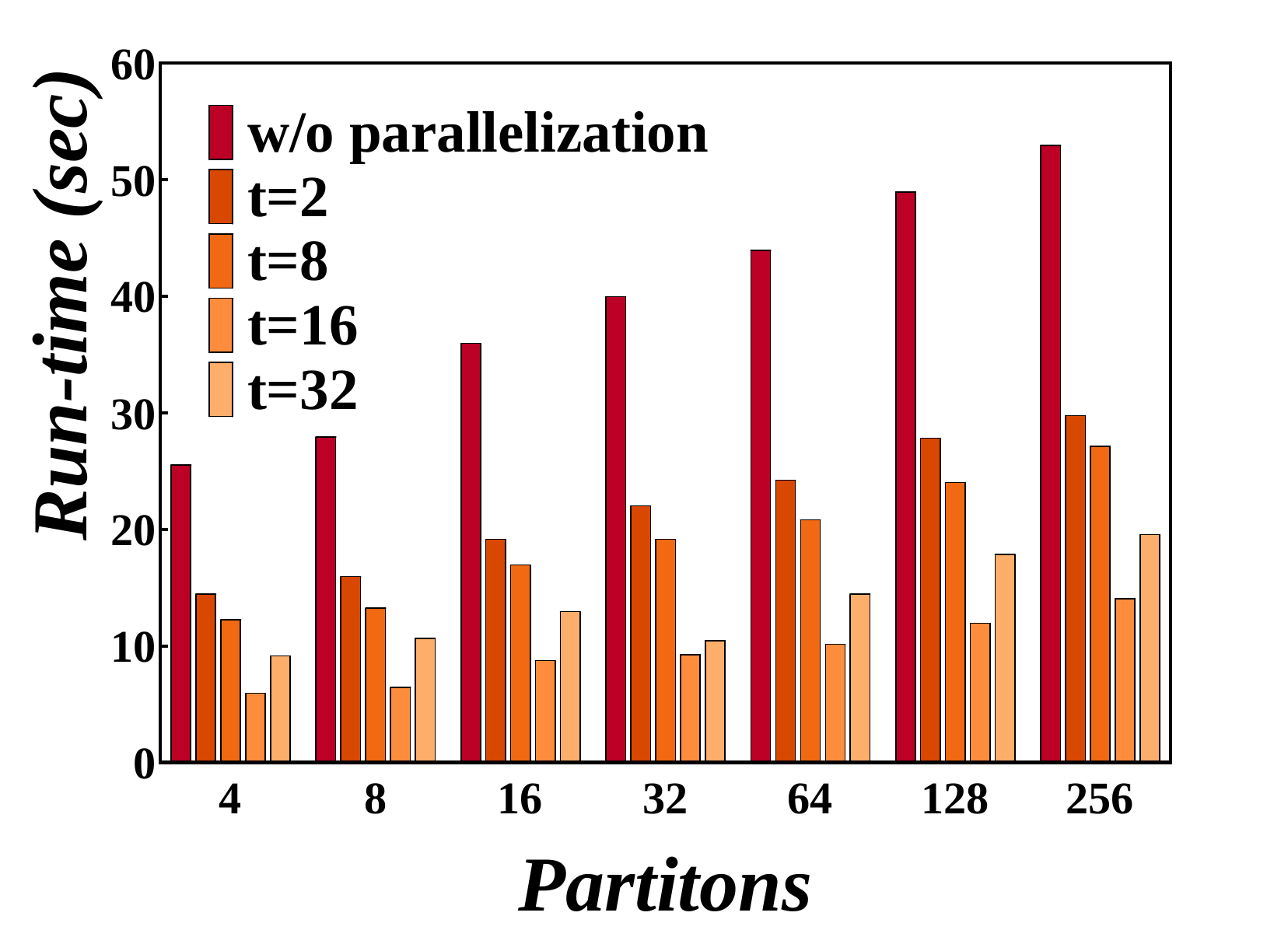}}
			\caption{Parallelization}
			\label{fig:expparalleli}
		\end{minipage}%
	\begin{minipage}{0.71\textwidth}
		\subfigure[Run-time]
		 {\includegraphics[width= 0.32\textwidth]{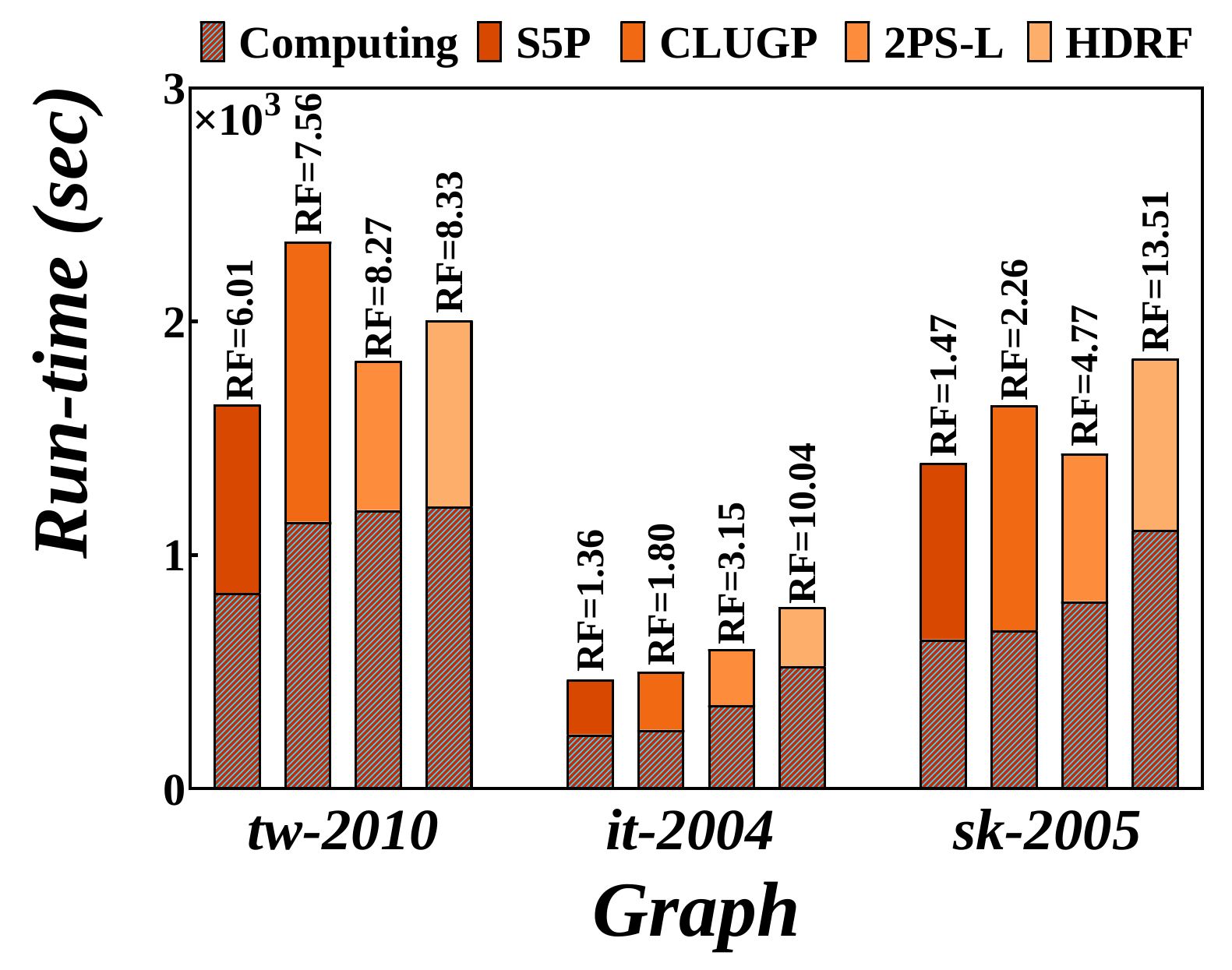}}
		\subfigure[Communication Cost]
		 {\includegraphics[width= 0.33\textwidth]{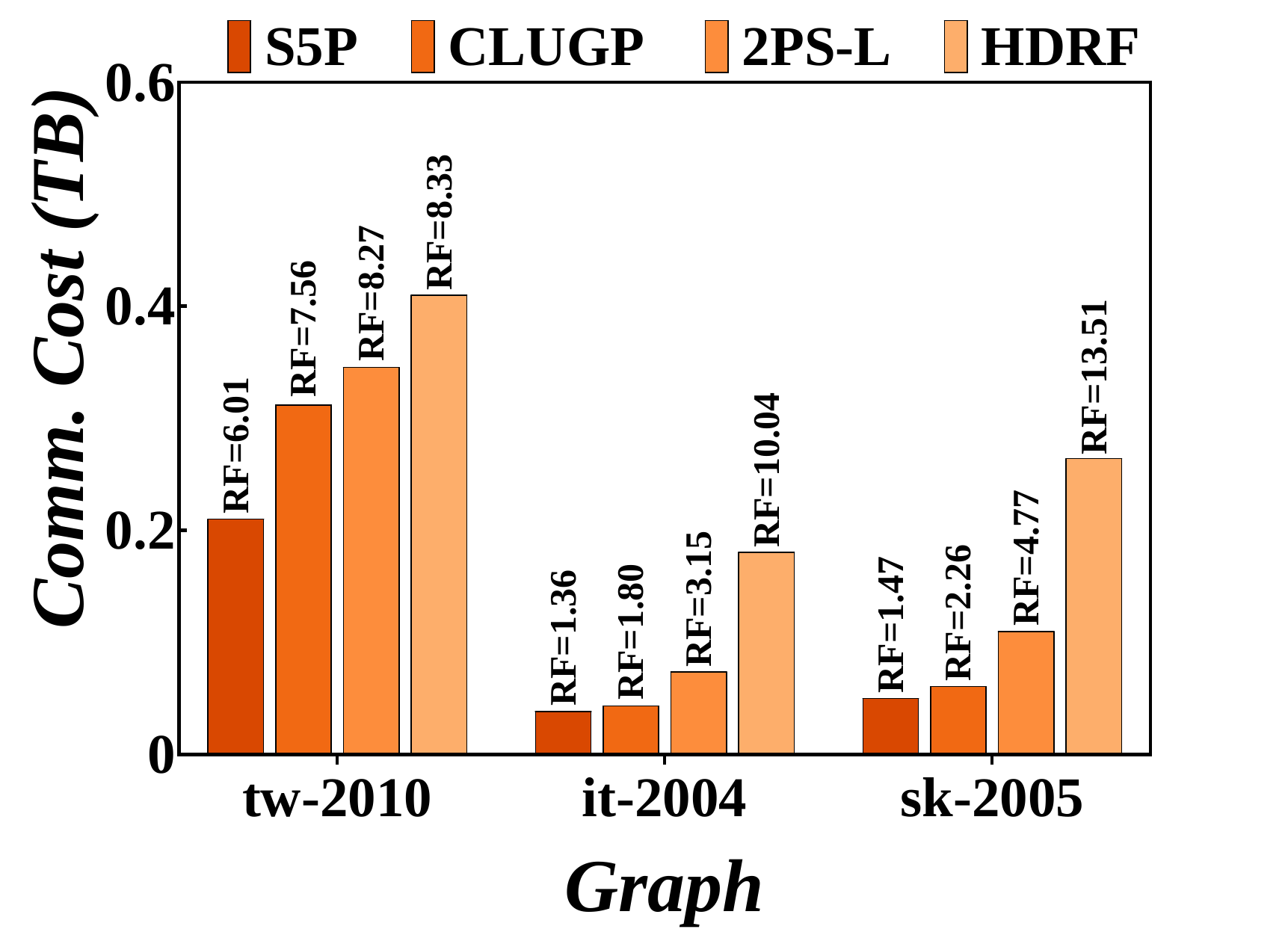}}
		\subfigure[Run-time w/ Latency (IT)]
		{\includegraphics[width= 0.33\textwidth]{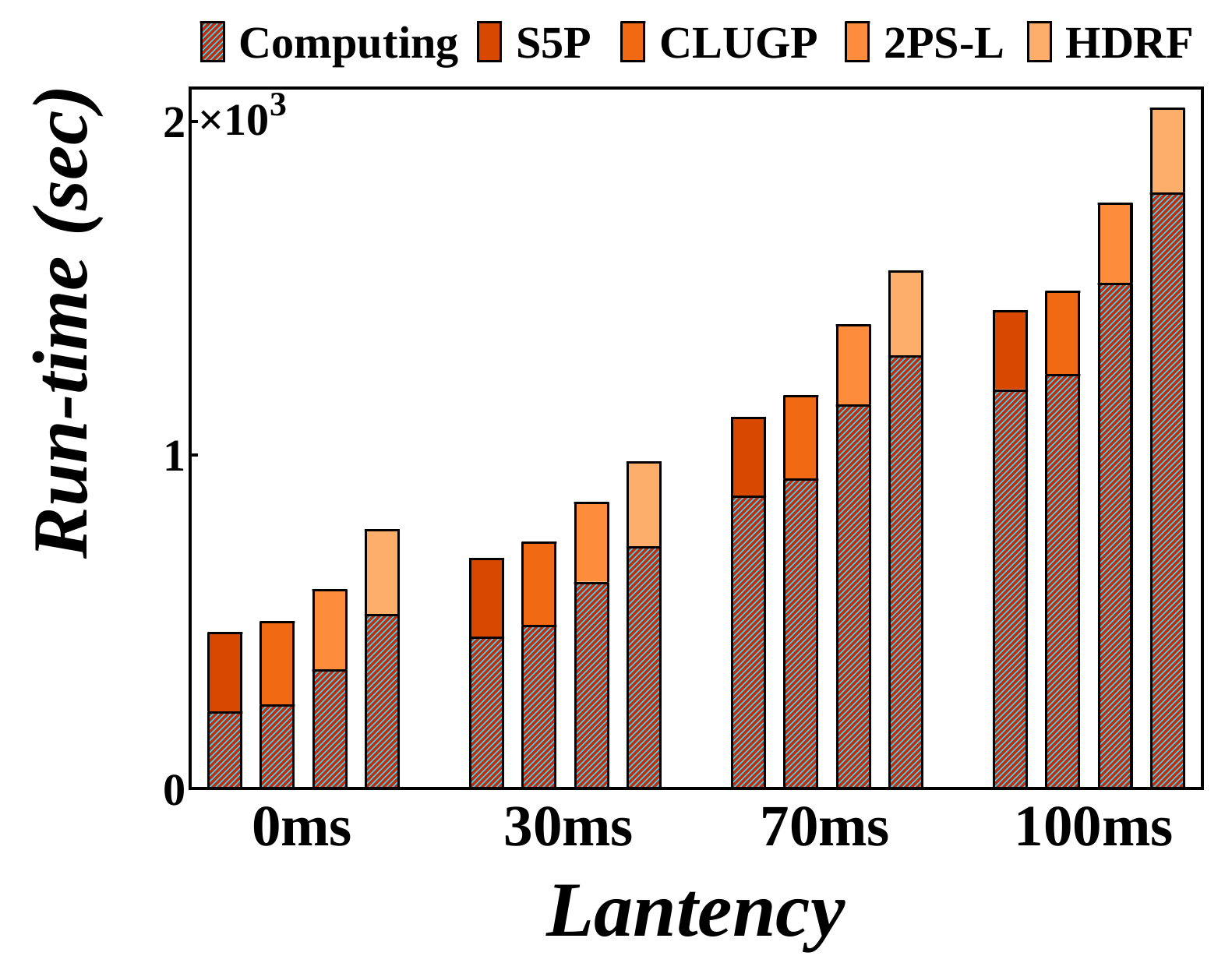}}
		\caption{ {Deployment on Distributed Graph System ($k$=32)} }
		\label{fig:powergraph}
	\end{minipage}

	\begin{minipage}{0.4\textwidth}
		\subfigure[Effect of $\beta$ (OK)] {\includegraphics[width= 0.44\columnwidth]{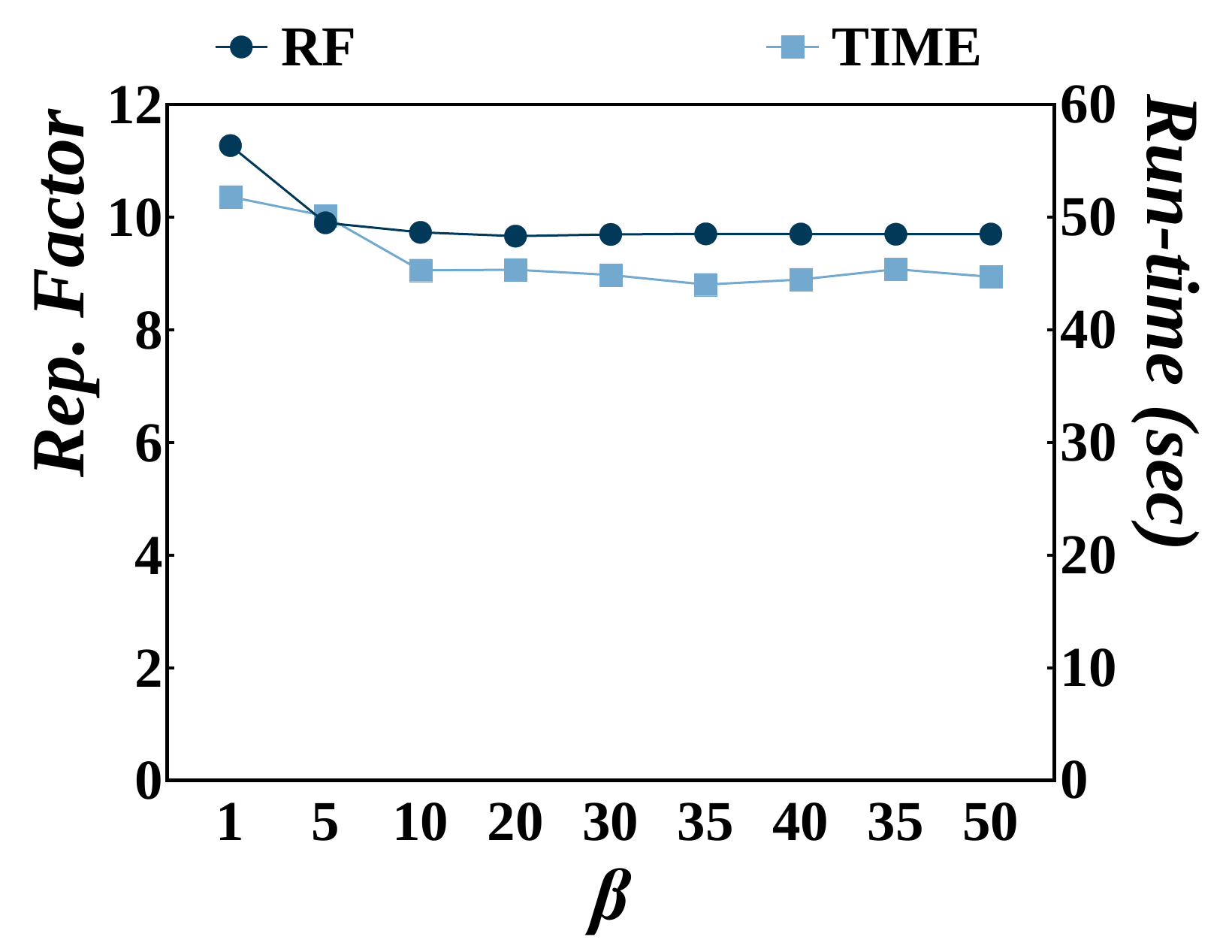}}
		\subfigure[Effect of batchsize (OK)]
		{\includegraphics[width= 0.45\columnwidth]{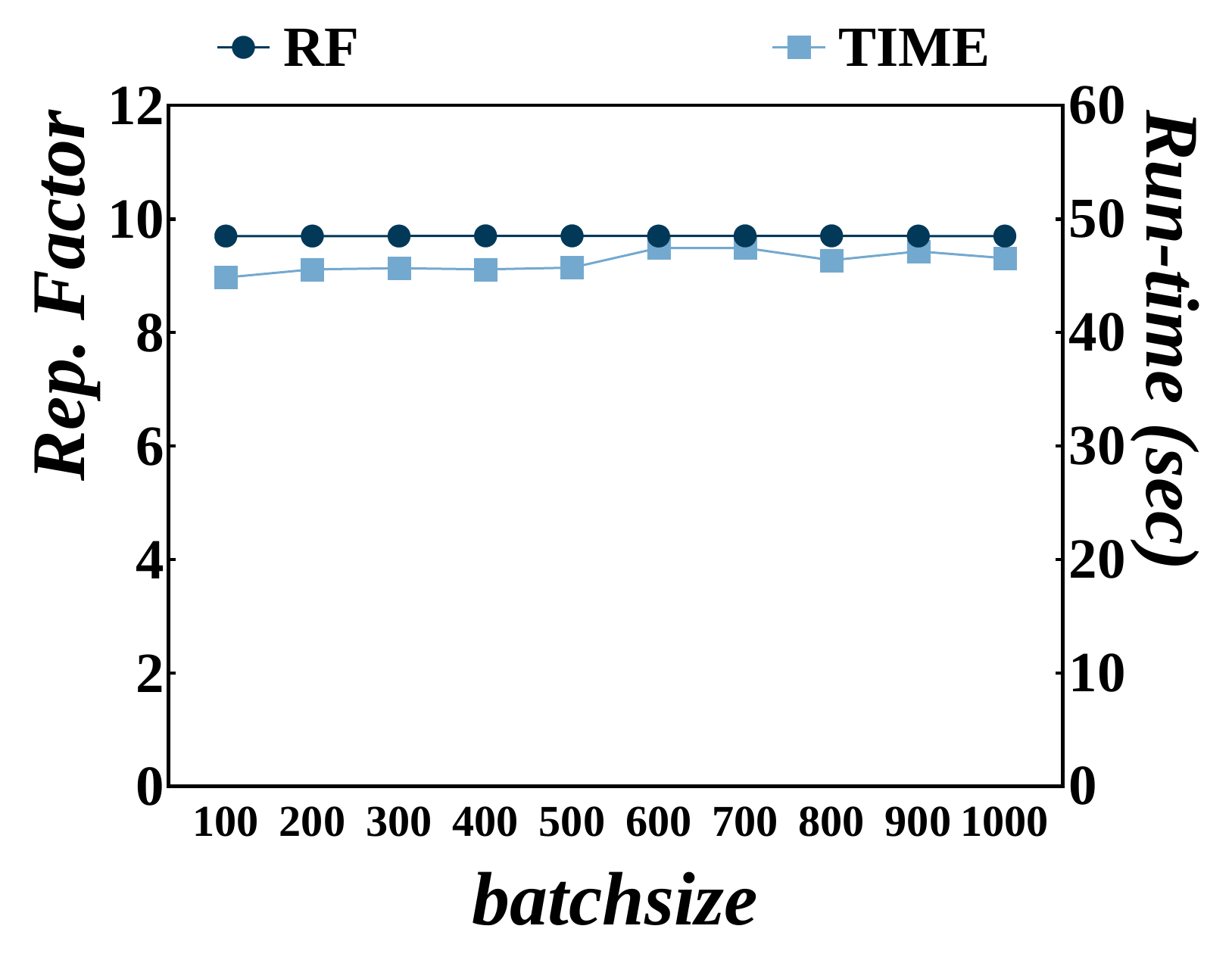}}
		\caption{Parameter Sensitivity}
		\label{fig:para}
	\end{minipage}%
	\begin{minipage}{0.6\textwidth}
		\subfigure[Replication Factor] {\includegraphics[width= 0.31\columnwidth]{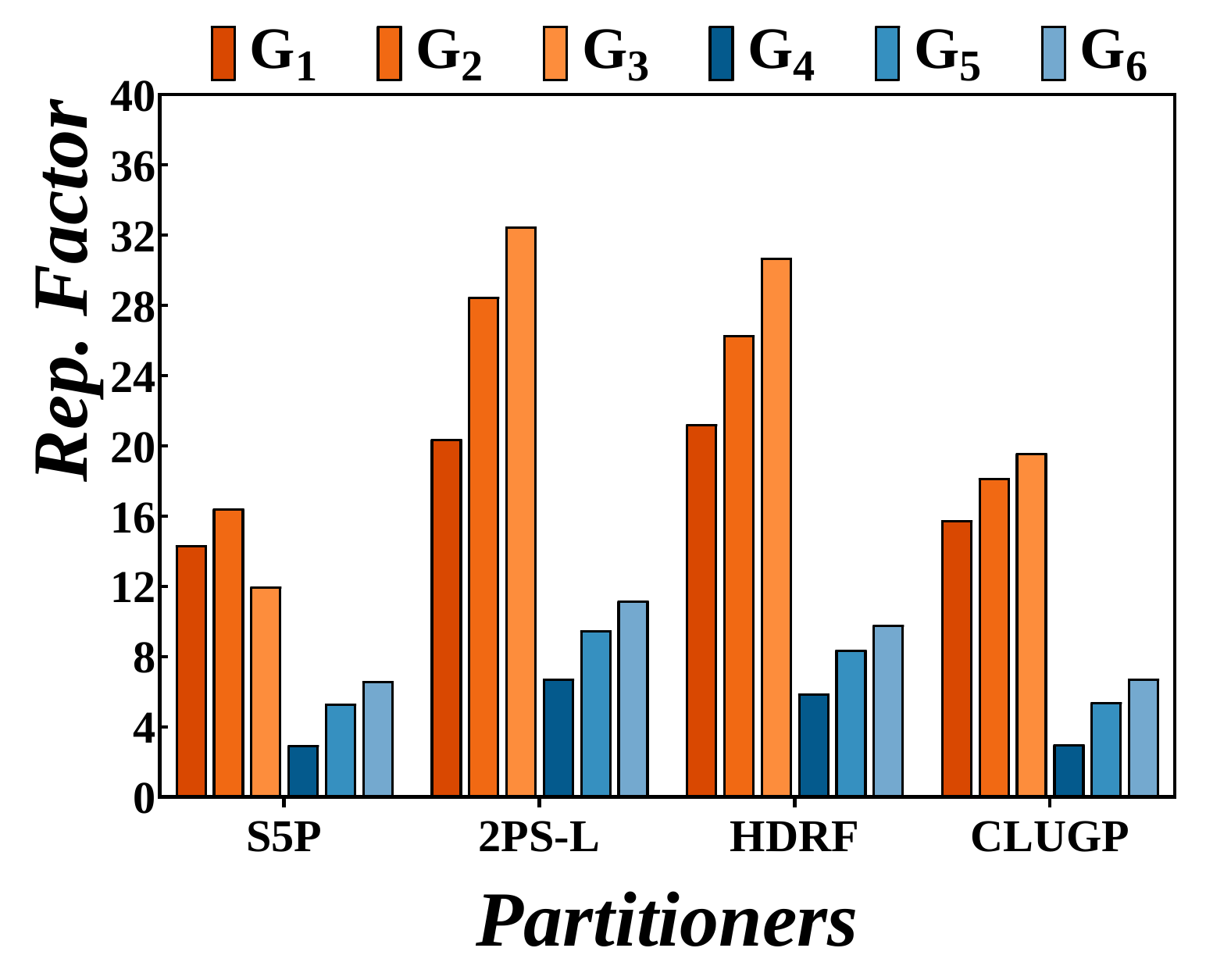}}
		\subfigure[Run-time] {\includegraphics[width= 0.32\columnwidth]{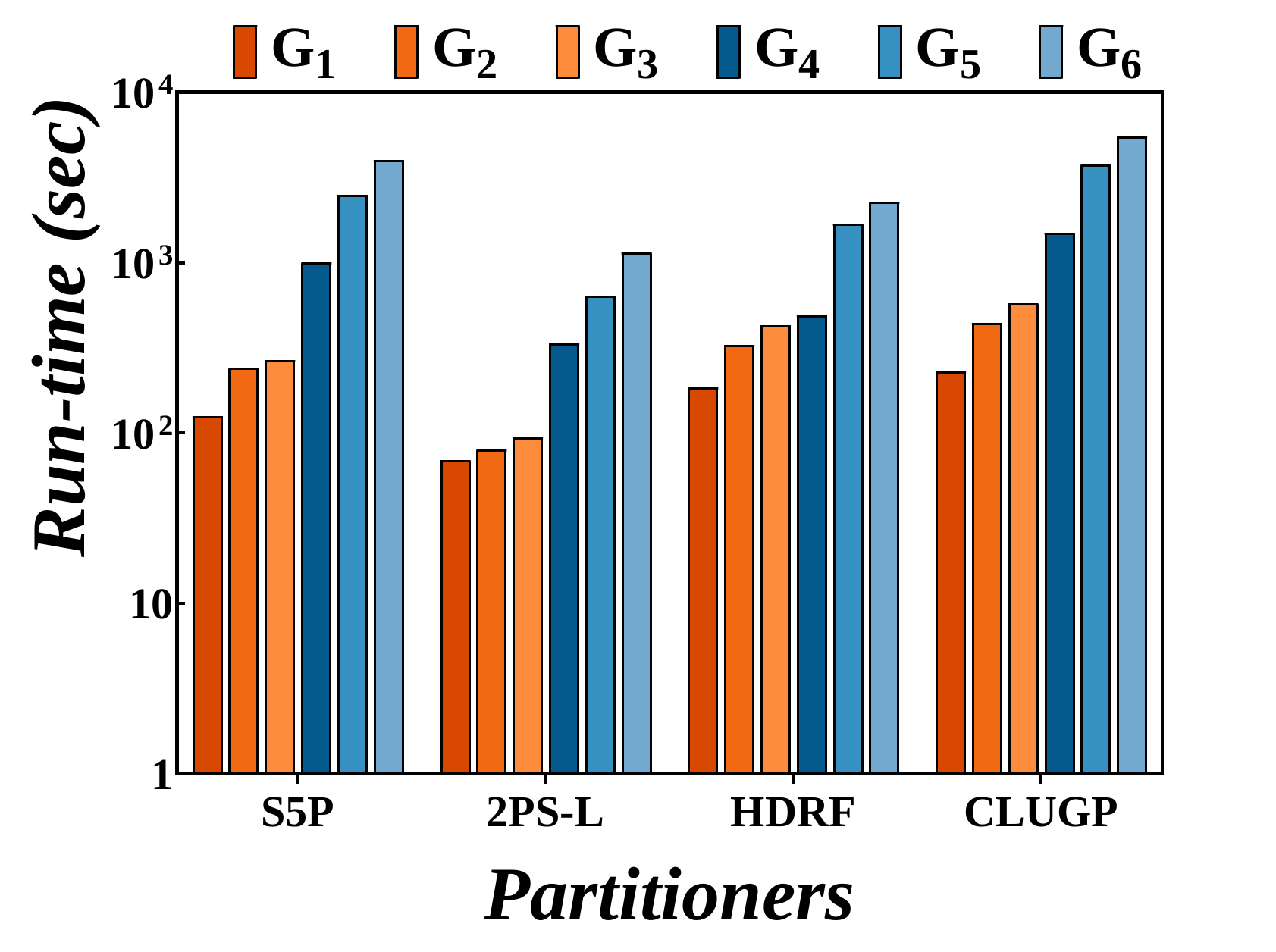}}
		\subfigure[Max. RSS] {\includegraphics[width= 0.32\columnwidth]{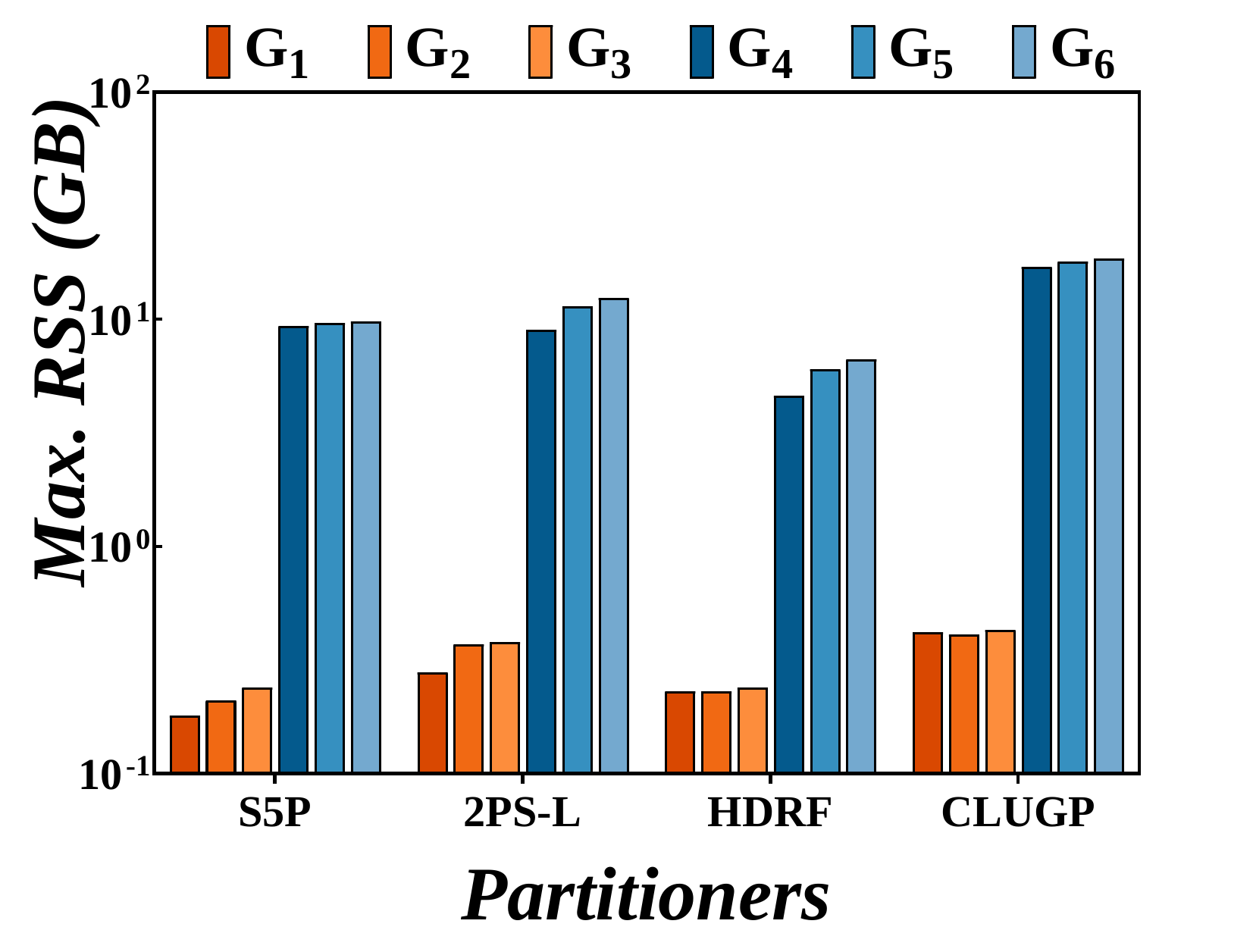}}
		\caption{ {Skewness Analysis ($k$=$64$)}}
		\label{fig:skewnessexp}
	\end{minipage}
\end{figure*}

From now on, we focus exclusively on assessing the RF of the top 4 streaming partitioning methods, including 2PS-L, CLUGP, HDRF, and S5P, which is the category to which our method belongs.
In Table~\ref{tab:rfresult}, it shows that the partitioning quality of S5P dominates those of other baselines, in all test cases.
For social graphs, S5P yields up to a 12\% improvement compared to the second-ranked partitioner, and up to 30\% improvement compared to the fourth-ranked method. For web graphs, S5P can achieve a maximum improvement of 51\% compared to the second-ranked partitioner and up to 91\% improvement compared to the fourth-ranked method.

{What is more, we calculate the approximation ration using small-scale datasets RMAT-$G_{\alpha/\beta/\gamma}$ generated by R-MAT. We obtained the optimality for each graph using the enumeration (i.e., the algorithmic time complexity is $k^{|E|}$). As shown in Table~\ref{ret:approximate}, We have verified S5P reduces the approximation $\alpha$ in comparison to HDRF, CLUGP, and 2PS-L under $k=4$.}

{
We compare game-based methods in Table~\ref{tab:gameresult}.
It shows that S5P has significant advantages in terms of RF, time, and memory.}
 {
Compared with others, S5P introduces a differentiation between head and tail clusters considering the skewness, where actions of head clusters influence the moves of tail clusters, aligning with sequential game theories. 
Compared to other game-based methods, S5P achieves up to a 6-fold improvement in RF, and orders of magnitude acceleration and compression.
}

In summary, S5P, as a streaming vertex-cut partitioner with skewness awareness, outperforms existing streaming edge partitioning methods. It can effectively reduce replication factors and run efficiently in low-memory environments. {The novelty of S5P is underscored by its innovative approach, while adapting clusters as players in a Stackelberg game poses unique challenges: (1) The clustering process is intricate, especially in streaming scenarios. Current streaming clustering algorithms lack skewness awareness, as highlighted in Table~\ref{tab:clusteringNEW}. Therefore, the partition quality they ultimately produce is not good (cf. Table~\ref{tab:rfresult}). (2) Transferring skewness information from clustering to game tasks is a non-trivial task. S5P adeptly utilizes skewness information from clustering for both leaders and followers. In contrast, other games often treat different cluster types unequally, limiting their ability to fully leverage this information. Experimental results further substantiate the significant advantages of our game (cf. Table~\ref{tab:gameresult}).} Hence, S5P represents an appealing new option for streaming graph partitioning when partitioning quality, execution time, and memory usage are crucial factors to consider.

\subsection{Component Analysis ($Q_2$)}
\label{sec:ca}
Hereby, we study the effectiveness of the two major components of S5P, skewness-aware clustering and gaming.

\subsubsection*{Skewness-aware Clustering}

{We investigate the effectiveness of our proposed skewness-aware clustering (\emph{S5P-Clustering}) in LJ}. As our approach employs vertex clusters to represent edge clusters, two questions need addressing: 1) what are the advantages of this representation? 2) whether it compromises the partitioning quality?
Then, we investigate the effect of S5P-Clustering by comparing it with \emph{Edge-Clustering}.
In Figure \ref{fig:ablation} (a), we compare the runtime cost of S5P-Clustering and Edge-Clustering. Compared to Edge-Clustering, using S5P-Clustering achieves a speedup of $8\times$.
In Figure \ref{fig:ablation} (b), we compare the memory consumption of S5P-Clustering and Edge-Clustering.
The memory consumption of S5P-Clustering is only 6\% of Edge-Clustering's.

\subsubsection*{Stackelberg Graph Game}

 {The experiment in Figure \ref{fig:ablation} (c) compares the result of Stackelberg gaming on cluster levels (w/ clustering) and ordinary gaming on edge levels (w/o clustering) in LJ}.
The results show that cluster-level Stackelberg gaming adeptly captures the skewness of distributions, resulting in a reduced replication factor. Particularly noteworthy is the increasing advantage observed as the number of partitions grows. In particular, with the presence of 256 partitions, the replication factor of w/ clustering is merely 40\% of that of w/o clustering, which underscores the effectiveness of Stackelberg game.
 {We further test the effectiveness of gaming on cluster levels in SK}.
Specifically, the experiment presented in Figure \ref{fig:ablation} (d) evaluates the efficacy of the one-stage game, where head and tail clusters are not differentiated, against the two-stage Stackelberg game introduced in Section~\ref{sec:method}. The results emphasize the substantial advantage of the two-stage Stackelberg game in terms of replication factor reduction. In the majority of tests, the two-stage Stackelberg game achieves a RF that is approximately 50\% lower than that of the one-stage game.

\subsection{Optimization ($Q_3$)}

\subsubsection*{Parallelization}
We investigated the impact of different amounts of parallelism (thread counts) on the runtime cost during the game process (cf. Figure \ref{fig:expparalleli}).
It shows that the runtime cost significantly reduces as the increase of the number of threads.
On the other hand, the acceleration becomes less significant, if the benefits of parallelism are counteracted by the cost of CPU switching and memory accessing when given a large number of threads.
Based on comprehensive experimental results, we set the thread count $t$ to 16 in our implementation.

\subsubsection*{Sketch}

We consider three strategies for updating and estimating the intersection size between different clusters: red-black trees (RBT), red-black trees with thread parallelism, and CMS. Experimental results revealed that using CMS not only achieves significant optimizations in terms of time and space but also demonstrates comparable performance with other strategies in RF outputs.  {As shown in Figure~\ref{fig:expsketch} (a), using CMS in the OK dataset achieves an acceleration of over $4$ times.
In Figures~\ref{fig:expsketch} (b) and (c), we compare the memory usage and RF of CMS and the vanilla data structure (i.e., RBT).
When $k = 256$, RBT-based implementation requires a total of $4.5$GB of memory to achieve an RF of $19.046$. In contrast, CMS-based implementation only takes $60$MB of memory yet reaches an RF of $19.055$. 
In other words, we trade less than $0.01$\% RF loss for more than a $70$-fold memory cost reduction. Figure~\ref{fig:expsketch} (d) examines the effect of sketch parameter settings, i.e., $\epsilon$ and $\nu$.
It shows that, as $\epsilon$ and $\nu$ decrease, the estimation error of CMS is also decreased, so that the corresponding RF shows a slight downward trend.
On the other hand, the memory usage is quite stable, because the space cost of CMS is dominated by other parts of Algorithm~\ref{alg:kagc} (i.e., $O(|V|)$).}

\subsection{{  Parameter Sensitivity ($Q_4$)} }
\label{sec:ps}
The variation of $\beta$ affects the distribution changes in the head and tail parts of the graph components. { As shown in Figure~\ref{fig:para} (a), we observed that as $\beta$ increases, the RF decreases rapidly at first, then gradually rises, showing an elbow-shaped trend.} The runtime decreases initially and then increases with the increase of $\beta$. This phenomenon occurs because when $\beta$ is too small or too large, either the head part or the tail part dominates the algorithm execution process, weakening the skewness-aware property of the algorithm.  {It can be observed from the graph that S5P is not sensitive to $\beta$. For simplicity, we recommend setting $\beta$ to 1.} Consequently, this leads to a decline in both partition quality and efficiency. Therefore, it is evident that skewness-aware algorithms not only enhance partition quality but also optimize efficiency.

{ With the increase in batch size, the runtime gradually increases and then stabilizes within a relatively small range, which is shown in Figure~\ref{fig:para} (b).} It is because even a larger batch size results in more time spent on each game iteration, the total number of game iterations decreases due to the reduction in the number of batches. When the batch size increases, gradually RF declines. This occurs because as the batch size increases, the algorithm incorporates more game information, leading to an overall downward trend.

\subsection{ {Deployment on Real Systems ($Q_5$)}  }
\label{sec:drs}
We evaluate the performance of partitioning algorithms on real distributed graph systems, specifically PowerGraph. Figure~\ref{fig:powergraph}~(a) and (b) present the graph processing time and communication cost for the PageRank algorithm. Across all tests, S5P consistently exhibits the shortest computation and communication times. Compared to the second fastest algorithm, S5P achieves a maximum speedup of 1.5 $\times$ and a 49\% reduction in communication cost. Compared to the fourth fastest algorithm, S5P achieves a maximum speedup of 2.3 $\times$ and an 81\% reduction in communication cost. This outstanding performance can be attributed to S5P's high partitioning quality, which includes load balancing and a low replication factor.
  {There exists a trade-off between RF and partitioning runtime. In addition, the RF value has a significant impact on the efficiency of downstream distributed graph tasks.
In Figure~\ref{fig:powergraph}~(a), we can find that S5P excels in the total processing time, consisting of time cost of graph partitioning and downstream distributed tasks.
{ 
Moreover, the cost incurred in graph partitioning serves as an upfront investment that, when reused across multiple distributed graph tasks, yields escalating benefits by consistently reducing communication cost and enhancing overall computational efficiency.}}
In general, hashing-based methods perform the worst, and the performance gap widens as data volumes increase.  {To assess latency, we utilize \emph{PUMBA}~\cite{PUMBA} to manipulate the \emph{Round-Trip Time} (\emph{RTT}) from 10ms to 100ms in IT. Figure~\ref{fig:powergraph}~(c) illustrates the running time of PageRank under various network latency conditions. As the latency increases, the proportion of time spent on graph computing becomes larger. Once again, S5P proves to be the most efficient and stable method in all scenarios.}

\subsection{ {Skewness Analysis ($Q_6$)} }
 {
We evaluate the performance of the top four streaming partitioners on a series of synthetic graphs ($G_1$ to $G_6$), representing different levels of skewness, generated by R-MAT. These graphs can be divided into two groups: $\{G_1, G_2, G_3\}$ and $\{G_4, G_5, G_6\}$, with bigger subscripts indicating more skewed graphs within each group.
In Figure~\ref{fig:skewnessexp} (a), it shows that partitioners other than S5P exhibit a substantial increase in RF as the graphs are more skewed. In contrast, S5P excels in coping with skewed graphs, showing the smallest RF increments as the graphs are more skewed. Remarkably, from $G_2$ to $G_3$, where the graph skewness changes from (0.87, 0.17, 0.48, 626M) to (0.84, 0.19, 0.52, 1B), the resulting RF of S5P even decreases from $16.460$ to $12.011$. Figures~\ref{fig:skewnessexp} (b) and (c) show the performance of S5P on runtime and memory overhead under different skewness settings, which is consistent with the results on real graphs.
}

\section{conclusion}
\label{sec:conclusion}

In this paper, we propose a new skewness-aware vertex-cut streaming partitioner, S5P, for large graphs. S5P follows a biphasic clustering-refinement framework, wherein we propose a skewness-aware clustering algorithm for retrieving find-grained head and tail clusters, and further propose a Stackelberg gaming approach to allocate the clusters into predefined partitions. Alongside this, we investigate a series of techniques, including sketching and parallelization, for optimizing the performance of the partitioner. Theoretically, we provide a detailed analysis of S5P. Experimentally, we conduct extensive experiments on real-world and synthetic graphs to showcase the effectiveness and efficiency of our approach. In the future, we plan to extend the skewness-aware partitioning paradigm to traditional graph computing systems and graph learning systems.
\begin{acks}
	This work is supported by NSFC (No.61772492, 62072428). Xike Xie is the corresponding author.
\end{acks}

\bibliographystyle{ACM-Reference-Format}
\bibliography{sigmod}

\end{document}